\newcommand{\Statey}{\Statex\hspace*{\ALG@thistlm}}
\newcommand{\multiline}[1]{%
  \begin{tabularx}{\dimexpr\linewidth-\ALG@thistlm}[t]{@{}X@{}}
    #1
  \end{tabularx}
}
\newtheorem{theorem}{Theorem}
\newtheorem{lemma}{Lemma}
\begin{document}

%
\title{UAV Trajectory, User Association and Power Control for Multi-UAV Enabled Energy Harvesting Communications: Offline Design and Online Reinforcement Learning}

\author{\IEEEauthorblockN{Chien-Wei Fu, {\textit{Student Member}}, {\textit{IEEE}}}, Meng-Lin Ku, {\textit{Senior Member}}, {\textit{IEEE}}, Yu-Jia Chen, {\textit{Senior Member}}, {\textit{IEEE}}, and Tony Q. S. Quek, {\textit{Fellow}}, {\textit{IEEE}} 

\thanks{Chien-Wei Fu, Meng-Lin Ku and Yu-Jia Chen are with the Department of Communication Engineering, National Central University, Zhongli 32001, Taiwan (E-mail: fuchienwei8666@gmail.com, mlku@ce.ncu.edu.tw, yjchen@ce.ncu.edu.tw). Tony Q. S. Quek is with the Information Systems Technology and Design (ISTD) Pillar, Singapore University of Technology and Design, Singapore 487372 (e-mail: tonyquek@sutd.edu.sg).

\textit{Corresponding author: Meng-Lin Ku}}}

\maketitle

\begin{abstract}
In this paper, we consider multiple solar-powered wireless nodes which utilize the harvested solar energy to transmit collected data to multiple unmanned aerial vehicles (UAVs) in the uplink. In this context, we jointly design UAV flight trajectories, UAV-node communication associations, and uplink power control to effectively utilize the harvested energy and manage co-channel interference within a finite time horizon. To ensure the fairness of wireless nodes, the design goal is to maximize the worst user rate. The joint design problem is highly non-convex and requires causal (future) knowledge of the instantaneous energy state information (ESI) and channel state information (CSI), which are difficult to predict in reality. To overcome these challenges, we propose an offline method based on convex optimization that only utilizes the average ESI and CSI. The problem is solved by three convex subproblems with successive convex approximation (SCA) and alternative optimization. We further design an online convex-assisted reinforcement learning (CARL) method to improve the system performance based on real-time environmental information. An idea of multi-UAV regulated flight corridors, based on the optimal offline UAV trajectories, is proposed to avoid unnecessary flight exploration by UAVs and enables us to improve the learning efficiency and system performance, as compared with the conventional reinforcement learning (RL) method. Computer simulations are used to verify the effectiveness of the proposed methods. The proposed CARL method provides $25\%$ and $12\%$ improvement on the worst user rate over the offline and conventional RL methods.
\end{abstract}

\begin{IEEEkeywords}
Unmanned aerial vehicle (UAV) communication, energy harvesting (EH), UAV trajectory, communication association, power control, convex optimization, reinforcement learning (RL)
\end{IEEEkeywords}

\section{Introduction}
In the era of Internet of Things (IoT), many wireless communication nodes are deployed over wide areas for applications toward sustainable development, e.g., environmental monitoring. Traditional ways to powering up electrical devices by connecting the power grids or non-rechargeable batteries with frequent replacement incur high deployment and maintenance cost. In recent years, energy harvesting (EH) technology has been recognized as an effective means to meet the energy requirement of electrical devices and prolong the lifetime of wireless networks. Due to its ubiquitous abundance, solar power is one of the most preferable EH sources and is capable of providing nearly-permanent power supply. However, renewable energy sources are often affected by the environment resulting in stochastic EH, and therefore the power control design of communication nodes becomes an important issue for self-sustaining wireless services {\cite{J. Lu13}}.

More recently, unmanned aerial vehicle (UAV) communications have received tremendous attention in various applications such as data collection {\cite{C. You20}}, wireless power transfer {\cite{R. Chen20}\nocite{M. A. Ali22}\nocite{Y. Che21}\nocite{Y. Lai18}-\cite{S. A. Hoseini20}}, relaying {\cite{Y. Hu20}}, and mobile edge computing {\cite{X. Xu21}}, due to the potential to improve the transmission quality with the significant advantages of high mobility, flexible deployment, and low cost {\cite{Y. Zeng19}}. In contrast to the wireless networks with terrestrial infrastructures, the UAV can be flexibly deployed to collect sensing data from widely distributed nodes by dynamically adjusting its location. This can dramatically improve the energy efficiency and facilitate the successful deployment of EH communications. Multi-UAV can form multiple mobile base stations to improve the network performance, but interference management becomes a critical challenge when multiple nodes communicate with multiple UAVs concurrently. Since the flight of UAVs is limited by the battery power, it is crucial to investigate wireless resource allocation in UAV-enabled communications. 

When multi-UAV-enabled communications are designed for serving multi-EH wireless nodes (WNs), several design challenges need to be carefully addressed. First, the communication association between the UAVs and the WNs is crucial for mitigating the interference from the concurrently served multi-EH WNs to the multi-UAVs, either through appropriate scheduling of alternate transmissions or by selecting WNs with the least interference impact to other users. The communication association is especially important for EH WNs with limited energy. Second, the power control of EH WNs is subject to energy causality constraints, where the energy consumed by power control at any time should not exceed the total amount of energy collected from the past to the present, and battery storage constraints, where energy charging at any time is limited to the maximum battery capacity {\cite{K. Singh18}}. Third, the UAV flight has initial and final flight position constraints, maximum flight speed constraints, flight altitude constraints, and safety distances between multiple UAVs for collision avoidance {\cite{G. Zhang19}-\cite{W. Mei19}}. In particular, the flight paths of the multiple UAVs will interact with each other, which has a huge impact on the system performance. In this regard, the joint design of the transmit power control of EH WNs, the communication association between UAVs and EH WNs, and the multi-UAV flight trajectories is a problem worthy of study, as they are closely related to each other. The joint design over the time horizon typically relies on the channel state information (CSI) and {energy state information (ESI)\footnote{The ESI refers to the amount of harvested solar energy.}} from the current time to the future time. Unfortunately, it is very difficult to predict and obtain accurate non-causal CSI and ESI in reality due to the random nature of harvested energy and the susceptibility of wireless channels. While some dynamic programming methods, such as reinforcement learning (RL), can learn an optimal policy based only on the current system states by repeatedly interacting with the environment, such methods suffer from the curse of dimensionality when the cardinality of the system state and action space is high. Therefore, we are motivated to design offline methods which only require the statistical/average or instantaneous CSI and ESI, called \textit{offline methods}, and the developed offline strategies can provide some useful insights to improve RL-based \textit{online methods}.

\subsection{Literature Survey}
Many researchers have contributed greatly to the topics of UAV deployment and resource management for UAV communications. There are some early works discussing the deployment of a single UAV. In \cite{Y. Sun19}, successive convex approximation (SCA) is used to optimize the UAV trajectory, transmit power, and subcarrier allocation for throughput maximization. In \cite{C. You20}, a maximum-minimum data collection rate problem is solved for UAV wireless sensor networks in urban areas, where the three-dimensional UAV trajectory and transmission scheduling of sensors are jointly optimized by convex approximation. In \cite{S. Salehi20}, the UAV trajectory is optimized via Q-learning to improve energy efficiency while ensuring QoS of ground users. For multiple UAVs, the interference problem becomes a key issue dominating wireless communication performance and requires new solutions to exploit its inherent spatial degree of freedom. The literature \cite{S. Zhang19} maximizes the uplink transmission rate by designing multi-UAV cooperative transmission, sub-channel allocation and UAV speed. In \cite{Q. Wang19}, UAV flight trajectories are investigated through a dueling deep Q-network to maximize downlink channel capacity under the line-of-sight (LOS) channel probability and user coverage constraint. The joint design of communication scheduling, power control, and UAV trajectory is proposed to maximize the minimum throughput in \cite{Q. Wu18} and to minimize the weighted sum of aerial and ground costs in \cite{C. Zhan20} via alternative optimization and SCA, while the authors in \cite{C. Shen20} focus on the UAV flight speed, altitude, and power control problems. To simplify the design, most studies, such as \cite{R. Chen20}\cite{Q. Wu18}, only consider LOS channels and ignore small-scale fading. To make the research more realistic for UAV applications, the LOS/NLOS channels are emphasized in \cite{C. You20}\cite{S. Zhang19}, and the effect of small-scale fading is evaluated in \cite{Y. Sun19}\cite{Q. Wang19}\cite{C. Zhan20}\cite{C. Shen20}.

Another line of work in the existing literature is to exploit EH in UAV communications for prolonging the lifetime of wireless devices. In \cite{R. Chen20}, UAVs with radio-frequency (RF) EH capability are utilized to assist mobile edge computing, which can provide users with edge computing services and energy through wireless charging. In \cite{M. A. Ali22}, a UAV with EH is considered for uninterrupted service, where harvesting and charging time, flight trajectory and speed, and UAV's transmit power allocation are jointly optimized by block coordinate descent and SCA methods. A dynamic fly-hover-transmission scheme is studied in \cite{Y. Che21}\cite{Y. Lai18} for UAV-assisted wireless energy and information transfer in cognitive radio networks, where a constrained Markov decision process (MDP) problem is cast to design the UAV transmission and trajectory based on causal system information for throughput maximization, while \cite{Y. Lai18} proposes an efficient suboptimal but low-complexity transmission policy. In \cite{S. A. Hoseini20}, a Q-learning method is used to find the flight policy for UAVs as power stations to maximize mission duration. However, the main drawback of wireless charging is the low charging efficiency due to channel path loss. Also, in some hazardous areas, RF signals may not be readily available or dense enough. As an alternative, EH nodes with the self-sustainability from renewable energy are fascinating in UAV communications and enable UAVs to focus more on data collection and improve system performance \cite{J. Bowman20}\cite{Q. V. Do22}. In \cite{J. Bowman20}, a single UAV with a fixed flight path is dispatched to collect data from multiple sensing nodes equipped with solar cells, which can extend network lifetime without human intervention. In \cite{Q. V. Do22}, the UAV placement and resource allocation are investigated through deep learning in the renewable energy paradigm.

\subsection{Contributions}

Although there has been prior work on the study of UAV trajectory and resource allocation for single-UAV scenarios with solar EH nodes \cite{Q. V. Do22}, multi-UAV scenarios with solar EH nodes have not been investigated in the literature. {In addition, most existing multi-UAV design frameworks only consider the LOS channels and neglect the small-scale fading effect}. Although LOS/NLOS channels are considered in \cite{C. You20}\cite{S. Zhang19} and small-scale fading is considered in \cite{Y. Sun19}\cite{Q. Wang19}\cite{C. Zhan20}\cite{C. Shen20}, the multi-UAV communications with EH nodes remain unexplored. Besides, in the context of power control and UAV trajectory, the MDP approaches {\cite{Y. Che21}\cite{Y. Lai18}} necessitate the state transition probability in advance, while the convex approaches typically achieve better performance {\cite{C. You20}\cite{M. A. Ali22}\cite{Y. Sun19}\cite{S. Zhang19}\cite{Q. Wu18}\nocite{C. Zhan20}-\cite{C. Shen20}} under the assumption that the future CSI and ESI are perfectly known. While the RL can maximize long-term utility without knowing the exact stochastic transition dynamics as in {\cite{S. A. Hoseini20}\cite{S. Salehi20}\cite{Q. Wang19}\cite{Q. V. Do22}}, it notably suffers from the curse of dimensionality for systems with large-scale state and action sets, which can lead to ineffective convergence of Q-values. To fill these gaps, this paper presents a framework for jointly designing multi-UAV flight trajectories, communication association between UAVs and nodes, and uplink transmit power control of multi-UAV communication networks with solar-powered ground nodes. The main contributions of this paper are stated as follows.

$\bullet$ To the best of our knowledge, this is the first work to investigate a multi-UAV communication network with multiple solar-powered ground nodes, with a focus on the resource management strategies and multi-UAV flight paths planning to minimize the worst-case user rate. 

$\bullet$ A real solar power harvesting dataset and a composite channel model, including LOS/NLOS and small-scale fading, are considered in this design. The joint design problem is non-convex. Based on a series of SCAs, we propose an offline method for jointly optimizing multi-UAV flight trajectories, UAV-node communication associations and transmit power control which only requires average CSI and ESI. The offline design provides valuable insight into the flight path planning of multiple UAVs, especially when future instantaneous CSI and ESI are unpredictable.

$\bullet$ The online RL is then proposed by further taking the current instantaneous CSI and ESI into account. With the help of multi-UAV trajectories in the offline design, a new concept of UAV flight corridor is presented in the online RL design, called convex-assisted RL (CARL), to effectively guide the executed actions and state exploration. By arranging flight corridors ahead together with the offline UAV trajectories, the proposed online method can not only avoid exploring the state space randomly but also effectively respond to the current instantaneous CSI and ESI for performance improvement. 

$\bullet$ Computer simulations are performed to demonstrate the effectiveness of the proposed offline and online methods. The proposed offline method performs better than other baseline schemes with fixed UAV paths or fixed uplink transmit power. Furthermore, the proposed CARL method significantly improves the performance, compared to the conventional RL and the proposed offline methods.

{The rest of this paper is organized as follows. In Section \ref{System_model}, we present the system model and the joint design problem of the UAV flight trajectory, communication association, and power control of solar EH nodes, along with the MDP formulation. In Section \ref{offline}, an offline convex optimization method that utilizes only the average CSI and ESI is proposed. An online CARL method based on the proposed offline design is given in Section \ref{Online}. Simulation results are provided in Section \ref{Simulation}, and Section \ref{Conclusion} concludes the paper.}

\section{System Model And Problem Formulation}\label{System_model}
{In this section, we first present the system model for multi-UAV communications with multiple solar-powered ground nodes, in which the ground wireless nodes harvest energy from solar and transmit data to the UAVs in the uplink channels over the same frequency band. Afterwards, the joint design problem of multi-UAV flight trajectory, communication association between UAVs and ground nodes, and power control is formulated for multiple UAVs serving multiple uplink solar-powered nodes.}
\subsection{System Model}
\begin{figure}[h]
\centering
\includegraphics[width=0.48\textwidth]{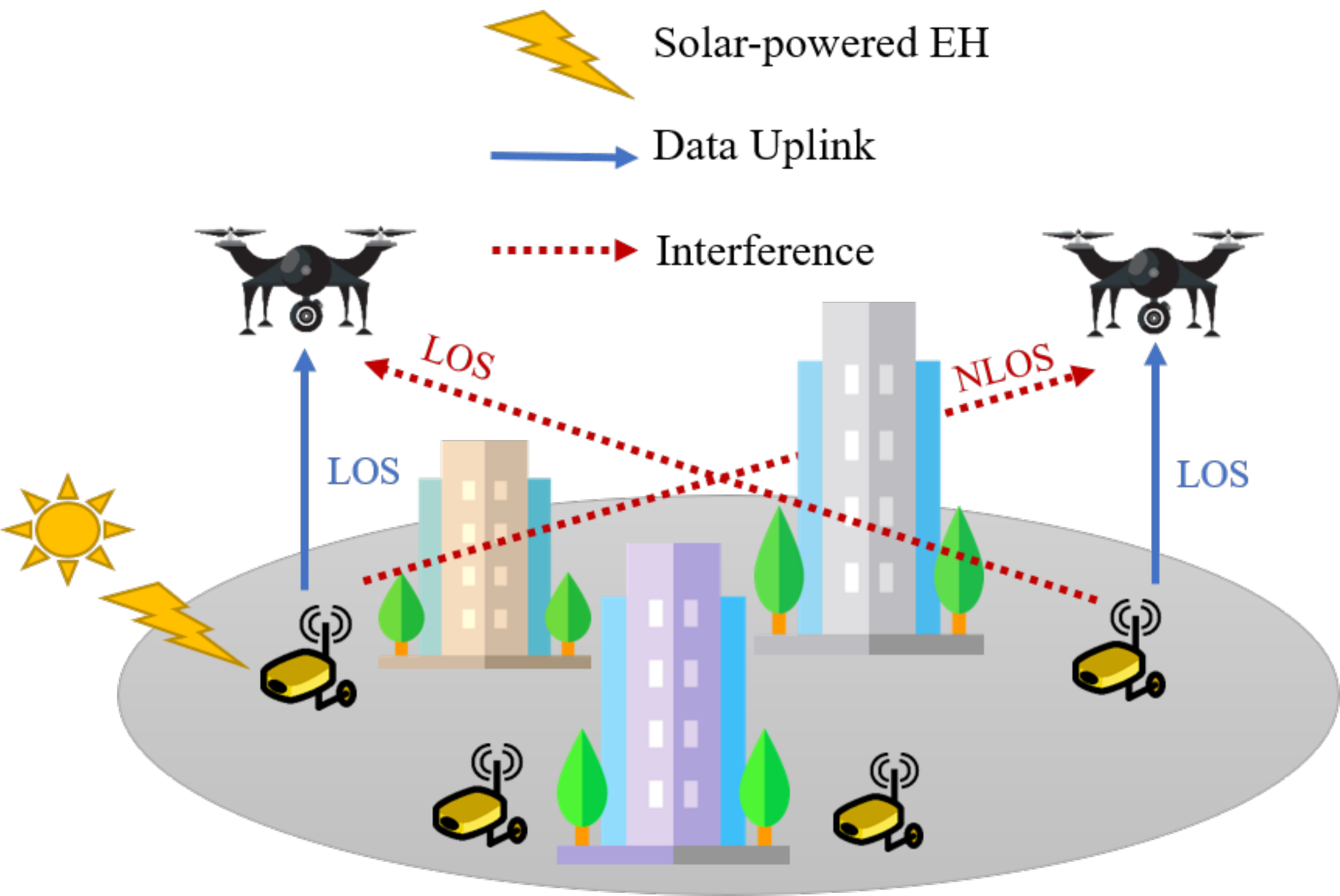}
\caption{Multi-UAV communication networks with solar-powered ground nodes for uplink transmissions ($M$=2 and $K$=4).}
\label{Fig_1}
\end{figure}
Fig. \ref{Fig_1} shows the multi-UAV communication networks with solar-powered ground nodes for uplink transmissions. Consider a service area consisting of a group of $K$ wireless nodes (WNs) which utilize the harvested solar power for data transmission. A group of $M$ UAVs flies over the area to collect the data from the $K$ wireless nodes. Both the UAVs and the WNs are only equipped with a single antenna. A time-slotted model is adopted, and we assume that the entire task period $T_s$ is divided into $N$ time slots, where there are $N+1$ discrete time instants ($n=0,1,\ldots,N$) and the time interval $\delta_D$ is defined as
\begin{align}\label{eq1}
&\delta_D=\frac{T_s}{N} \,.
\end{align}We assume that the positions of the WNs are unchanged, and the horizontal two-dimensional coordinate of the $k$th WN is given by
\begin{align}
&\textbf g_k=[{\bar{x}}_k,{\bar{y}}_k]^T \in \mathbb{R}^2 , \forall k\in\{1,\ldots,K\}\label{eq2} \,,
\end{align}where $\left[\cdot\right]^T$ takes the vector transpose. It is assumed that the UAVs fly at a fixed altitude $H$, and the horizontal two-dimensional coordinate of the $m$th UAV at the time instant $n$ is given as
\begin{align}\label{eq3}
&\textbf q_m [n]=[x_m [n],y_m [n]]^T \in \mathbb{R}^{2}, \nonumber \\
& \qquad\qquad \forall m\in\{1,\ldots,M\}, \forall n \in\{ 0,\ldots,N\} \,.
\end{align}The trajectory coordinate of the UAVs is subject to the following constraints:
\begin{align}
\textbf q_m [0]=\textbf q_m [N]=\textbf q_m^{ini} ,  \forall m\in\{1,\ldots,M\} \,;\label{eq4}
\end{align}
\begin{align}
&\qquad\qquad\quad\frac{1}{\delta_D}\left\|\textbf q_m[n+1]-\textbf q_m[n]\right\|_2 \leq V_{max} , \nonumber\\
&\qquad\qquad \forall m\in\{1,\ldots,M\}, \forall n \in\{ 0,\ldots,N-1\} \,; \label{eq5}
\end{align}
\begin{align}
&\left\|\textbf q_m[n]-\textbf q_j[n]\right\|_2\geq D_{min} ,\nonumber\\
&\;\forall n \in\{ 1,\ldots,N-1\},\forall m,j\in\{1,\ldots,M\},m\neq j \,, \label{eq6}
\end{align}
where $\textbf q_m^{ini}$ is the initial position of the UAVs, $V_{\max}$ is the maximum flight speed of the UAVs, and $D_{\min}$ is the minimum safe distance for any two UAVs. We assume that the UAVs are equipped with finite batteries, and the constraint (\ref{eq4}) stipulates that each UAV is required to fly back to the initial point $\textbf q_m^{ini}$ at the end of the task period. The constraint (\ref{eq5}) indicates that the flight speed of each UAV is limited to the maximum flight speed $V_{max}$. Moreover, the constraint (\ref{eq6}) represents that the minimum safe distance $D_{min}$ must be maintained between any two UAVs for avoiding collision.

For the channel model, both the LOS and NLOS channels are taken into consideration in the large-scale fading, and the path loss (in decibel) between the $m$th UAV and the $k$th WN at the time instant $n$ can be expressed as \cite{Al-Hourani14}:
\begin{align}
&\qquad L_{\epsilon,m,k}\left[n\right]=20{\log}_{10}\left(\frac{4\pi f_{c}d_{m,k}\left[n\right]}{c}\right)+\eta_\epsilon+S , \nonumber\\
&\qquad \forall m\in\{1,\ldots,M\}, \forall k\in\{1,\ldots,K\}, n\in\left\{0,\ldots,N\right\}, \nonumber\\
&\qquad \forall\epsilon\in \{LOS,NLOS\} \,, \label{eq7}
\end{align}where $f_{c}$ is the carrier frequency (Hz), $c$ is the speed of light (m/s), $\eta_\epsilon$ is an LOS and NLOS environment-related parameter, and $S$ represents the shadowing effect. Furthermore, the term $d_{m,k}[n]$ represents the distance between the $m$th UAV and the $k$th WN at the time instant $n$, given as
\begin{align}
&d_{m,k}\left[n\right]= \sqrt{\left\|\textbf q_m[n]-\textbf g_k\right\|_2^2+H^2} \,.\label{eq8}
\end{align}
According to \cite{Al-Hourani14}, the probability of occurring the LOS channel between the $m$th UAV and the $k$th WN at the time instant $n$ can be modelled as
\begin{align}
&\rho_{LOS,m,k}\left[n\right]\nonumber\\
&=\frac{1}{1+A \exp(-B(\frac{180}{\pi}\tan^{-1}(\frac{H}{\left\|\textbf q_m[n]-\textbf g_k\right\|_2})-A))} , \nonumber \\
& \forall m\in\{1,\ldots,M\}, \forall k\in\{1,\ldots,K\}, \forall n\in\left\{0,\ldots,N\right\} ,\label{eq9}
\end{align}
where the coefficients $A$ and $B$ depend on the operating environments \cite{J. Holis08}, and the probability of occurring the NLOS channel can be computed as $\rho_{NLOS,m,k}\left[n\right]=1-\rho_{LOS,m,k}\left[n\right]$. By combining the effect of large-scale and small-scale channel fading, we can obtain the channel gain between the $m$th UAV and the $k$th WN at the time instant $n$ as follows:
\begin{align}
H_{m,k}\left[n\right]={\check{L}}_{\epsilon,m,k}{\left|\chi_{m,k}[n]\right|}^2 , \label{eq10}
\end{align}
where ${\check{L}}_{\epsilon,m,k}\left[n\right]$ represents the linear scale of $L_{\epsilon,m,k}\left[n\right]$, $\chi_{m,k}[n]\sim CN(0,1) $ is a zero-mean complex Gaussian random variable with unit variance to describe the Rayleigh fading.

Next, we define the communication association variable between the $m$th UAV and the $k$th WN at the time instant $n$ as $a_{m,k}[n]$. The communication association variables are subject to the following three constraints:
\begin{align}
&a_{m,k}\left[n\right]\in\left\{0,1\right\}, \forall m\in\left\{1,\ldots,M\right\}, \forall k\in\left\{1,\ldots,K\right\}, \nonumber\\
& \qquad\qquad\qquad\qquad \forall n\in\left\{0,\ldots,N-1\right\} ; \label{eq11} \\
&\sum_{k=1}^{K}{a_{m,k}\left[n\right]\le1, \forall m\in\left\{1,\ldots,M\right\} ,\forall n\in\left\{0,\ldots,N-1\right\}}; \label{eq12}\\
&\sum_{m=1}^{M}{a_{m,k}\left[n\right]\le1, \forall k\in\left\{1,\ldots,K\right\},\forall n\in\left\{0,\ldots,N-1\right\}} , \label{eq13}
\end{align}where the constraint (\ref{eq11}) means that the communication association variables take binary values. If $a_{m,k}\left[n\right]=1$, the $k$th WN is associated with the $m$th UAV at the time instant $n$. The constraint (\ref{eq12}) confines that each UAV can serve at most one WN at each time instant, while each active WN can be only served by one UAV under the constraint (\ref{eq13}). Besides, if the $m$th UAV associates with one WN at the time instant $n$ for data transmissions, its position remains unchanged during the time interval $\delta_D$, leading to the following constraint:
\begin{align}
&a_{m,k}[n]\cdot\left\|\textbf q_m[n+1]-\textbf q_m[n]\right\|_2=0 ,  \nonumber \\
& \qquad\qquad \forall m\in\left\{1,\ldots,M\right\}, \forall n\in\left\{0,\ldots,N-1\right\}. \label{eq14}
\end{align}

Since the WNs utilize the same frequency band for uplink communications, each UAV suffers from the uplink multiuser interference problem. From (\ref{eq10}), the signal-to-interference plus noise power ratio (SINR) of the $k$th WN at the $m$th UAV and the $n$th time instant can be expressed as
\begin{align}
&\Gamma_{m,k}\left[n\right]=\frac{P_k\left[n\right]H_{m,k}\left[n\right]}{\sum_{i=1,i\neq k}^{K}{P_i\left[n\right]H_{m,i}\left[n\right]}+\sigma_n^2} ,\nonumber\\
& \forall m\in\left\{1,\ldots,M\right\}, \forall k\in\left\{1,\ldots,K\right\},\forall n\in\left\{0,\ldots,N-1\right\} , \label{eq15}
\end{align}
where $\sigma_n^2$ is the power of additive white Gaussian noise, and $P_k\left[n\right]$ is the uplink transmit power of the $k$th WN at the time slot $n$. Accordingly, the achievable sum rate of the $k$th WN over the whole UAV task period can be calculated as
\begin{align}
&R_k =\sum_{n=0}^{N-1} \underbrace{\sum_{m=1}^{M}{a_{m,k}\left[n\right]W
{\log}_2\left(1+\Gamma_{m,k}\left[n\right]\right)}}_{\triangleq R_{k,n}}, \nonumber\\
& \qquad\qquad\qquad\qquad \forall k\in\left\{1,\ldots,K\right\},\label{eq16}
\end{align}where $W$ is the system bandwidth, and $R_{k,n}$ is the data rate of the $k$th WN at the time instant $n$. 

Since each WN relies on the harvested solar energy in a finite-capacity battery for uplink transmission, the transmit power of a WN is constrained by its harvested energy and battery capacity. Let $E_k\left[n\right]$ and $B_k\left[n\right]$ represent the harvested energy and battery level of the $k$th WN at the time instant $n$. {For simplicity, we assume that the battery level of WN is zero at the time instant $n = 0$.} Therefore, the residual energy in the battery of the $k$th WN at the time instant $n$ can be described as $B_k\left[n\right]=\sum_{l=0}^{n}{E_k\left[l\right]}-\delta_D\sum_{l=0}^{n}{P_k\left[l\right]}$, and the battery power evolution from the time $n$ to $\left(n+1\right)$ can be described as $B_k\left[n+1\right]=B_k\left[n\right]+E_k\left[n+1\right]$. Since the battery power is limited by $0\le B_k\left[n\right]\le B_{max\ }$, it then yields two constraints about the uplink transmit power, harvested solar energy, and battery power, namely energy causality and battery storage constraints:
\begin{align}
&\qquad\delta_D\sum_{l=0}^n P_k[l]\leq \sum_{l=0}^nE_k[l] ,\forall k\in\left\{1,\ldots,K\right\},\nonumber\\
& \qquad\qquad\qquad\qquad \forall n\in\{0,\ldots,N-1\} ; \label{eq17}\\
&\sum_{l=0}^{n+1}{E_k\left[l\right]}-\delta_D\sum_{l=0}^{n}{P_k\left[l\right]}\le B_{\max}, \forall k\in\left\{1,\ldots,K\right\}, \nonumber \\
& \qquad\qquad\qquad\qquad \forall n\in\left\{0,\ldots,N-1\right\}, \label{eq18}
\end{align}where $B_{\max}$ is the maximum battery storage capacity of WNs. Here, the energy causality constraint (\ref{eq17}) mandates that the harvested energy at each WN cannot be used until it arrives. The battery storage constraint (\ref{eq18}) indicates that for each WN, the total amount of harvested energy minus the energy expenditure for data transmission cannot exceed the maximum battery storage capacity at each time instant. 

\subsection{Problem Formulation}
In order to achieve a fair communication service for multiple WNs during the entire UAV task period, the design goal of this paper is to maximize the worst sum rate among $K$ WNs. The joint design problem of the trajectories of UAVs, communication association between UAVs and WNs, and transmit power of WNs with solar EH can be formulated as follows:
\begin{align}
\textbf{(P1)}\;&\max_{\{\textbf q_m\left[n\right],P_k\left[n\right],a_{m,k}\left[n\right],\forall m,k,n\}} \min_{k=1,\ldots,K}{R_k}  \nonumber\\
s.t.\;& (\ref{eq4}), (\ref{eq5}), (\ref{eq6}), (\ref{eq11}), (\ref{eq12}), (\ref{eq13}), (\ref{eq14}), (\ref{eq17}), (\ref{eq18}) \nonumber.
\end{align}
It is worth noting that if the $k$th WN is not associated with any UAV, the optimal solution to the uplink transmit power $P_k[n]$ of the $k$th WN must be equal to zero; otherwise, it causes the interference problem and degrades the overall sum rate performance of the network.

In this optimization problem, the UAV needs to know the CSI $H_{m,k} [n]$ and ESI $E_k [n]$ not only in the current time instant but also in the future time instants. However, in real applications, it is impractical to acquire the full (past and future) information for carrying out the optimization problem. In response to this design challenge, a RL method can be utilized to perform dynamic optimization for the UAV flight direction, UAV and WN association, and uplink transmit power, solely based on the current battery state information (BSI) of WNs and the current CSI between UAVs and WNs.

{As compared with the convex optimization approach, the conventional RL is exempt from the future instantaneous ESIs and CSIs but only depends on the current state information. However, it requires the UAV to repeatedly take various actions in each state of the environment, which results in a long exploration learning time and a slow update of Q-values. The learning time and convergence performance become even worse for multi-UAV communication networks with multi-EH WNs, since the number of system states and actions increases exponentially and many inefficient actions may be executed during the learning process. For this reason, in this paper, we will first investigate an offline convex optimization method to find the best offline flight trajectory for the UAVs by only applying the ``average" LOS/NLOS channel gain and EH profile. The obtained solution is then served as the reference solution for the online RL. The main idea is to mark out a flight corridor based on the reference solution to guide the UAV flight actions in the online learning for reducing the learning time of the conventional RL while improving the system performance of the offline convex optimization approach.}

\section{Offline Convex Optimization Design of Multi-UAV Systems with Multiple EH WNs}\label{offline}
In this section, we develop an offline convex optimization approach by using the average LOS/NLOS channel gain value on the UAV trajectory $\bar{H}_{m,k}[n]=\mathbb{E}[H_{m,k} [n]]$ and the average value of the past solar EH time series $\bar{E}_k[n]=\mathbb{E}[E_k[n]]$ to replace the instantaneous information $H_{m,k}[n]$ and $E_k[n]$ in the optimization problem (\textbf{P1}), respectively. It is worth noting that although the statistical average of solar EH profiles is difficult to obtain, it can be acquired by numerically averaging real solar energy historical data. Besides, from (\ref{eq7})--(\ref{eq10}), the average LOS/NLOS channel gain $\bar{H}_{m,i}[n]$ can be derived in terms of the UAV position $\textbf q_m[n]$ as follows: 
\begin{align}
    &\bar H_{m,i} [n]\nonumber\\
    &=\rho_{LOS,m,i} [n] \check L_{LOS,m,i} [n]+\rho_{NLOS,m,i}[n]\check L_{NLOS,m,i}[n]\nonumber\\
    &=\left(\frac{c}{4\pi f_c d_{m,i}[n]}\right)^2\times 10^{\frac{-S}{10}}\nonumber\\
    &\;\;\times\left(\rho_{LOS,m,i}[n]\times10^{\frac{-\eta_{LOS}}{10}}+\rho_{NLOS,m,i}[n]\times10^{\frac{-\eta_{NLOS}}{10}}\right)\nonumber\\
    &=C_1\times C_2 \times d^{-2}_{m,i}[n]\times\rho_{LOS,m,i}[n]+C_3\times  d^{-2}_{m,i}[n]\nonumber\\
    &=C_1\times C_2 \times \frac{1}{\left\|\textbf q_m[n]-\textbf g_i\right\|_2^2+H^2}\nonumber\\
    &\;\;\times\frac{1}{1+A exp\left(-B\left(\frac{180}{\pi}\tan^{-1}(\frac{H}{\left\|\textbf q_m[n]-\textbf g_i\right\|_2})-A\right)\right)}\nonumber\\
    &\;\;+C_3\times \frac{1}{\left\|\textbf q_m[n]-\textbf g_i\right\|_2^2+H^2}\;,\label{average_H}
\end{align}
where $C_1$, $C_2$, and $C_3$ are constants and all greater than zero:
\begin{align}
    &C_1=\left(\frac{c}{4\pi f_c}\right)^2\times10^{\frac{-S}{10}}>0  ;\label{eq33}\\
    &C_2=\left(10^{\frac{-\eta_{LOS}}{10}}-10^{\frac{-\eta_{NLOS}}{10}}\right)>0 ;\label{eq34}\\
    &C_3=\left(\frac{c}{4\pi f_c}\right)^2\times10^{\frac{-S}{10}}\times10^{\frac{-\eta_{NLOS}}{10}}>0 .\label{eq39_c3}
\end{align}
Therefore, the objective function considered in the offline optimization problem is given by
\begin{align}
&\bar R_k=\sum_{n=0}^{N-1}\sum_{m=1}^{M}{a_{m,k}\left[n\right]W
{\log}_2(1+\bar\Gamma_{m,k}[n])}, \nonumber\\
&\;\;\;\;\;\;\;\;\;\;\;\;\;\;\;\;\;\; \forall k \in \left\{ 1,\ldots,K \right\} ,\label{eq26}
\end{align}where $\bar\Gamma_{m,k}[n]$ is the SINR calculated throughout the average channel gain:
\begin{align}
&\bar\Gamma_{m,k}\left[n\right]=\frac{P_k\left[n\right]\bar H_{m,k}\left[n\right]}{\sum_{i=1,i\neq k}^{K}{P_i\left[n\right]\bar H_{m,i}\left[n\right]}+\sigma_n^2},\forall m\in\left\{1,\ldots,M\right\}\nonumber\\
&  \;\;\;\;\; \;\;\;\;\;\;,\forall k\in\left\{1,\ldots,K\right\},\forall n\in\left\{0,\ldots,N-1\right\} .\label{eq27}
\end{align}
From (\textbf{P1}), the offline joint design problem is formulated as
\begin{align}
\textbf{(P2)}\;&\max_{\{\textbf q_m\left[n\right],P_k\left[n\right],a_{m,k}\left[n\right],\forall m,k,n\}} \min_{k=1,\ldots,K}{\bar R_k}  \nonumber\\
s.t.\;& (\ref{eq4}), (\ref{eq5}), (\ref{eq6}), (\ref{eq11}), (\ref{eq12}), (\ref{eq13}), (\ref{eq14}), (\ref{eq17}), (\ref{eq18}) \nonumber.
\end{align}
Since the objective function of the joint design problem (\textbf{P2}) and the constraints (\ref{eq6}), (\ref{eq11}) and (\ref{eq14}) are non-convex for the three design variables $\textbf q_m[n]$, $a_{m,k}[n]$ and $P_k[n]$. To overcome this issue, we adopt the alternative optimization to decompose the joint optimization problem into three subproblems to optimize the UAV flight trajectory $\{\textbf q_{m} [n],\forall m,n\}$, communication association $\{a_{m,k} [n],\forall m,k,n\}$, or power control $\{P_k [n],\forall k,n\}$ under the fixed values of other variables. Nevertheless, the three subproblems are still non-convex: (i) the communication association subproblem is a non-convex mixed integer programming, and we will relax the integer constraint (\ref{eq11}), i.e., $\{a_{m,k} [n]\in\{0,1\},\forall m,k,n\}$ to $\{0\leq a_{m,k} [n]\leq1,\forall m,k,n\}$ in order to convert the problem into a linear programming problem; (ii) the UAV flight trajectory subproblem and the power control subproblem are also non-convex due to the objective function and the constraint (\ref{eq6}), and we will use SCA methods to transform these two non-convex subproblems into convex ones.

\subsection{UAV-WN Communication Association Subproblem}

Given the transmit power control $P_k [n]$ of the WNs and the UAV flight trajectory ${\textbf q}_m [n]$ in (\textbf{P2}), the optimization can be performed for the communication association. By relaxing the integer constraint (\ref{eq11}) and introducing an auxiliary variable $\zeta_a$, the subproblem becomes an equivalent epigraph form:
\begin{align}
\textbf{(P3)}\;&\qquad\qquad\qquad\max_{\{\zeta_a,a_{m,k}\left[n\right],\forall m,k,n\}}{\zeta_a}\nonumber\\
s.t.\;         &\sum_{n=0}^{N-1}\sum_{m=1}^{M}{a_{m,k}\left[n\right]W
{\log}_2(1+\bar\Gamma_{m,k}[n])}\geq\zeta_a,\nonumber\\
& \;\;\;\;\;\;\;\;\;\;\;\;\;\;\;\;\;\;\;\;\;\;\;\;\;\;\;\;\;\;\;\;\;\;\;\;\forall k\in\{1,...,K\},\label{eq28} \\
               &0\leq a_{m,k}[n]\leq 1,\forall m,\forall n\in\left\{0,\ldots,\ N-1\right\},\label{eq29}\\
               &(\ref{eq12}),(\ref{eq13})\nonumber.
\end{align}
The subproblem (\textbf{P3}) now becomes linear programming which can be directly optimized by using off-the-shelf optimization software (CVX) \cite{M. Grant16}. Since the obtained solution $a_{m,k}^\star \left[n\right]$ in (\textbf{P3}) may take a value between $0$ and $1$, it is quantized to $1$, if $a_{m,k}^\star \left[n\right]\geq 0.5$; otherwise, we set $a_{m,k}^\star \left[n\right]$ to $0$.

\subsection{UAV Flight Trajectory Subproblem}
Given the communication association $a_{m,k}[n]$ and the power control $P_k[n]$ of WNs, the UAV flight trajectory subproblem can be obtained from the optimization problem (\textbf{P2}) and represented as an epigraph form with the introduction of an auxiliary variable $\zeta_q$:
\begin{align}
  \textbf{(P4)}\;&\max_{\{\zeta_q,\textbf{q}_m\left[n\right],\forall m,k,n\}}{\zeta_q}\nonumber\\
  s.t.\;
  &\bar R_k\geq\zeta_q, \forall k\in\{1,...,K\}, \label{eq30}\\
  &(\ref{eq4}),(\ref{eq5}),(\ref{eq6}),(\ref{eq14})\nonumber.
\end{align}
Observing the constraint (\ref{eq30}), we know that the SINR $\bar\Gamma_{m,k}[n]$ in (\ref{eq27}) contains $\bar H_{m,k}[n]$ which is related to the multiple UAV flight trajectory variables $\textbf{q}_m [n]$. This makes the constraint (\ref{eq30}) non-convex. In addition, the minimum safe distance constraint (\ref{eq6}) is also non-convex. As a result, the subproblem (\textbf{P4}) is a non-convex optimization problem, which cannot be directly solved by convex optimization tools. In the following, we resort to the SCA method to convexify the non-convex constraints considering LOS/NLOS channels for multiple UAVs. Note that the SCA method was applied in {\cite{C. You20}\cite{Q. Wu18}}, whereas these two works merely considered the design in LOS channels or single-UAV environments and cannot be directly applied to our work.
\begin{figure*}[t]
\begin{align}
\bar R_k&=\sum_{n=0}^{N-1}\sum_{m=1}^{M}{a_{m,k}\left[n\right]W\left(\underbrace{{\log}_2\left( \sum_{i=1}^{K}P_i[n]\bar H_{m,i}[n]+\sigma_n^2 \right)}_{\triangleq \check R_{1m}[n]}  \underbrace{-{\log}_2\left(\sum_{i=1,i\neq k}^{K}P_i[n]\bar H_{m,i}[n]+\sigma_n^2\right)}_{\triangleq \check R_{2m,k}[n]}\right)}\label{eq31}
\end{align}
\hrule
\end{figure*}

{First, we expand the left-hand side of the constraint (\ref{eq30}) into the difference of two logarithmic functions, as shown in (\ref{eq31}) at the top of the next page, where we define $\check R_{1m}[n] \triangleq {\log}_2\left(\sum_{i=1}^{K}P_i[n]\bar H_{m,i}[n]+\sigma_n^2\right)$ and $\check R_{2m,k}[n] \triangleq -{\log}_2\left(\sum_{i=1,i\neq k}^{K}P_i[n]\bar H_{m,i}[n]+\sigma_n^2\right)$. Our goal is to find a lower bound concave function for $\bar R_k$ in (\ref{eq31}) in order to transform the constraint (\ref{eq30}) into solvable convex constraints. Below we elaborate on the ways to find concave lower bounds for $\check R_{1m}[n]$ and $\check R_{2m,k}[n]$ in terms of $\textbf{q}_m [n]$, followed by the transformation of the subproblem (\textbf{P4}) into a solvable convex optimization problem through these derived lower bounds. 

\subsubsection{A concave lower bound for $\check R_{1m}[n]$}
We define two variables $X_{m,i}$ and $Y_{m,i}$, given by
\begin{align}
    &X_{m,i}[n]= \left\|\textbf q_m[n]-\textbf g_i\right\|_2^2+H^2, \forall m\in\{1,...,M\},\nonumber\\
    &\;\;\;\;\;\;\;\forall i\in\{1,...,K\},\forall n\in\{0,...,N-1\}.\label{eq35add}\\
    &Y_{m,i}[n]= 1+A e^{-B\left(\frac{180}{\pi}\tan^{-1}\left(\frac{H}{\left\|\textbf q_m[n]-\textbf g_i\right\|_2}\right)-A\right)},\nonumber\\
    &\;\;\;\forall m\in\{1,...,M\},\forall i\in\{1,...,K\},\forall n\in\{0,...,N-1\}.\label{eq36add}\
\end{align}
The following lemma is provided.
\begin{lemma}\label{log_lower_bound__convex}
The function $\check R_{1m}[n]={\log}_2\big(\sum_{i=1}^{K}P_i[n]$ $\bar H_{m,i}[n]+\sigma_n^2\big)$ is convex in $X_{m,i}[n]$ and $Y_{m,i}[n]$ for all $i$.
\end{lemma}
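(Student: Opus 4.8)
The plan is to recognize that what must be established is not merely the convexity of the argument of the logarithm, but its \emph{log-convexity}: for a positive function $g$, the map $\log_2 g$ is convex precisely when $g$ is log-convex. Hence the lemma is equivalent to showing that the quantity inside the logarithm defining $\check R_{1m}[n]$ is a log-convex function of the collection of variables $\{X_{m,i}[n],Y_{m,i}[n]\}_{i}$. This framing is the crux, because $\log_2(\cdot)$ is concave and increasing, so composing it with a merely convex argument does \emph{not} in general produce a convex function; it is the special multiplicative structure of $\bar H_{m,i}[n]$ that rescues convexity here.

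First I would substitute the closed form of $\bar H_{m,i}[n]$ from (\ref{average_H}), expressing it through the new variables as
\begin{align}
\bar H_{m,i}[n]=\frac{C_1 C_2}{X_{m,i}[n]\,Y_{m,i}[n]}+\frac{C_3}{X_{m,i}[n]}, \nonumber
\end{align}
so that the argument of the logarithm is $g=\sum_{i=1}^{K}P_i[n]\bar H_{m,i}[n]+\sigma_n^2$. I would then rely on two facts: log-convexity is preserved under multiplication by a positive scalar, and, crucially, the sum of log-convex functions is again log-convex. The second fact is the one non-trivial ingredient and follows from H\"older's inequality applied to the multiplicative inequality that defines log-convexity.

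Next I would verify that each building block of $g$ is log-convex in $\{X_{m,i}[n],Y_{m,i}[n]\}$. For the cross term, $\log\big(1/(X_{m,i}[n]Y_{m,i}[n])\big)=-\log X_{m,i}[n]-\log Y_{m,i}[n]$ is a sum of two convex $-\log(\cdot)$ terms and is therefore convex, so $1/(X_{m,i}[n]Y_{m,i}[n])$ is log-convex; likewise $1/X_{m,i}[n]$ is log-convex since $-\log X_{m,i}[n]$ is convex; and the positive constant $\sigma_n^2$ is trivially log-convex. Because the coefficients in (\ref{eq33})--(\ref{eq39_c3}) satisfy $C_1,C_2,C_3>0$ and each $P_i[n]\ge 0$ (terms with $P_i[n]=0$ simply drop out, while $\sigma_n^2>0$ keeps $g$ strictly positive), $g$ is a nonnegative-weighted sum of log-convex functions and is thus log-convex. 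Consequently $\check R_{1m}[n]=\tfrac{1}{\ln 2}\ln g$ is convex in $\{X_{m,i}[n],Y_{m,i}[n]\}$, which is the claim.

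The main obstacle is conceptual rather than computational: one must resist arguing by the elementary composition rules, which only show that $g$ itself is convex and say nothing about $\log g$, and instead identify log-convexity as the correct notion and its closure under addition as the load-bearing step. Once that is in place, every verification reduces to the elementary convexity of $-\log(\cdot)$, so no Hessian computation of the composite function is needed; a direct Hessian argument on $\log g$ would be substantially messier and is what this approach is designed to avoid.
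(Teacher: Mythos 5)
Your proof is correct, and while it rests on the same load-bearing fact as the paper's, its execution is genuinely different and arguably cleaner. The paper's Appendix A proof proceeds in two steps: it first establishes by an explicit Hessian computation that $\phi(x,y)=\ln\bigl(\tfrac{C_1C_2}{xy}+\tfrac{C_3}{x}\bigr)$ is convex (the mixed partials vanish and the diagonal entries are positive), i.e., that each full term $P_i[n]\bar H_{m,i}[n]$ is log-convex; it then rewrites $\check R_{1m}[n]$ as a log-sum-exp, $\log_2\bigl(\sum_{i=1}^K e^{\tilde\phi_i(X_{m,i}[n],Y_{m,i}[n])}+\sigma_n^2\bigr)$, and cites the rule that $\log_2\bigl(\sum_i e^{g_i}\bigr)$ is convex whenever every $g_i$ is convex. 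That composition rule is exactly the additive-closure property of log-convexity that you invoke (and justify via H\"older), so the two proofs share their one nontrivial ingredient. The difference is where the per-term verification happens: you split each $\bar H_{m,i}[n]$ further into the pieces $C_1C_2/(X_{m,i}[n]Y_{m,i}[n])$ and $C_3/X_{m,i}[n]$, whose log-convexity is immediate because their logarithms are, up to additive constants, $-\log X_{m,i}[n]-\log Y_{m,i}[n]$ and $-\log X_{m,i}[n]$; the additive closure then absorbs the inner sum, the sum over $i$, the nonnegative weights $P_i[n]$, and the constant $\sigma_n^2$ in one stroke, with no calculus at all. What the paper's Hessian computation buys is a reusable certificate: the convexity of $\phi$ is cited again verbatim in Lemma \ref{lemma_Hup_convex} and Theorem \ref{theorem_R_2m_lowerbound}, whereas your elementary decomposition would have to be restated (trivially) there. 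Your diagnosis of the pitfall is also on target: mere convexity of the argument of $\log_2$ proves nothing, since $\log_2$ is concave and increasing, and it is precisely the log-convex structure of the channel-gain terms that makes the lemma true.
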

\begin{proof}
See Appendix \ref{Lemma_1} for the detailed proof.
\end{proof}
Since our purpose is to find a lower bound concave function for $\check R_{1m}[n]$ in (\ref{eq31}), we use the property that the first-order Taylor expansion of a convex function at any point is always a lower bound of that convex function. By using Lemma \ref{log_lower_bound__convex}, a theorem for the first-order Taylor expansion of $\check R_{1m}[n]$ is then provided as follows.
\begin{theorem}\label{first_order_R_lowerbound}
Given any ${\textbf q}_m[n]={\textbf q}^r_m[n]$, the first-order Taylor expansion of $\check R_{1m}[n]$ can be derived and served as a lower bound:
\begin{align}
    &\check R_{1m}[n] \triangleq {\log}_2\left(\sum_{i=1}^{K}P_i[n]\bar H_{m,i}[n]+\sigma_n^2\right) \nonumber\\
    &\geq {log}_2 \left(\sum_{i=1}^K P_i[n]\left(\frac{C_1 C_2}{X_{m,i}^r[n]Y_{m,i}^r[n]}+\frac{C_3}{X_{m,i}^r[n]}\right)+\sigma_n^2\right)\nonumber\\
    &\;\;\;+\sum_{i=1}^K O_{m,i}[n]\left( X_{m,i}[n]-X_{m,i}^r[n]\right)\nonumber\\
    &\;\;\;+\sum_{i=1}^K G_{m,i}[n]\left( Y_{m,i}[n]-Y_{m,i}^r[n]\right)\nonumber\\
    &\triangleq\check R_{1m}^{lb}[n],\forall m\in\{1,...,M\},\forall n\in\{0,...,N-1\},\label{eq41}
\end{align}
where we define
\begin{align}
    &X_{m,i}^r[n]=\|\textbf q_m^r[n]-\textbf g_i\|^2+H^2; \label{eq41_1}\\
    &Y_{m,i}^r[n]=1+A e^{-B\left(\frac{180}{\pi}\tan^{-1}\left(\frac{H}{\left\|\textbf q^r_m[n]-\textbf g_i\right\|_2}\right)-A\right)}; \label{eq41_2}\\
    &O_{m,i}[n]=\frac{-P_i[n]\left(\frac{C_1 C_2+C_3 Y_{m,i}^r[n]}{\left(X_{m,i}^r[n]\right)^2 Y_{m,i}^r[n]}\right)}{\left(\sum_{j=1}^K P_j[n]\left(\frac{\frac{C_1 C_2}{Y_{m,i}^r[n]}+C_3}{X_{m,i}^r[n]}\right)+\sigma_n^2\right)ln(2)}; \label{eq41_3}\\
    &G_{m,i}[n]=\frac{-P_i[n]\left(\frac{C_1 C_2}{X_{m,i}^r[n] \left(Y_{m,i}^r[n]\right)^2}\right)}{\left(\sum_{j=1}^K P_j[n]\left(\frac{\frac{C_1 C_2}{Y_{m,i}^r[n]}+C_3}{X_{m,i}^r[n]}\right)+\sigma_n^2\right)ln(2)}. \label{eq41_4}
\end{align}
\end{theorem}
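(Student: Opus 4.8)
The plan is to treat $\check R_{1m}[n]$ as a function of the lifted variables $\{X_{m,i}[n]\}_{i=1}^{K}$ and $\{Y_{m,i}[n]\}_{i=1}^{K}$ and to invoke the first-order global-underestimator property of convex functions. Recall from (\ref{average_H}) that the average channel gain can be written as $\bar H_{m,i}[n]=\frac{1}{X_{m,i}[n]}\big(\frac{C_1 C_2}{Y_{m,i}[n]}+C_3\big)$, so that $\check R_{1m}[n]=\log_2\big(\sum_{i=1}^{K}P_i[n]\bar H_{m,i}[n]+\sigma_n^2\big)$ depends on $\textbf q_m[n]$ only through the collection $\{X_{m,i}[n],Y_{m,i}[n]\}_i$. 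By Lemma \ref{log_lower_bound__convex}, $\check R_{1m}[n]$ is jointly convex in these variables. Since a convex function lies above each of its tangent hyperplanes, i.e. $f(\textbf z)\geq f(\textbf z^r)+\nabla f(\textbf z^r)^{T}(\textbf z-\textbf z^r)$ for all $\textbf z$, the first-order Taylor expansion of $\check R_{1m}[n]$ about the reference point $\textbf z^r\triangleq\{X_{m,i}^r[n],Y_{m,i}^r[n]\}_i$ induced by $\textbf q_m[n]=\textbf q_m^r[n]$ is automatically a valid global lower bound. This is exactly the inequality (\ref{eq41}).

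It then remains to compute the gradient at the reference point and identify the coefficients. Writing $g\triangleq\sum_{i=1}^{K}P_i[n]\bar H_{m,i}[n]+\sigma_n^2$ so that $\check R_{1m}[n]=\log_2 g$, the chain rule gives $\frac{\partial \check R_{1m}[n]}{\partial X_{m,i}[n]}=\frac{1}{g\ln 2}\frac{\partial g}{\partial X_{m,i}[n]}$ and likewise for $Y_{m,i}[n]$. Differentiating $g$ through $\bar H_{m,i}[n]$ yields $\frac{\partial g}{\partial X_{m,i}[n]}=-\frac{P_i[n]}{X_{m,i}^2[n]}\big(\frac{C_1 C_2}{Y_{m,i}[n]}+C_3\big)$ and $\frac{\partial g}{\partial Y_{m,i}[n]}=-\frac{P_i[n]C_1 C_2}{X_{m,i}[n]Y_{m,i}^2[n]}$. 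Evaluating these at $\textbf q_m[n]=\textbf q_m^r[n]$, so that $X_{m,i}[n]$ and $Y_{m,i}[n]$ take the values (\ref{eq41_1})--(\ref{eq41_2}), reproduces precisely the coefficients $O_{m,i}[n]$ of (\ref{eq41_3}) and $G_{m,i}[n]$ of (\ref{eq41_4}), while the constant term $\check R_{1m}[n]\big|_{\textbf q_m^r[n]}$ is the first logarithmic term on the right-hand side of (\ref{eq41}). Assembling the constant term and the two linear corrections completes the expansion.

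The conceptual content here is light, since Lemma \ref{log_lower_bound__convex} already supplies the convexity that makes the tangent-plane bound valid; the only genuine work is careful bookkeeping in the chain-rule differentiation, in particular tracking that $g$ depends on $X_{m,i}[n]$ through \emph{both} the LOS and NLOS terms but on $Y_{m,i}[n]$ through the LOS term alone. The one subtlety worth flagging is that convexity, and hence the underestimator property, is asserted with respect to the \emph{lifted} arguments $(X_{m,i}[n],Y_{m,i}[n])$ treated as free variables, not with respect to $\textbf q_m[n]$ directly. Re-substituting $X_{m,i}[n]=\|\textbf q_m[n]-\textbf g_i\|_2^2+H^2$ together with the corresponding $Y_{m,i}[n]$ then produces a bound $\check R_{1m}^{lb}[n]$ that holds for every feasible UAV position, which is what is needed to surrogate the non-convex constraint (\ref{eq30}); whether this re-substituted expression is itself tractable in $\textbf q_m[n]$ is a separate issue addressed in the subsequent development and is not part of the present statement.
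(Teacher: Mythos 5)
Your proposal is correct and follows essentially the same route as the paper's own proof in Appendix B: both invoke Lemma \ref{log_lower_bound__convex} to establish convexity of $\check R_{1m}[n]$ in the lifted variables $(X_{m,i}[n],Y_{m,i}[n])$ and then apply the first-order Taylor expansion as a global underestimator at the reference point induced by $\textbf q_m^r[n]$. Your explicit chain-rule computation of the gradient coefficients $O_{m,i}[n]$ and $G_{m,i}[n]$ (which the paper leaves implicit) and your remark that the tangent-plane bound lives in the lifted variables rather than in $\textbf q_m[n]$ are both accurate and consistent with how the paper proceeds afterward.
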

\begin{proof}
See Appendix \ref{first_order_R_lowerbound_proof} for the detailed proof.
\end{proof}
It can be observed from Theorem \ref{first_order_R_lowerbound} that the term $O_{m,i}[n]\left( X_{m,i}[n]-X_{m,i}^r[n]\right)$ is concave in $\textbf q_m[n]$, since $O_{m,i}[n]\leq 0$ and $X_{m,i}[n]= \left\|\textbf q_m[n]-\textbf g_i\right\|_2^2+H^2$ is a square norm function of $\textbf q_m[n]$. However, the term $G_{m,i}[n]\left( Y_{m,i}[n]-Y_{m,i}^r[n]\right)$ is neither concave nor convex, and we further provide the following theorem to transform this term into a concave function.
\begin{theorem}\label{theorem_Y_upperbound}
Given any ${\textbf q}_m[n]={\textbf q}^r_m[n]$, $Y_{m,i}[n]$ can be upper bounded by
\begin{align}
    &Y_{m,i}[n]\leq 1+Aexp\Biggl( -B\Biggl(\frac{180}{\pi}\biggl(\tan^{-1}\Bigl(\frac{H}{\sqrt{U_{m,i}^r[n]}}\Bigr)\nonumber\\
    &\qquad\quad\;\;\;-\frac{H\bigl(\left\|\textbf q_m[n]-\textbf g_i\right\|_2^2-U_{m,i}^r[n]\bigr)}{2\sqrt{U_{m,i}^r[n]}\left(H^2+U_{m,i}^r[n]\right)}\biggr)-A\Biggr)\Biggr)\nonumber\\
    &\qquad\quad\triangleq Y_{m,i}^{up}[n],\forall m\in\{1,...,M\},\forall i\in\{1,...,K\},\nonumber\\
    &\qquad\qquad\qquad\qquad\forall n\in\{0,...,N-1\},\label{Y_upperbound}
\end{align}where $U^r_{m,i}[n]=\left\|\textbf q^r_m[n]-\textbf g_i\right\|_2^2$ and $Y_{m,i}^{up}[n]$ is a convex function in terms of $\textbf q_m[n]$.
\end{theorem}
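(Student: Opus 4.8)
The plan is to reduce the whole statement to a one–dimensional convexity fact about the arctangent term and then to propagate that fact through the (monotone) exponential. First I would introduce the scalar variable $u=\|\textbf q_m[n]-\textbf g_i\|_2^2>0$ and write the inner arctangent as $\tan^{-1}\!\big(H/\|\textbf q_m[n]-\textbf g_i\|_2\big)=f(u)$, where $f(u)\triangleq\tan^{-1}(H/\sqrt{u})$. With this notation the quantity inside the exponential of $Y_{m,i}^{up}[n]$ is exactly the first–order Taylor expansion of $f$ about the reference point $u^r=U_{m,i}^r[n]$, so the theorem splits into two claims: (i) that this Taylor expansion is a global lower bound for $f$, which yields the inequality in (\ref{Y_upperbound}); and (ii) that the resulting $Y_{m,i}^{up}[n]$ is convex in $\textbf q_m[n]$.

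For claim (i) I would prove that $f$ is convex on $(0,\infty)$. Differentiating gives $f'(u)=-H/\big(2\sqrt{u}(u+H^2)\big)$, which already matches the linear coefficient appearing in (\ref{Y_upperbound}), and a second differentiation yields $f''(u)=\frac{H(3u+H^2)}{4\sqrt{u}\,(u^{3/2}+H^2u^{1/2})^2}>0$ for all $u>0$, since $H>0$. Convexity of $f$ then gives the gradient inequality $f(u)\ge f(u^r)+f'(u^r)(u-u^r)$, i.e. the tangent line at $u^r$ lies below $f$. Because $A,B>0$ and $\exp$ is increasing, a lower bound on the arctangent upper bounds the exponent $-B\big(\tfrac{180}{\pi}f(u)-A\big)$ and therefore upper bounds $Y_{m,i}[n]=1+A\exp(\cdot)$; substituting the tangent line in place of $f(u)$ produces precisely $Y_{m,i}^{up}[n]$.

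For claim (ii) I would track convexity through the composition. The squared norm $u=\|\textbf q_m[n]-\textbf g_i\|_2^2$ is convex in $\textbf q_m[n]$; since the slope $f'(u^r)=-H/\big(2\sqrt{u^r}(u^r+H^2)\big)$ is strictly negative, the affine-in-$u$ Taylor term $f(u^r)+f'(u^r)(u-u^r)$ is a negative multiple of a convex function plus a constant, hence concave in $\textbf q_m[n]$. Multiplying by $-B\tfrac{180}{\pi}<0$ flips it back to a convex exponent in $\textbf q_m[n]$, and composing that convex exponent with the increasing convex map $A\exp(\cdot)$, then adding $1$, preserves convexity. Thus $Y_{m,i}^{up}[n]$ is convex in $\textbf q_m[n]$.

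I expect the only genuine work to be the second–derivative computation verifying $f''(u)>0$; everything else is bookkeeping with the composition rules for convexity and the monotonicity of $\exp$. The subtle point where an error would most easily creep in is the sign tracking: a strictly negative slope turns the convex squared norm into a concave inner term, which then flips back to convex under the negative exponential coefficient, so I would be most careful to keep these signs consistent. A minor technical caveat is that the argument requires $u^r>0$ (the UAV not hovering exactly above the node), which holds for the iterates produced in the SCA loop.
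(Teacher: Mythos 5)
Your proof is correct and takes essentially the same route as the paper's Appendix C: the change of variables $u=\left\|\textbf q_m[n]-\textbf g_i\right\|_2^2$, a first-order Taylor lower bound on the convex function $\tan^{-1}\left(H/\sqrt{u}\right)$, monotonicity of the exponential to convert that into the upper bound $Y_{m,i}^{up}[n]$, and a sign-tracking composition argument for convexity in $\textbf q_m[n]$. The only difference is that you verify convexity of $\tan^{-1}\left(H/\sqrt{u}\right)$ by an explicit second-derivative computation, whereas the paper invokes it as a known fact; that is added rigor, not a different approach.
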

\begin{proof}
See Appendix \ref{upper_bound_Y_proof} for the detailed proof.
\end{proof}

By applying Theorem \ref{theorem_Y_upperbound}, we replace $Y_{m,i}[n]$ in $\check R^{lb}_{1m}[n]$ by $Y^{up}_{m,i}[n]$. Since $G_{m,i}[n]\leq0$, the term $G_{m,i}[n]\left( Y^{up}_{m,i}[n]-Y_{m,i}^r[n]\right)$ is a concave lower bound in $\textbf q_m[n]$. Hence, the concave lower bound of $\check R^{lb}_{1m}[n]$ in (\ref{eq41}) can be derived as follows:
\begin{align}
    &\check R^{lb}_{1m}[n]\nonumber\\
    &\geq {log}_2 \left(\sum_{i=1}^K P_i[n]\left(\frac{C_1 C_2}{X_{m,i}^r[n]Y_{m,i}^r[n]}+\frac{C_3}{X_{m,i}^r[n]}\right)+\sigma_n^2\right)\nonumber\\
    &\;\;\;+\sum_{i=1}^K O_{m,i}[n]\left( X_{m,i}[n]-X_{m,i}^r[n]\right)\nonumber\\
    &\;\;\;+\sum_{i=1}^K G_{m,i}[n]\left( Y^{up}_{m,i}[n]-Y_{m,i}^r[n]\right)\nonumber\\
    &\triangleq\check R_{1m}^{llb}[n],\forall m\in\{1,...,M\},\forall n\in\{0,...,N-1\}.\label{R1m_llb}
\end{align}

\subsubsection{A concave lower bound for $\check R_{2m,k}[n]$}
In order to find a concave lower bound for $\check R_{2m,k}[n]$, we introduce two auxiliary variables $\tilde X_{m,i}[n]$ and $\tilde Y_{m,i}[n]$, which satisfy the following constraints:
\begin{align}
    &\tilde X_{m,i}[n]\leq \left\|\textbf q_m[n]-\textbf g_i\right\|_2^2+H^2, \forall m\in\{1,...,M\},\nonumber\\
    &\;\;\;\forall i\in\{1,...,K\},\forall n\in\{0,...,N-1\}.\label{eq39add}\\
    &\tilde Y_{m,i}[n]\leq 1+A e^{-B\left(\frac{180}{\pi}\tan^{-1}\left(\frac{H}{\left\|\textbf q_m[n]-\textbf g_i\right\|_2}\right)-A\right)},\nonumber\\
    &\;\;\;\forall m\in\{1,...,M\},\forall i\in\{1,...,K\},\forall n\in\{0,...,N-1\}.\label{eq40add}\
\end{align}
Thus, an upper bound $\bar H_{m,i}^{up}[n]$ for $\bar H_{m,i}[n]$ in (\ref{eq31}) can be derived as 
\begin{align}
    \bar H_{m,i}[n]&\leq C_1 \times C_2 \times \frac{1}{\tilde X_{m,i}[n]}\times \frac{1}{\tilde Y_{m,i}[n]}+ C_3\times \frac{1}{\tilde X_{m,i}[n]}\nonumber\\
                        &\triangleq \bar H_{m,i}^{up}[n],\forall m\in\{1,...,M\},\forall i\in\{1,...,K\},\nonumber\\
                        &\qquad\qquad\qquad\qquad\forall n\in\{0,...,N-1\}, \label{eq39}
\end{align}where $C_1$, $C_2$ and $C_3$ are constants defined in (\ref{eq33}), (\ref{eq34}), and (\ref{eq39_c3}), respectively.

\begin{lemma}
The upper bound $\bar H_{m,i}^{up} [n]$ is a convex function for the two auxiliary variables $\tilde X_{m,i}[n]$ and $\tilde Y_{m,i}[n]$, $\forall$ $m\in\{1,...,M\}$, $\forall i\in\{1,...,K\}$, $\forall n\in\{0,...,N-1\}$.\label{lemma_Hup_convex}
\end{lemma}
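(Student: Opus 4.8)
The plan is to forget the geometric origin of the variables and treat $\bar H_{m,i}^{up}[n]$ from (\ref{eq39}) as a scalar function $f(x,y)=C_1C_2\,x^{-1}y^{-1}+C_3\,x^{-1}$ of the two positive real variables $x=\tilde X_{m,i}[n]$ and $y=\tilde Y_{m,i}[n]$. Recall from (\ref{eq39add})--(\ref{eq40add}) that $x\ge H^2>0$ and $y\ge 1>0$, and from (\ref{eq33})--(\ref{eq39_c3}) that $C_1,C_2,C_3>0$, so $f$ is smooth on the open orthant $\{x>0,\,y>0\}$. Joint convexity then reduces to the second-order condition: it suffices to show that the $2\times2$ Hessian $\nabla^2 f$ is positive semidefinite everywhere on this domain, which I would establish via Sylvester's criterion on the two leading principal minors.

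First I would compute the partials and assemble the Hessian. The diagonal entries
\[
f_{xx}=2C_1C_2\,x^{-3}y^{-1}+2C_3\,x^{-3},\qquad f_{yy}=2C_1C_2\,x^{-1}y^{-3}
\]
are manifestly strictly positive since every factor is positive, so the first leading principal minor is immediate. The substantive step is the second minor: with $f_{xy}=C_1C_2\,x^{-2}y^{-2}$, I would expand $\det(\nabla^2 f)=f_{xx}f_{yy}-f_{xy}^2$ and verify it stays positive. This is the point requiring care, because $f_{xy}^2=C_1^2C_2^2\,x^{-4}y^{-4}$ is of the same order as part of the product $f_{xx}f_{yy}=4C_1^2C_2^2\,x^{-4}y^{-4}+4C_1C_2C_3\,x^{-4}y^{-3}$; one must confirm that the partial cancellation of the $C_1^2C_2^2$ terms leaves the strictly positive residual $3C_1^2C_2^2\,x^{-4}y^{-4}+4C_1C_2C_3\,x^{-4}y^{-3}$ rather than turning negative. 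With both leading principal minors positive, the Hessian is positive definite and $f$ is (strictly) jointly convex, as claimed.

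The main obstacle is thus precisely this determinant verification. If one prefers to bypass the Hessian algebra, a cleaner route is a composition argument that I would offer as a confirmation: the summand $C_3\,x^{-1}$ is convex as a positive multiple of the reciprocal of a positive variable, while $C_1C_2\,x^{-1}y^{-1}$ is \emph{log-convex} because $\log\!\big(x^{-1}y^{-1}\big)=-\log x-\log y$ is a sum of convex functions; since every log-convex function is convex and a positive-weighted sum of convex functions is convex, $f$ is jointly convex. I would present the Hessian computation as the primary proof, with the log-convexity observation as a short independent check.
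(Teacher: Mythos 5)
Your proof is correct, but it takes a genuinely different route from the paper's. The paper never touches the Hessian of $\bar H^{up}_{m,i}[n]$ itself: it recycles Lemma \ref{log_lower_bound__convex}, whose proof in Appendix \ref{Lemma_1} shows that $\phi(x,y)=\ln\left(\frac{C_1C_2}{xy}+\frac{C_3}{x}\right)$ has a positive semidefinite Hessian, and then invokes the composition rule that $e^{g}$ is convex whenever $g$ is convex, writing $\frac{C_1C_2}{xy}+\frac{C_3}{x}=e^{\phi(x,y)}$. In other words, the paper proves log-convexity of the whole expression and exponentiates, whereas you verify convexity directly on $f(x,y)=C_1C_2x^{-1}y^{-1}+C_3x^{-1}$; your algebra checks out, since $f_{xx}f_{yy}-f_{xy}^2=3C_1^2C_2^2x^{-4}y^{-4}+4C_1C_2C_3x^{-4}y^{-3}>0$ on the open positive orthant, which even yields strict convexity --- something the paper's argument does not (and need not) deliver. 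Your secondary ``confirmation'' is actually the one closest in spirit to the paper, since it is also a log-convexity argument, but it is cleaner: by splitting $f$ into $C_3x^{-1}$ (convex outright) plus $C_1C_2x^{-1}y^{-1}$ (log-convex because $-\log x-\log y$ is a sum of convex single-variable functions), you avoid ever computing the Hessian of a logarithm, whereas the paper's route leans on the full computation in Appendix \ref{Lemma_1}. One cosmetic slip worth fixing: constraints (\ref{eq39add})--(\ref{eq40add}) are \emph{upper} bounds on $\tilde X_{m,i}[n]$ and $\tilde Y_{m,i}[n]$, not lower bounds, so they do not give you $x\geq H^2$ and $y\geq 1$; positivity of the auxiliary variables is instead implicit in the formulation (it is needed for (\ref{eq39}) to be a valid upper bound in the first place), and your argument only requires $x,y>0$, so nothing of substance is affected.
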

\begin{proof}
According to the proof in Lemma \ref{log_lower_bound__convex}, we know that the function $\ln(\frac{C_1C_2}{xy}+\frac{C_3}{x})$ is convex in $(x,y)$ if $C_1, C_2, C_3>0$. By using the fact that $e^{g(x)}$ is convex if $g(x)$ is convex {\cite{S. Boyd04}}, it can be shown that $\frac{C_1C_2}{xy}+\frac{C_3}{x}= e^{\ln(\frac{C_1C_2}{xy}+\frac{C_3}{x})}$ is also convex. Hence, the proof is completed.
\end{proof}

\begin{theorem}\label{theorem_R_2m_lowerbound}
With (\ref{eq39add}) and (\ref{eq40add}), the function $\check R_{2m,k}[n] = -{\log}_2\big( \sum_{i=1,i\neq k}^{K}P_i[n]\bar H_{m,i}[n]+\sigma_n^2\big)$ can be lower bounded by  
\begin{align}
&\check R_{2m,k}[n] \geq -{\log}_2\left(\sum_{i=1,i\neq k}^{K}P_i[n]\bar H^{up}_{m,i}[n]+\sigma_n^2\right) \nonumber \\
& \triangleq \check R_{2m,k}^{lb}[n], \forall m\in\{1,...,M\}, \forall n\in\{0,...,N-1\} \,,
\end{align}where $\check R_{2m,k}^{lb}[n]$ is a concave function in terms of $\tilde X_{m,i}[n]$ and $\tilde Y_{m,i}[n]$.
\end{theorem}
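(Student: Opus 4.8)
The plan is to separate the claim into its two assertions: the lower-bound inequality $\check R_{2m,k}[n]\geq\check R_{2m,k}^{lb}[n]$, and the concavity of $\check R_{2m,k}^{lb}[n]$ in the auxiliary variables $\tilde X_{m,i}[n]$ and $\tilde Y_{m,i}[n]$. The inequality is routine: since each $P_i[n]\geq 0$ and $\bar H_{m,i}[n]\leq\bar H_{m,i}^{up}[n]$ by (\ref{eq39}), and since $\sigma_n^2>0$, we have $\sum_{i=1,i\neq k}^{K}P_i[n]\bar H_{m,i}[n]+\sigma_n^2\leq\sum_{i=1,i\neq k}^{K}P_i[n]\bar H_{m,i}^{up}[n]+\sigma_n^2$. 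Applying the increasing map $\log_2(\cdot)$ and negating both sides flips the inequality and yields exactly $\check R_{2m,k}[n]\geq\check R_{2m,k}^{lb}[n]$.

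The concavity is the substantive part. The tempting route---noting that $\bar H_{m,i}^{up}[n]$ is convex by Lemma \ref{lemma_Hup_convex}---is insufficient, because the negated logarithm of a convex function is in general neither concave nor convex, so convexity alone cannot fix the sign. Instead I would use the stronger property that each summand is \emph{log-convex}. The proof of Lemma \ref{lemma_Hup_convex} already shows that $\ln\!\big(\frac{C_1C_2}{xy}+\frac{C_3}{x}\big)$ is convex, which is precisely the statement that $\bar H_{m,i}^{up}[n]$ is log-convex in $(\tilde X_{m,i}[n],\tilde Y_{m,i}[n])$; scaling by the nonnegative constant $P_i[n]$ preserves log-convexity.

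To finish, I would invoke the closure of log-convexity under addition \cite{S. Boyd04}: a sum of positive log-convex functions is log-convex. Each term $P_i[n]\bar H_{m,i}^{up}[n]$ depends only on its own pair $(\tilde X_{m,i}[n],\tilde Y_{m,i}[n])$ but is still log-convex on the full variable space (being constant in the remaining directions), and $\sigma_n^2>0$ is trivially log-convex and guarantees positivity of the total. Hence $\sum_{i=1,i\neq k}^{K}P_i[n]\bar H_{m,i}^{up}[n]+\sigma_n^2$ is log-convex, so its $\log_2$ is jointly convex in all the $\tilde X_{m,i}[n]$ and $\tilde Y_{m,i}[n]$, and the negation $\check R_{2m,k}^{lb}[n]$ is jointly concave. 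The main obstacle is precisely this recognition that the correct tool is log-convexity together with its closure under summation, rather than the ordinary convexity furnished by Lemma \ref{lemma_Hup_convex}.
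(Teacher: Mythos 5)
Your proof is correct and takes essentially the same route as the paper: the inequality is the identical monotonicity argument, and your log-convexity-closure argument for concavity is exactly the log-sum-exp composition the paper invokes when it says the concavity follows ``by applying the similar proof in Lemma \ref{log_lower_bound__convex}'' (whose proof rewrites the sum as $\log_2$ of a sum of exponentials of the convex functions $\ln\left(\frac{C_1C_2}{xy}+\frac{C_3}{x}\right)$ and cites the same closure fact from \cite{S. Boyd04}). Your explicit observation that the ordinary convexity of $\bar H_{m,i}^{up}[n]$ from Lemma \ref{lemma_Hup_convex} would not suffice is a nice clarification of why the paper appeals to Lemma \ref{log_lower_bound__convex} rather than Lemma \ref{lemma_Hup_convex} at this step, but it does not change the substance of the argument.
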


\begin{proof}
With (\ref{eq39add}) and (\ref{eq40add}), we have $\bar H_{m,i}[n] \leq \bar H_{m,i}^{up} [n]$ in (\ref{eq39}). By replacing $\bar H_{m,i}[n]$ in $\check R_{2m,k}[n]$ with the upper bound $\bar H_{m,i}^{up} [n]$, it then yields $\check R_{2m,k}[n] \geq$ $-{\log}_2\left(\sum_{i=1,i\neq k}^{K}P_i[n]\bar H^{up}_{m,i}[n]+\sigma_n^2\right)$. Moreover, it can be shown that $\check R_{2m,k}^{lb}[n]$ is a concave function in terms of $\tilde X_{m,i}[n]$ and $\tilde Y_{m,i}[n]$ by applying the similar proof in Lemma \ref{log_lower_bound__convex}.
\end{proof}

\subsubsection{Transformation of subproblem (\textbf{P4}) into a convex problem}

By adopting the derived lower bounds in Theorem \ref{first_order_R_lowerbound}, Theorem \ref{theorem_R_2m_lowerbound} and (\ref{R1m_llb}), a concave lower bound for  $\bar R_k$ can be computed as
\begin{align}
  \bar R_k \geq  \sum_{n=0}^{N-1}\sum_{m=1}^{M}a_{m,k}\left[n\right]W  \left(\check R_{1m}^{llb}[n]+ \check R_{2m,k}^{lb}[n] \right)\label{eq42} .
\end{align}
We then transform the subproblem (\textbf{P4}) into a convex one by replacing $\bar R_k$ in the constraint (\ref{eq30}) with its lower bound (\ref{eq42}) and inserting the two imposed constraints (\ref{eq39add}) and (\ref{eq40add}):
\begin{align}
\textbf{(P5)}\;&\qquad\qquad\max_{\{\zeta_q,\textbf{q}_m\left[n\right],\tilde X_{m,i}[n],\tilde Y_{m,i}[n],\forall m,i,n\}}{\zeta_q}\nonumber\\
  s.t.\;
  &\sum_{n=0}^{N-1}\sum_{m=1}^{M}a_{m,k}\left[n\right]W  \left(\check R_{1m}^{llb}[n] + \check R_{2m,k}^{lb}[n] \right)\geq\zeta_q,\nonumber\\
  &\;\qquad\qquad\qquad\forall k\in\{1,...,K\},\label{eq43}\\
  &(\ref{eq4}),(\ref{eq5}),(\ref{eq6}),(\ref{eq14}),(\ref{eq39add}),(\ref{eq40add})\nonumber.
\end{align}
In (\textbf{P5}), the constraint (\ref{eq43}) now becomes convex, since the left-hand side of (\ref{eq43}) is a concave function in terms of the variables $\textbf{q}_m\left[n\right]$, $\tilde X_{m,i}[n]$ and $\tilde Y_{m,i}[n]$ according to (\ref{eq42}). The constraints (\ref{eq39add}) and (\ref{eq40add}), which are due to the introduction of the auxiliary variables $\tilde X_{m,i}[n]$ and $\tilde Y_{m,i }[n]$ in (\ref{eq39}), are also included in (\textbf{P5}) for guaranteeing the lower bound relationship of (\ref{eq42}) and thus the legitimacy of the constraint (\ref{eq43}). Due to the lower bound relationship in (\ref{eq42}), one can also conclude that the feasible set of the subproblem (\textbf{P5}) is a subset of the feasible set of the original UAV flight trajectory subproblem (\textbf{P4}).
}

However, the subproblem (\textbf{P5}) is still not a convex optimization problem, since the constraints (\ref{eq6}), (\ref{eq39add}) and (\ref{eq40add}) are non-convex. We in turn apply the first-order Taylor expansion to deal with these constraints and transform them into convex ones. To make the constraint (\ref{eq6}) easier to handle, we first square the both sides of the constraint (\ref{eq6}):
\begin{align}
    &\left\|\textbf q_m[n]-\textbf q_j[n]\right\|_2^2\geq D^2_{min} ,\nonumber\\
    &\forall n \in\{ 1,\ldots,N-1\},\forall m,j\in\{1,\ldots,M\},m\neq j.\label{eq44}
\end{align}
Since $\left\|\textbf q_m[n]-\textbf q_j[n]\right\|_2^2$ is a convex function with respect to $\textbf q_m[n]$ and $\textbf q_j[n]$, the first-order Taylor expansion of $\left\|\textbf q_m[n]-\textbf q_j[n]\right\|_2^2$ at given points $\textbf q_m^r[n]$ and $\textbf q_j^r[n]$ can be derived and served as a lower bound:
\begin{align}
&    \left\|\textbf q_m[n]-\textbf q_j[n]\right\|_2^2\geq\left\|\textbf q^r_m[n]-\textbf q^r_j[n]\right\|_2^2 \nonumber\\
                      &\;\;\;\;\;\;+2(\textbf q^r_m[n]-\textbf q^r_j[n])^T(\textbf q_m[n]-\textbf q^r_m[n]) \nonumber \\
                      &\;\;\;\;\;\;-2(\textbf q^r_m[n]-\textbf q^r_j[n])^T(\textbf q_j[n]-\textbf q^r_j[n]). \label{eq_square_q_lowerbound}
\end{align}
From (\ref{eq_square_q_lowerbound}), the constraint (\ref{eq44}) is then replaced by the following new constraint:
\begin{align}
    &D^2_{min}\leq\left\|\textbf q^r_m[n]-\textbf q^r_j[n]\right\|_2^2\nonumber\\
    &\qquad\;\;\;\;+2(\textbf q^r_m[n]-\textbf q^r_j[n])^T(\textbf q_m[n]-\textbf q^r_m[n])\nonumber\\
    &\qquad\;\;\;\;-2(\textbf q^r_m[n]-\textbf q^r_j[n])^T(\textbf q_j[n]-\textbf q^r_j[n]),\nonumber\\
    &\forall n \in\{ 1,\ldots,N-1\},\forall m,j\in\{1,\ldots,M\},m\neq j.\label{eq45}
\end{align}
Note that the solutions that satisfy the constraint (\ref{eq45}) must also satisfy the constraint (\ref{eq6}) due to the lower bound relationship in (\ref{eq_square_q_lowerbound}).

Similarly, the function $\left\|\textbf q_m[n]-\textbf g_i\right\|_2^2+H^2$ in the constraint (\ref{eq39add}) is a convex function with respect to the trajectory variable $\textbf q_m [n]$, and its lower bound is obtained by the first-order Taylor expansion at a given point $\textbf q_m^r [n]$, as follows:
\begin{align}
 &   \left\|\textbf q_m[n]-\textbf g_i\right\|_2^2+H^2\geq\left\|\textbf q_m^r[n]-\textbf g_i\right\|_2^2+H^2\nonumber\\
    &\;\;\;\;\;\;\;\;\;+2(\textbf q^r_m[n]-\textbf g_i)^T(\textbf q_m[n]-\textbf q^r_m[n]).
    \label{eq_distance_UAV_WN_lowerbound}
\end{align}
By using (\ref{eq_distance_UAV_WN_lowerbound}), the constraint (\ref{eq39add}) is then replaced by the following new constraint:
\begin{align}
    & \tilde X_{m,i}[n]\leq\left\|\textbf q_m^r[n]-\textbf g_i\right\|_2^2+H^2\nonumber\\
    &\;\;+2(\textbf q^r_m[n]-\textbf g_i)^T(\textbf q_m[n]-\textbf q^r_m[n]), \forall n \in\{ 1,\ldots,N-1\},\nonumber\\
    &\;\;\;\;\;\;\forall m\in\{1\ldots,M\},\forall i\in\{1,\ldots,K\}.\label{eq46}
\end{align}

The next step is to convexify the constraint (\ref{eq40add}) because the right-hand side of (\ref{eq40add}) is neither convex nor concave with respect to the trajectory variable $\textbf q_m[n]$. To achieve this, we first define an angle elevation variable $\theta_{m,i}[n]$ which satisfies the following imposed constraint:
\begin{align}
    \theta_{m,i}[n]\geq\frac{180}{\pi}\tan^{-1}\left(\frac{H}{\left\|\textbf q_m[n]-\textbf g_i\right\|_2}\right). \label{eq_imposed_angle_elevation}
\end{align}
This implies that
\begin{align}
    Ae^{-B\left(\theta_{m,i}[n]-A\right)}\leq A e^{-B\left(\frac{180}{\pi}\tan^{-1}\left(\frac{H}{\left\|\textbf q_m[n]-\textbf g_i\right\|_2}\right)-A\right)}, \label{eq_exp_angle_lowerbound}
\end{align}
where $A$ and $B$ are the non-negative coefficients defined in (\ref{eq9}). In addition, it is worth mentioning that the function $A \exp(-B(\theta_{m,i}[n]-A))$ is convex with respect to $\theta_{m,i}[n]$, and its first-order Taylor expansion at the point $\theta_{m,i}^r[n]=\frac{180}{\pi}\tan^{-1}\left(\frac{H}{\left\|\textbf q_m^r[n]-\textbf g_i\right\|_2}\right)$ can be derived as
\begin{align}
    &A e^{-B\left(\theta_{m,i}[n]-A\right)}\geq A e^{-B\left(\theta^r_{m,i}[n]-A\right)} \nonumber\\
    &\;\;\;\;\;\;\;\;\;+\left(-AB e^{B\left(A-\theta^r_{m,i}[n]\right)}\right) \left(\theta_{m,i}[n]-\theta^r_{m,i}[n]\right). \label{exp_angle_taylor_expansion}
\end{align}
Applying the first-order expression of (\ref{exp_angle_taylor_expansion}) into (\ref{eq40add}) yields a convex constraint:
\begin{align}
    &\tilde Y_{m,i}[n]\leq 1+A e^{-B\left(\theta^r_{m,i}[n]-A\right)}+\left(-A B e^{B\left(A-\theta^r_{m,i}[n]\right)}\right)\nonumber\\
    &\;\;\;\;\;\;\times\left(\theta_{m,i}[n]-\theta^r_{m,i}[n]\right), \forall n \in\{ 1,\ldots,N-1\},\nonumber\\
    &\;\;\;\;\;\;\;\;\;\forall m\in\{1\ldots,M\},\forall i\in\{1,\ldots,K\}.\label{eq48}
\end{align}
This new convex constraint (\ref{eq48}) ensures the satisfaction of the constraint (\ref{eq40add}) according to the lower bound relationship in (\ref{eq_imposed_angle_elevation}) and (\ref{exp_angle_taylor_expansion}). 

However, the angle elevation angle constraint (\ref{eq_imposed_angle_elevation}) is non-convex in terms of the flight trajectory variable $\textbf q_m[n]$. In the following, we attempt to convexify this non-convex constraint. According to the fact that $\tan^{-1}\left(\frac{1}{\sqrt{x}}\right)$ is a convex function when $x > 0$, we first define a variable $\tilde U_{m,i}[n]$ which is enforced to satisfy the following constraint:
\begin{align}
    \tilde U_{m,i}[n]\leq\left\|\textbf q_m[n]-\textbf g_i\right\|_2^2. \label{eq_U_relaxation}
\end{align}
Thus, we can derive
\begin{align}
    \frac{180}{\pi}\tan^{-1}\left(\frac{H}{\sqrt{\tilde U_{m,i}[n]}}\right)\geq\frac{180}{\pi}\tan^{-1}\left(\frac{H}{\left\|\textbf q_m[n]-\textbf g_i\right\|_2}\right). \label{eq_tan_U_q_relaxation}
\end{align}
By using (\ref{eq_tan_U_q_relaxation}), the angle elevation angle constraint (\ref{eq_imposed_angle_elevation}) can be replaced by an upper bound constraint:
\begin{align}
    &\theta_{m,i}[n]\geq\frac{180}{\pi}\tan^{-1}\left(\frac{H}{\sqrt{\tilde U_{m,i}[n]}}\right),\forall n \in\{ 1,\ldots,N-1\},\nonumber\\
    &\;\;\;\;\;\;\forall m\in\{1\ldots,M\},\forall i\in\{1,\ldots,K\}.\label{eq50}
\end{align}
Moreover, the constraint (\ref{eq_U_relaxation}) is non-convex in terms of $\textbf q_m[n]$, and we again apply the first-order Taylor expansion to find a lower bound for $\left\|\textbf q_m[n]-\textbf g_i\right\|_2^2$:
\begin{align}
    \left\|\textbf q_m[n]-\textbf g_i\right\|_2^2&\geq\left\|\textbf q_m^r[n]-\textbf g_i\right\|_2^2 \nonumber\\
    &\;\;\;+2(\textbf q^r_m[n]-\textbf g_i)^T(\textbf q_m[n]-\textbf q^r_m[n]), \label{eq_lower_bound_q_distance}
\end{align}where $\textbf q^r_m[n]$ is a given reference point. Hence, the constraint (\ref{eq_U_relaxation}) can be convexified through the lower bound relationship in (\ref{eq_lower_bound_q_distance}), given by
\begin{align}
    &\tilde U_{m,i}[n]\leq\left\|\textbf q_m^r[n]-\textbf g_i\right\|_2^2+2(\textbf q^r_m[n]-\textbf g_i)^T(\textbf q_m[n]-\textbf q^r_m[n]),\nonumber\\
    &\;\;\forall n \in\{ 1,\ldots,N-1\},\forall m\in\{1\ldots,M\},\forall i\in\{1,\ldots,K\},\label{eq52}
\end{align}where the solutions that satisfy this constraint are also in the feasible set of the constraint (\ref{eq_U_relaxation}).

In summary, the convex constraints (\ref{eq45}) and (\ref{eq46}) convexify the constraints (\ref{eq6}) and (\ref{eq39add}) in the subproblem (\textbf{P5}), respectively, while the convex constraints (\ref{eq48}), (\ref{eq50}) and (\ref{eq52}) are able to convexify the constraint (\ref{eq40add}). Accordingly, the UAV flight trajectory subproblem (\textbf{P5}) can be rewritten as 
\begin{align}
\textbf{(P6)}\;&\max_{\{\zeta_q,\textbf{q}_m\left[n\right],\tilde X_{m,i}[n],\tilde Y_{m,i}[n],\theta_{m,i}[n],U_{m,i}[n],\forall m,k,n\}}{\zeta_q}\nonumber\\
  s.t.\;
  &(\ref{eq4}),(\ref{eq5}),(\ref{eq14}),(\ref{eq43}),(\ref{eq45}),(\ref{eq46}),(\ref{eq48}),(\ref{eq50}),(\ref{eq52})\nonumber.
\end{align}
The UAV flight trajectory subproblem (\textbf{P6}) now becomes a solvable convex optimization problem under a given reference point $\textbf{q}_m^r\left[n\right]$, and the optimal value of the UAV trajectory can be iteratively obtained by using the existing solution tool (CVX) {\cite{M. Grant16}} with SCA methods {\cite{M. Razaviyayn14}}. Note that the obtained optimal solution of the subproblem (\textbf{P6}) is a lower bound of the subproblem (\textbf{P4}), since the feasible set of (\textbf{P4}) contains that of (\textbf{P6}).

\subsection{Transmit Power Control Subproblem}
Given the communication association $a_{m,k}[n]$ and the UAV flight trajectory $\textbf q_m[n]$, the transmit power control subproblem for WNs can be obtained from the original optimization problem (\textbf{P2}) by using the epigraph form and introducing an auxiliary variable $\zeta_p$, as follows:
\begin{align}
  \textbf{(P7)}\;&\max_{\{\zeta_p,P_k[n],\forall k,n\}}{\zeta_p}\nonumber\\
  s.t.\;
  &\bar R_k\geq\zeta_p,\forall k\in\{1,...,K\},\label{eq53}\\
  &(\ref{eq17}),(\ref{eq18}) .\nonumber
\end{align}
In this subproblem, the objective function and the constraints (\ref{eq17}) and (\ref{eq18}) are affine and thus convex in $P_k[n]$. However, from (\ref{eq31}), it can be found that the user rate $\bar R_k$ is the sum of concave functions and convex functions, and thus $\bar R_k$ in the constraint (\ref{eq53}) is neither convex nor concave in terms of the variables $P_i[n]$. To solve this problem, we convexify the constraint (\ref{eq53}) by the first-order Taylor expansion. From (\ref{eq31}), it is known that $\check R_{2m,k}[n]$ is a convex function in $P_i[n]$, and we can get the following lower bound relationship:
\begin{align}
&    \check R_{2m,k}[n] \geq -{\log}_2\left(\sum_{i=1,i\neq k}^{K}P_i^r[n]\bar H_{m,i}[n]+\sigma_n^2\right)\nonumber\\
                     &\;\;\;-\sum_{i=1,i\neq k}^{K}\frac{\bar H_{m,i}[n]}{\ln(2)\left(\sum_{j=1,j\neq k}^{K}P^r_j[n]\bar H_{m,j}[n]+\sigma_n^2\right)}\nonumber\\
                     &\;\;\;\times\left(P_i[n]-P^r_i[n]\right) \triangleq \acute R_{m,k}^{lb}[n], \forall n \in\{ 1,\ldots,N-1\},\nonumber\\
&   \;\;\;\;\;\; \;\;\;\;\;\; \;\;\;\;\;\;  \forall m\in\{1\ldots,M\},\forall k\in\{1,\ldots,K\}, \label{eq55}
\end{align}
where $P_i^r[n]$ is a reference point for the Taylor expansion. Accordingly, the user rate $\bar R_k$ in (\ref{eq31}), i.e., the left-hand side of the constraint (\ref{eq53}), can be lower bounded by a concave function, as follows:
\begin{align}
& \bar R_k \geq \sum_{n=0}^{N-1}\sum_{m=1}^{M}a_{m,k}[n]W\Biggl(  \check R_{1m,k}[n]
    +\acute R^{lb}_{m,k}[n]\Biggr), \nonumber \\ 
   &\;\;\;\;\;\;\;\;\;\;\;\;\;\;\;\; \forall k\in\{1,\ldots,K\}.\label{eq57}
\end{align}
Thus, the subproblem (\textbf{P7}) can be transformed into a convex one by replacing the constraint (\ref{eq53}) with the lower bound (\ref{eq57}):
\begin{align}
  \textbf{(P8)}\;&\qquad\qquad\qquad\quad\max_{\{\zeta_p,P_k[n],\forall k,n\}}{\zeta_p}\nonumber\\
  s.t.\;&\sum_{n=0}^{N-1}\sum_{m=1}^{M}a_{m,k}[n]W\Biggl(  \check R_{1m,k}[n]  +\acute R^{lb}_{m,k}[n]\Biggr)\geq \zeta_p, \nonumber\\
 & \;\;\;\;\;\;\;\;\;\;\;\;\;\;\;\;\;\;\;\;\;\;\;\;\;\; \forall k\in\{1,\ldots,K\},\label{eq_power_constraint_first_order}\\
        &(\ref{eq17}),(\ref{eq18}).\nonumber
\end{align}
Note that the optimization result of (\textbf{P8}) is a lower bound of (\textbf{P7}). Given a reference point $P_i^r[n]$, the transmit power control subproblem (\textbf{P8}) can be iteratively solved by the optimization tool, e.g., CVX, with SCA methods.

After the above transformation, the offline joint design problem (\textbf{P2}) can be alternatively solved by the three convex subproblems (\textbf{P3}), (\textbf{P6}) and (\textbf{P8}), and the proposed offline algorithm is summarized in Algorithm 1. For convenience, we assume that the superscript $r$ of the reference points $\textbf q_m^r[n]$, $P_k^r[n]$, $a_{m,k}^r[n]$ also refers to the iteration index in the successive convex optimization. We first initialize $r=0$, and initialize $\textbf q_m^r[n]$, $P_k^r[n]$, and $a_{m,k}^r[n]$. Let $\epsilon >0$ be a threshold for the stopping criterion.
In the communication association subproblem (\textbf{P3}), the optimal solution to the communication association $a_{m,k}^{\star}[n]$ is obtained under the given values of UAV trajectory $\textbf q_m^r[n]$ and power control $P_k^r[n]$, and then we update $a_{m,k}^{r+1}[n]= a_{m,k}^{\star}[n]$. Afterwards, the UAV flight trajectory subproblem (\textbf{P6}) is solved under the given values of communication association $a_{m,k}^{r+1}[n]$ and power control $P_k^r[n]$ through the successive convex optimization. In the inner loop of the successive convex optimization, we update $\textbf q_m^r[n]$ with the last result of $\textbf q_m^\star[n]$, and subsequently renew the related constraints that contain $\textbf q_m^r[n]$. Now the new optimal solution $\textbf q_m^\star[n]$ can be obtained by solving the subproblem (\textbf{P6}). The steps in the loop are repeated until the convergence is achieved, and we update $\textbf q_m^{r+1}[n]=\textbf q_m^\star[n]$. Finally, the power control subproblem (\textbf{P8}) is solved under the given values of UAV trajectory $\textbf q_m^{r+1}[n]$ and communication association $a_{m,k}^{r+1}[n]$, where we update $P_k^r[n]$ and (\ref{eq_power_constraint_first_order}) with the last result of $P_k^\star[n]$ in the inner loop of the successive convex optimization. We repeat these steps in the inner loop until the convergence is attained, and update $P_k^{r+1}[n]=P_k^\star[n]$. We then update the iteration number of the outer loop for the next round. The outer loop is stopped when the increase of the objective value is smaller than the preset threshold $\epsilon$.

\begin{algorithm}[t]
\caption{Offline Alternative Optimization Algorithm for Problem (\textbf{P2})}\label{offline_algorithm}
\begin{algorithmic}[1]
\State Initialize {\{$a_{m,k}^r[n], \textbf q_m^r[n], P_k^r[n],\forall m,k,n$\}, $r=0$}
\State Set $\epsilon >0$
\While{Increase of the objective value $<\epsilon$}
    \State \multiline{For given $\{\textbf q_m^r[n],P_k^r[n],\forall m,k,n\},$ find the optimal communication association solution of the problem (\textbf{P3}) as $a_{m,k}^{r+1}[n].$}
    \Statex
    \Statex
    \State For given $\{ a_{m,k}^{r+1}[n],\textbf q_m^r[n], P_k^r[n],\forall m,k,n\}.$
    \Repeat{ (SCA for solving problem (\textbf{P6}))}
        \State Update $\textbf q_m^r[n]$ using the last result $\textbf q_m^\star[n]$.
        \State Update (\ref{eq43}), (\ref{eq45}), (\ref{eq46}), (\ref{eq48}) and (\ref{eq52}) using $\textbf q_m^r[n]$.
        \State \multiline{Find the new optimal solution $\textbf q_m^\star[n]$ by solving the problem (\textbf{P6}).}       
    \Until convergence to the optimal UAV flight trajectory
    \Statey solution $\textbf q_m^\star[n]$
    \State Set $\textbf q_m^{r+1}[n] \gets \textbf q_m^\star[n].$
    \Statex
    \State For given $\{a_{m,k}^{r+1}[n], \textbf q_m^{r+1}[n], P_k^r[n],\forall m,k,n\}.$
        \Repeat{ (SCA for solving problem (\textbf{P8}))}
        \State Update $P_k^r[n]$ using the last result $P_k^\star[n]$.
        \State Update (\ref{eq_power_constraint_first_order}) using $P_k^r[n]$.
        \State \multiline{Find the new optimal solution $P_k^\star[n]$ by solving the problem (\textbf{P8}).} 
    \Until {convergence to the optimal power control strat-}
    \Statey egy solution  $P_k^\star[n]$
    \State Set $P_k^{r+1}[n] \gets P_k^\star[n].$
\State Update $r \gets r+1.$
\EndWhile
\end{algorithmic}
\end{algorithm}

\subsection{Convergence of the Offline Algorithm}\label{Convergence_analysis}
The convergence of the proposed offline algorithm for UAV flight trajectory, WN communication association and power control is analyzed as follows. Let $f\left(a_{m,k}[n],\textbf q_m[n],P_k[n]\right)$ be the objective function of the original problem (\textbf{P2}). First, in the communication association subproblem (\textbf{P3}), by fixing $\left\{\textbf{q}_m^r\left[n\right], P_ k^r\left[n\right]\right\}$ to obtain the optimal solution of the communication association $a_{m,k}^{r+1}\left[n\right]$, it results in
\begin{align}
    &f\left(a_{m,k}^r[n],\textbf{q}_m^r\left[n\right],P_k^r[n]\right)\leq 
    f\left(a_{m,k}^{r+1}[n],\textbf{q}_m^r\left[n\right],P_k^r[n]\right).\label{eq59}
\end{align}

Next we discuss the convergence of the inner loop of the UAV flight trajectory subproblem (\textbf{P6}). Given $\left\{a_{m,k}^{r+1}\left[n\right], \textbf q_m^r[n], P_k^r\left[n\right]\right\}$, the subproblem is optimized by the successive convex optimization to obtain the trajectory $\textbf{q}_m^{r+1}\left[n\right]$. In the $i$th inner loop of the successive convex optimization, we assume that $\textbf{q}_m^{\left(i\right)\star}\left[n\right]$ is the obtained solution with respect to the given first-order Taylor expansion reference point $\textbf{q}_m^{\left(i\right)}\left[n\right]$. Further, denote $\zeta_{\textbf{q}_m^{\left(i\right)}\left[n\right]}\left(a_{m,k}^{r+1}[n],\textbf{q}_m^{\left(i\right)\star}\left[n\right],P_k^r[n]\right)$ as the worst sum rate for the obtained solution $\textbf{q}_m^{\left(i\right)\star}\left[n\right]$ at the reference point $\textbf{q}_m^{\left(i\right)}\left[n\right]$ in the subproblem (\textbf{P6}). Then we have 
\begin{align}
& \zeta_{\textbf{q}_m^{\left(i\right)}\left[n\right]}\left(a_{m,k}^{r+1}[n],\textbf{q}_m^{\left(i\right)\star}\left[n\right],P_k^r[n]\right)\nonumber\\
& \leq f\left(a_{m,k}^{r+1}[n],\textbf{q}_m^{\left(i\right)\star}\left[n\right],P_k^r[n]\right) \nonumber\\
& = \zeta_{\textbf{q}_m^{\left(i\right)\star}\left[n\right]}\left(a_{m,k}^{r+1}[n],\textbf{q}_m^{\left(i\right)\star}\left[n\right],P_k^r[n]\right) \nonumber\\
& \leq  \zeta_{\textbf{q}_m^{\left(i+1\right)}\left[n\right]}\left(a_{m,k}^{r+1}[n],\textbf{q}_m^{\left(i+1\right)\star}\left[n\right],P_k^r[n]\right) ,  \label{eq59_new}
\end{align}where the first inequality comes from the lower bound relationship in (\ref{eq42}), and the second equality is because the equality of the lower bound relationship holds at the tangent point. Moreover, the third inequality is due to the fact that we set $\textbf{q}_m^{\left(i+1\right)}\left[n\right]= \textbf{q}_m^{\left(i\right)\star}\left[n\right]$ to update the reference point for the next inner iteration according to the SCA and obtain the corresponding optimal solution $\textbf{q}_m^{\left(i+1\right)\star}\left[n\right]$. Hence it concludes that the worst sum rate performance can be monotonically increased in the inner loop of the UAV flight trajectory subproblem (\textbf{P6}), and we can get the following relationship for UAV flight trajectory in the $r$th outer loop:
\begin{align}
& f\left(a_{m,k}^{r+1}[n],\textbf{q}_m^{r}\left[n\right],P_k^r[n]\right) \leq  f\left(a_{m,k}^{r+1}[n],\textbf{q}_m^{r+1}\left[n\right],P_k^r[n]\right) ,  \label{eq59_new2}
\end{align}

Likewise, we can apply the similar derivation of (\ref{eq59_new}) to prove that the sum rate performance can be monotonically increased in the inner loop of the SCA for the power control subproblem (\textbf{P8}). Thus, under the given $\left\{a_{m,k}^{r+1}\left[n\right], \textbf q_m^{r+1}[n], P_k^r\left[n\right]\right\}$, it implies that 
\begin{align}
& \zeta_{P_k^{\left(i\right)}[n]}\left(a_{m,k}^{r+1}[n],\textbf{q}_m^{r+1}\left[n\right],P_k^{ \left(i\right)\star}[n]\right) \nonumber\\
&\leq  \zeta_{P_k^{\left(i+1\right)}[n]}\left(a_{m,k}^{r+1}[n],\textbf{q}_m^{r+1}\left[n\right],P_k^{\left(i+1\right)\star}[n]\right) ,  \label{eq59_new3}
\end{align}where $\zeta_{P_k^{\left(i\right)}[n]}\left(a_{m,k}^{r+1}[n],\textbf{q}_m^{r+1}\left[n\right],P_k^{ \left(i\right)\star}[n]\right)$ is referred to as the worst sum rate for the obtained solution $P_k^{ \left(i\right)\star}[n]$ at the reference point $P_k^{\left(i\right)}[n]$ in the $i$th inner loop of the power control subproblem (\textbf{P8}). As a result, it implies that
\begin{align}
& f\left(a_{m,k}^{r+1}[n],\textbf{q}_m^{r+1}\left[n\right],P_k^r[n]\right) \leq \nonumber \\
& \;\;\;\;\;\;\;\;\;\;\;\;\;\;\;\;\;\;\; f\left(a_{m,k}^{r+1}[n],\textbf{q}_m^{r+1}\left[n\right],P_k^{r+1}[n]\right) . \label{eq59_new4}
\end{align}
Due to the alternative optimization for the three subproblems, it can be concluded from (\ref{eq59}), (\ref{eq59_new2}) and (\ref{eq59_new4}) that the performance of the proposed algorithm can be monotonically increased and the local optimal solution of the original problem (\textbf{P2}) can be found until the algorithm is converged.

\section{Convex-Assisted Reinforcement Learning (CARL)}\label{Online}
In this section, we propose a CARL approach, in which a flight corridor is marked out, based on the proposed offline design in Algorithm 1, to guide the UAV flight actions in an online learning fashion. Besides, the actions of the RL agents can be restricted and the number of system states can be efficiently reduced through the assistance of offline design. The system states, actions and real-time rewards are designed and presented in the following.

\subsection{System States}
Let ${\textbf S}= {\textbf L} \times {\textbf H}$ be a two-tuple state space, where $\times$ denotes the Cartesian product, ${\textbf L}$ represents a UAV location state set, and ${\textbf H}$ is a channel state set. Moreover, we define a random variable ${\textbf s}= \left({\textbf l}, {\textbf h} \right) \in {\textbf S}$ as the system stochastic state of the Markov decision process. It is assumed that the UAV location and channel remain steady during the time interval $\delta_D$. The detailed definition of each state is specified in the following.

$\bullet\;\textbf{UAV location state}$: Assume that a square UAV coverage region is quantized into $N_L$ lattice points with a scale of $\Delta$ (the minimum distance between two adjacent horizontal/vertical lattice points), and the UAV location state space is defined as ${\textbf L}= {\textbf L}_x \times {\textbf L}_y$, where ${\textbf L}_x= {\textbf L}_y =\left\{0,1,\ldots,\sqrt{N_L}-1\right\}$. When the location state of the $m$th UAV at the time instant $n$ is given as ${\textbf l}_m^n=\left[{\widetilde{x}}_m^n,{\widetilde{y}}_m^n\right] \in {\textbf L}$, it means that the horizontal coordinate of the $m$th UAV at the time instant $n$ is
\begin{align}\label{eq_q_coordinate}
   {\textbf q}_m\left[n\right]=\left[\Delta\times\widetilde{x}_m^n+\frac{\Delta}{2},\Delta\times\widetilde{y}_m^n+\frac{\Delta}{2}\right]^T. 
\end{align}

Let ${{\textbf q}}^\star_m[n]$ be the offline horizontal flight path of the $m$th UAV at the time instant $n$, obtained by the offline convex optimization. By using the offline UAV flight trajectory ${{\textbf q}}^\star_m[n]$, the offline trajectory-assisted location state of $m$th UAV at the time instant $n$ is given as
\begin{align}
    {\textbf l}_m^n\in\hat{{\textbf L}}_m=\{{\textbf l}_m^n\in{\textbf L} | \|{\textbf q}_m[n]-{{\textbf q}}^\star_m[n]\|_2\leq D_F,\forall n\},\label{carl_uav_location}
\end{align}
where $D_F$ is the flight corridor width, and the distance between the real UAV location ${\textbf q}_m[n]$ and the offline UAV trajectory ${{\textbf q}}^\star_m[n]$ at the time instant $n$ is smaller than the preset corridor width $D_F$. Then the location state of all UAVs at the time instant $n$ can be expressed as
\begin{align}
    {\textbf l}^n=[{\textbf l}_1^n,\ldots,{\textbf l}_M^n]\in \hat{{\textbf L}}_1\times\hat{{\textbf L}}_2\times\cdots\times\hat{{\textbf L}}_M.\label{eq271}
\end{align}

$\bullet\;\textbf{Channel state}$: With the assistance of the offline UAV flight trajectory results, the average LOS/NLOS channel strength ${\bar{H}}_{m,k}\left[n\right]$ between the $k$th WN and the $m$th UAV along the flight path at each time can be calculated by (\ref{average_H}) to simplify the quantization of channel states. Let $\epsilon_H>0$ be a threshold value for channel quantization. Considering the fact that the channel strength in the vicinity of the flight corridor is related to the channel strength of the offline flight path, the channel state of the $m$th UAV can be defined as:
\begin{align}
h_{m,k}^n=\left\{
\begin{aligned}
     &0,H_{m,k}[n]<\bar{H}_{m,k}[n]-\epsilon_H;\\
     &1,\bar{H}_{m,k}[n]-\epsilon_H\leq H_{m,k}[n]\leq \bar{H}_{m,k}[n]+\epsilon_H;\\
     &2,H_{m,k}[n]>\bar{H}_{m,k}[n]+\epsilon_H.\label{CARL_channel_state}
\end{aligned}
\right.
\end{align}
By using this relative quantization method, the changes in channel strength can be better described than the direct quantization of channel strength, and the number of quantization levels can be greatly reduced. As such, the channel state of all UAVs and nodes at the time instant $n$ is defined as:
\begin{align}
    {\textbf h}^n=\left[h_{1,1}^n,\ldots,h_{1,K}^n,\ldots,h_{M,1}^n,\ldots,h_{M,K}^n\right]\in{\hat{\textbf H}}^{M\times K},
\end{align}where ${\hat{\textbf H}}=\left\{0, 1, 2\right\}$.

\subsection{System Actions}
Based on the states of UAV locations and channels, we can decide the UAV flight directions, the association between UAVs and WNs, and uplink transmit power. Define ${\textbf A}= {\textbf A}_F\times {\textbf A}_C$ as the action space, where ${\textbf A}_F=\left\{0,1,2,3,4\right\}$ represents the set of five flight direction actions, ${\textbf A}_C=\left\{0,1,\ldots, N_pK \right\}$ is the set of communication (including association and transmission) actions, and $N_p$ is the number of transmit power levels available for each WN. Denote $a_{m,F}^n$ and $a_{m,C}^n$ as the UAV flight direction action and the communication action of the $m$th UAV at the time instant $n$, respectively. Hence, the concatenated action of all UAVs that can be chosen at the time instant $n$ is given as ${\textbf a}^n=\left[ {\textbf a}_F^n,{\textbf a}_C^n\right]$, where ${\textbf a}_F^n= \left[a_{1,F}^n, \ldots ,a_{M,F}^n\right]$ and ${\textbf a}_C^n= \left[a_{1,C}^n, \ldots ,a_{M,C}^n\right]$. The details of the actions are specified below. For the flight direction action ${\textbf a}_F^n \in {\textbf A}_F^M$, the action $a_{m,F}^n=0$ means that the $m$th UAV is hovering at the time instant $n$, while the action $a_{m,F}^n=1, 2, 3, 4$ means that the $m$th UAV moves to the left, right, forward or backward with a predefined distance $\Delta$, respectively. By flight corridor restrictions, for a given system state ${\textbf s}^n=\left[{\textbf l}^n,{\textbf h}^n\right]$ at the time instant $n$, the UAV flight action is partially constrained by ${\textbf a}_F^n \in{\hat{{\textbf A}}}_{F}^n=\{ {\textbf a}_F^n \in {\textbf A}_F^M \mid \|{\textbf q}_m[n+1]-{{\textbf q}}^\star_m[n+1]\|_2\leq D_F,\forall m, {\text{and}}\ {\textbf q}_m[n+1]\neq {\textbf q}_j[n+1],\forall m\neq j\}$. In this setting, the actions that can ensure that the UAVs' position at the next time being within the range of the flight path and avoid the collision among the multiple UAVs are regarded as legal actions.

Let ${\bar{\textbf A}}_C^n$ be the affordable communication action set that satisfies the battery constraints and the communication association constraints at the time instant $n$, i.e., ${\textbf a}_C^n \in {\bar{\textbf A}}_C^n \subseteq {\textbf A}_C^M$, and it can be constructed as follows. Assume that $p_k^n$ and $b_k^n$ are the transmit power level and battery power of the $k$th WN at the time instant $n$, respectively, and $p_k^n \in {\textbf P}=\{{1,2,...,N_p}\}$. For $p_k^n$, it means that the energy expenditure of the $k$th WN for uplink data transmission is $p_k^n E_p$ at the time instant $n$, for $k=1,\ldots,K$ and $p_k^n=1,\ldots,N_p$, where $E_p$ is the basic energy unit with respect to the transmit power level $p_k^n=1$. If the $k$th WN is associated with the $m$th UAV with the uplink transmit power level $p_k^n$ at the time instant $n$, the action $a_{m,C}^n=(k-1)N_p+p_k^n$ is performed. On the other hand, the action $a_{m,C}^n=0$ indicates that the $m$th UAV is not connected to any WNs at the time $n$. In addition, the communication action $a_{m,C}^n= (k-1)N_p+p_k^n$ is constrained by the available battery power of the WNs and the UAV-WN association. For the battery power constraint, the action $a_{m,C}^n= (k-1)N_p+p_k^n$ is eligible to be performed only if the energy expenditure $p_k^n E_p$ is less than the battery power $b_k^n$ of the $k$th WN, i.e., $p_k^n E_p\leq b_k^n$. For the UAV-WN association, if $a_{m,C}^n \neq 0$, the actions of the other UAVs are limited by $a_{j,C}^n \neq a_{m,C}^n$, for any $j\neq m$. This is because each WN can be only served by one UAV during a time interval.
{
\subsection{State Transition}
After all UAVs perform an action ${\textbf a}^n$ at the time $n$, the current system state ${\textbf s}^n$ is transited to the next system state ${\textbf s}^{n+1}$, and the state transitions are elaborated as follows.

$\bullet\;\textbf{UAV location state}$: Assume the current UAV location state is ${\textbf l}_{m}^{n}=[\tilde x_m^{n},\tilde y_m^{n}]$, and the action ${\textbf a}_m^n= \left[a_{m,F}^n, a_{m,C}^n\right]$ is performed. The next UAV location state becomes
\begin{align}
{\textbf l}_{m}^{n+1}=[\tilde x_m^{n+1},\tilde y_m^{n+1}]=\left\{
\begin{aligned}
     &[\tilde x_m^n,\tilde y_m^n],\;a_{m,F}^n=0;\\
     &[\tilde x_m^n-1,\tilde y_m^n],\;a_{m,F}^n=1;\\
     &[\tilde x_m^n+1,\tilde y_m^n],\;a_{m,F}^n=2;\\
     &[\tilde x_m^n,\tilde y_m^n+1],\;a_{m,F}^n=3;\\
     &[\tilde x_m^n,\tilde y_m^n-1],\;a_{m,F}^n=4.\label{eq24}
\end{aligned}
\right.
\end{align}
From (\ref{eq3}) and (\ref{eq_q_coordinate}), the corresponding change of the m$th$ UAV coordinates ${\textbf q}_{m}\left[n\right]$ with respect to the action ${\textbf a}_m^n= \left[a_{m,F}^n, a_{m,C}^n\right]$ is thus given by
\begin{align}
{\textbf q}_{m}\left[n+1\right]=\left\{
\begin{aligned}
     &[x_m\left[n\right],y_m\left[n\right]]^T,\;a_{m,F}^n=0;\\
     &[x_m\left[n\right]-\Delta,y_m\left[n\right]]^T,\;a_{m,F}^n=1;\\
     &[x_m\left[n\right]+\Delta,y_m\left[n\right]]^T,\;a_{m,F}^n=2;\\
     &[x_m\left[n\right],y_m\left[n\right]+\Delta]^T,\;a_{m,F}^n=3;\\
     &[x_m\left[n\right],y_m\left[n\right]-\Delta]^T,\;a_{m,F}^n=4.\label{eq23}
\end{aligned}
\right.
\end{align}
Note that if the UAV selects the communication action, its coordinates and location state remain unchanged, i.e., $a_{m,F}^n=0$ if $a_{m,C}^n \neq 0$.

$\bullet\;\textbf{Channel state}$ : The channel state $h_{m,k}^{n+1}$ between the $m$th UAV and the $k$th WN at the time instant $(n+1)$ is decided by quantizing the observed channel gain $H_{m,k}[n+1]$ according to (\ref{CARL_channel_state}), where the instantaneous channel $H_{m,k}[n+1]$ depends on the positions of the UAV and the WN. Based on the channel model in (\ref{eq7})--(\ref{eq10}), the large-scale channel components of $H_{m,k}[n+1]$ and $H_{m,k}[n]$ are correlated with each other.
}
\subsection{Reward Design}

During the entire mission time $n=0, \ldots, N-1$, the system receives a negative reward $C_P<0$ as a penalty if the UAV flight action set $\hat{{\textbf A}}_{F}^n$ constrained by the offline UAV trajectory is empty, i.e., the next position of any UAV is beyond the flight corridor. On the contrary, if the UAV flight action set $\hat{{\textbf A}}_{F}^n$ is non-empty, we discuss the reward in two cases: 1) $n=0,\ldots,N-2$ and 2) $n=N-1$. For $n=0,\ldots,N-2$, the system can get a reward $\rho({\textbf s}^n, {\textbf a}^n)$ at a state ${\textbf s}^n$ with respect to a performed action ${\textbf a}^n \in \hat{{\textbf A}}_{F}^n$. At $n=N-1$, all the UAVs are required to return to the starting point when the action is performed. Hence, the system gets a negative reward $C_P<0$ if any UAV does not comply with this constraint, i.e., ${\textbf l}_m^N\neq {\textbf l}_m^0$ for any $m$; otherwise, the system can get a reward $\rho({\textbf s}^n, {\textbf a}^n)$ when all the UAVs can successfully flight back to the starting point. The system reward can be summarized as
\begin{align}\label{Reward}
&R^n({\textbf s}^n,{\textbf a}^n)= \nonumber\\
&\left\{
\begin{aligned}
& C_P, n=\{0,...,N-1\}, {\text{and}} \ \hat{{\textbf A}}_{F}^n= \varnothing;\\    
    & \rho({\textbf s}^n, {\textbf a}^n), n=\{0,...,N-2\}, {\text{and}} \ \hat{{\textbf A}}_{F}^n\neq \varnothing;\\
    & C_P, n=N-1, \hat{{\textbf A}}_{F}^n\neq \varnothing, {\text{and}} \ {\textbf l}_m^N \neq {\textbf l}_m^0 \;{\text{for any}} \ m; \\
    & \rho({\textbf s}^n, {\textbf a}^n), n=N-1, \hat{{\textbf A}}_{F}^n\neq \varnothing, {\text{and}} \ {\textbf l}_m^N= {\textbf l}_m^0, \forall m.
\end{aligned}
\right.
\end{align}
The design of the reward $\rho({\textbf s}^n, {\textbf a}^n)$ is related to the goal of the original design problem \textbf{(P1)} which attempts to maximize the worst sum rate of WNs. Here, we consider three kinds of reward designs for $\rho({\textbf s}^n, {\textbf a}^n)$, namely worst accumulated sum rate among users (WASR), difference of worst accumulated sum rate among users at two adjacent time slots (DWASR) \cite{U. F. Siddiqi20}, and instantaneous average sum rate of users (ISR), which are in turn defined as follows:
\begin{align}
    &  \rho_{_{WASR}}({\textbf s}^n, {\textbf a}^n)={\min_{k=1,\ldots,K}}\Biggl(Z_k^{n-1}+R_{k,n}\Biggr);\label{WASR}\\
    & \rho_{_{DWASR}}({\textbf s}^n, {\textbf a}^n)={\min_{k=1,\ldots,K}}\Biggl(Z_k^{n-1}+R_{k,n}\Biggr) \nonumber\\
    & \;\;\;\;\;\;\;\;\;\;\;\;\;\;\;\;\;\;\;\;\;\;\;\;\;\;\; \;\;\;\;\;\;\;\;\;\;\; -\min_{k=1,\ldots,K}Z_k^{n-1};\label{D_WASR}\\
    & \rho_{_{ISR}}({\textbf s}^n, {\textbf a}^n)=\frac{1}{K}\sum_{k=1}^{K} R_{k,n},\label{ISR}
\end{align}
where $Z_k^{n-1}$ represents the accumulated sum data rate for the $k$th WN up to the $(n-1)$th time, and $R_{k,n}$ is calculated by (\ref{eq16}) with respect to the action ${\textbf a}^n$ and the state ${\textbf s}^n$ at the time instant $n$. The idea of the WASR method is to simply use the worst accumulated sum rate of the WNs as the reward according to the objective function of the optimization problem (\textbf{P1}). The DWASR method is conceptualized in accordance with \cite{U. F. Siddiqi20}, for which the difference of the worst accumulated sum rate at two adjacent time slots $n$ and $n-1$ is computed as the reward. The ISR method is proposed in this paper to take the instantaneous average sum rate of all WNs as the reward. 

In Algorithm 2, we summarize the procedures of the proposed CARL algorithm. The algorithm starts from the offline UAV flight trajectory ${\textbf q}^\star_m[n]$ obtained by using Algorithm \ref{offline_algorithm}. Let $N_e$ be the number of episodes. We develop the flight corridor based on ${\textbf q}^\star_m[n]$ and find the legal UAV flight actions ${\hat{{\textbf A}}}_F^n$ under the flight corridor constraint and the legal communication actions ${\bar{{\textbf A}}}_{C}^n$ under the node battery constraint. If the set of legal UAV flight actions ${\hat{{\textbf A}}}_F^n$ is non-empty, action selection is performed based on a decaying $\epsilon$-greedy approach {\cite{H. Afifi20}}, which means that $\epsilon$ decreases slowly as training proceeds. After performing an action, the corresponding reward is received according to (\ref{Reward}), which is used to update the Q-table. Note that if the set of legal UAV flight actions ${\hat{{\textbf A}}}_F^{n+1}$ at the next time instant $n+1$ is empty or the mission reaches to the final time instant $n=N-1$, the current episode round will be terminated after performing the action, and the above procedures are repeated for the next episode round.

\begin{algorithm}[t]
\caption{Convex-assisted Reinforcement Learning (CARL) Algorithm}
\begin{algorithmic}[1]
\State Obtain the UAVs' trajectory ${\textbf q}^\star_m[n]$ through the offline convex optimization in Algorithm 1.
\State Initialize Q-value $Q({\textbf s}, {\textbf a})=0$ for all ${\textbf s}$ and ${\textbf a}$.
\For {$j=0$ to $N_e$ ($N_e$ is the number of episodes)}
    \State Set $n=0$
    \While{$n<N$}
        \State \multiline{Find available action sets ${\hat{{\textbf A}}}_F^n$ and ${\bar{{\textbf A}}}_{C}^n$ that meet the flight corridor limits as well as the battery constraints.\\}
        \If{${\hat{{\textbf A}}}_F^n \neq \varnothing$} 
             \State Draw a random number $\epsilon_0$ between 0 and 1.
             \If{$\epsilon_0 < \epsilon$}
               \State \multiline{Perform an action ${\textbf a}^n= ({\textbf a}_F^n, {\textbf a}_C^n)\in {\hat{{\textbf A}}}_F^n \times {\bar{{\textbf A}}}_{C}^n$ randomly.}
            \Else
               \State \multiline{Perform the action ${\textbf a}^n$ with the highest Q-value, i.e., ${\textbf a}^n=$ $\arg\max_{{\textbf a}}$ $Q({\textbf s}^n, {\textbf a})$.}
            \EndIf
        \EndIf
        \If{\;(${\hat{{\textbf A}}}_F^{n+1} = \varnothing$ or $n=N-1$) }
            \State \multiline{Update Q-table as $Q(\textbf s^n,\textbf a^n)\gets(1-a)\cdot Q(\textbf s^n,\textbf a^n)+a \cdot R^n({\textbf s}^n,{\textbf a}^n)$}    
            \State $\textbf{break};$
       \Else
           \State \multiline{Update Q-table as \\$Q(\textbf s^n,\textbf a^n)\gets(1-a)\cdot Q(\textbf s^n,\textbf a^n)+a\cdot \left(R^n({\textbf s}^n,{\textbf a}^n)+\gamma\max\limits_{{\textbf a}}Q(\textbf s^{n+1},\textbf a)\right)$}
           \State Update $n\gets n+1$
       \EndIf
    \EndWhile
\EndFor
\end{algorithmic}
\end{algorithm}
\section{Numerical Simulation}\label{Simulation}
\subsection{Simulation Settings}
For simulation settings, we consider two UAVs flying over a $600$ m $\times$ $600$ m area at an altitude of $150$ m with a mission period time of $T_s= 100$ minutes, where the total number of time slots is set to 100. The initial positions of the first and the second UAVs are $[0,300,150]$ m and $[600,300,0]$ m, respectively. The maximum speed limit of the UAVs is $1$ m/sec, and the safety distance between any two UAVs is $100$ m. The system has a carrier frequency of $2.4$ GHz with a transmission bandwidth of $5$ MHz. The solar panel size of the wireless node is $10$ cm $\times$ $10$ cm. The solar EH data is obtained from NREL website \cite{NREL}, and we use 16 years of monitoring data at Elizabeth State University from $1997$ to $2012$, in which the average solar irradiance profiles over days in the morning (from $10:40$ am to $12:20$ pm and afternoon (from $12:20$ pm to $14:00$ pm) are depicted in Fig. \ref{fig_EH_profile}. The environmental parameters for the channel models are set to $A= 9.61$, $B= 0.1592$, $\eta_{LOS}= 1$ dB, and $\eta_{NLOS}= 20$ dB \cite{J. Holis08}. The stopping criterion in the proposed offline method is given as $\epsilon= 10^{-4}$. For the CARL, the parameters for the quantization of system states are given by $\Delta= 60$ m and $\epsilon_H= 5$ dB. Moreover, we set the discount factor $\gamma=0.5$ and the penalty $C_P=-10^3$, and the decaying learning rate and decaying $\epsilon$-greedy are adopted with $[a_{\max},a_{\min}]=[0.9,0.3]$ and $[\epsilon_{\max},\epsilon_{\min}]=[0.9,0.1]$\footnote{The learning rate is given as $a=(a_{\max}-a_{\min})\times \max(\frac{N_e-n_{step}}{N_e},0)+a_{\min}$, where $n_{step}$ is the number of learning steps so far. The same method is applied for the decaying $\epsilon$-greedy.}. The width of the flight corridor is set to $D_F=90$ m in the morning and $D_F=105$ m in the afternoon. The values of the battery capacity $B_{\max}$, noise power $\sigma_n^2$, training number $N_e$ are set to $1500$ J, $-80$ dBm and $2\times10^6$, respectively. The above parameters are used as default settings, except as otherwise stated.

For the offline designs, some heuristic methods are considered for performance comparison. An exhaustive power control method {\cite{M. Ku15}} is applied, in which each WN exhausts the available energy at each time slot for transmit power control. For the UAV trajectory, three flight plans are considered for the two UAVs: (1) uncrossed circles (UC), (2) crossed circles (CC), (3) straight lines and circles (SLC), as shown in Fig. \ref{fig_heuristic_fly}. For the communication association, the nearest user association method is applied {\cite{R. Amer20}}, and the color of the line in Fig. \ref{fig_heuristic_fly} represents the closest association node. 

For the online design, a conventional RL method is included for performance comparison, in which the UAV location state is defined as in (\ref{eq_q_coordinate}) and the channel state is quantized into three states with the threshold $\left[-100, -90\right]$ (dB). For the conventional RL, the system receives a reward $R^n({\textbf s}^n,{\textbf a}^n)$:
\begin{align}\label{Convential_Reward}
&R^n({\textbf s}^n,{\textbf a}^n)= \nonumber\\
&\left\{
\begin{aligned}
    & f\left( R_{k,n}\right)-\mu_n\bigg(\sum_{m=1}^M\|{\textbf l}_{m}^{n+1}-{\textbf l}_{m}^0\|_2\bigg),  n=\{0,...,N-1\};\\
    & C_P,\; n=N-1, {\textbf l}_m^N \neq {\textbf l}_m^0 \;{\text{for any}} \ m, \\
\end{aligned}
\right.
\end{align}
where {$f\left( R_{k,n}\right)$ is a function of user rates $R_{k,n}$}, the coefficient $\mu_n=10^{-4}n$ is positive and increased with the time index $n$, and it is applied before the total distance metric to force the UAVs to learn to return the starting position at the end of the mission. Note that at $n=N-1$, all the UAVs must comply with the constraint of returning to the starting point at the end of the mission; otherwise, the system receives a negative penalty $C_P=-10^3$.

\begin{figure}[htbp]
\centering
\includegraphics[width=0.48\textwidth]{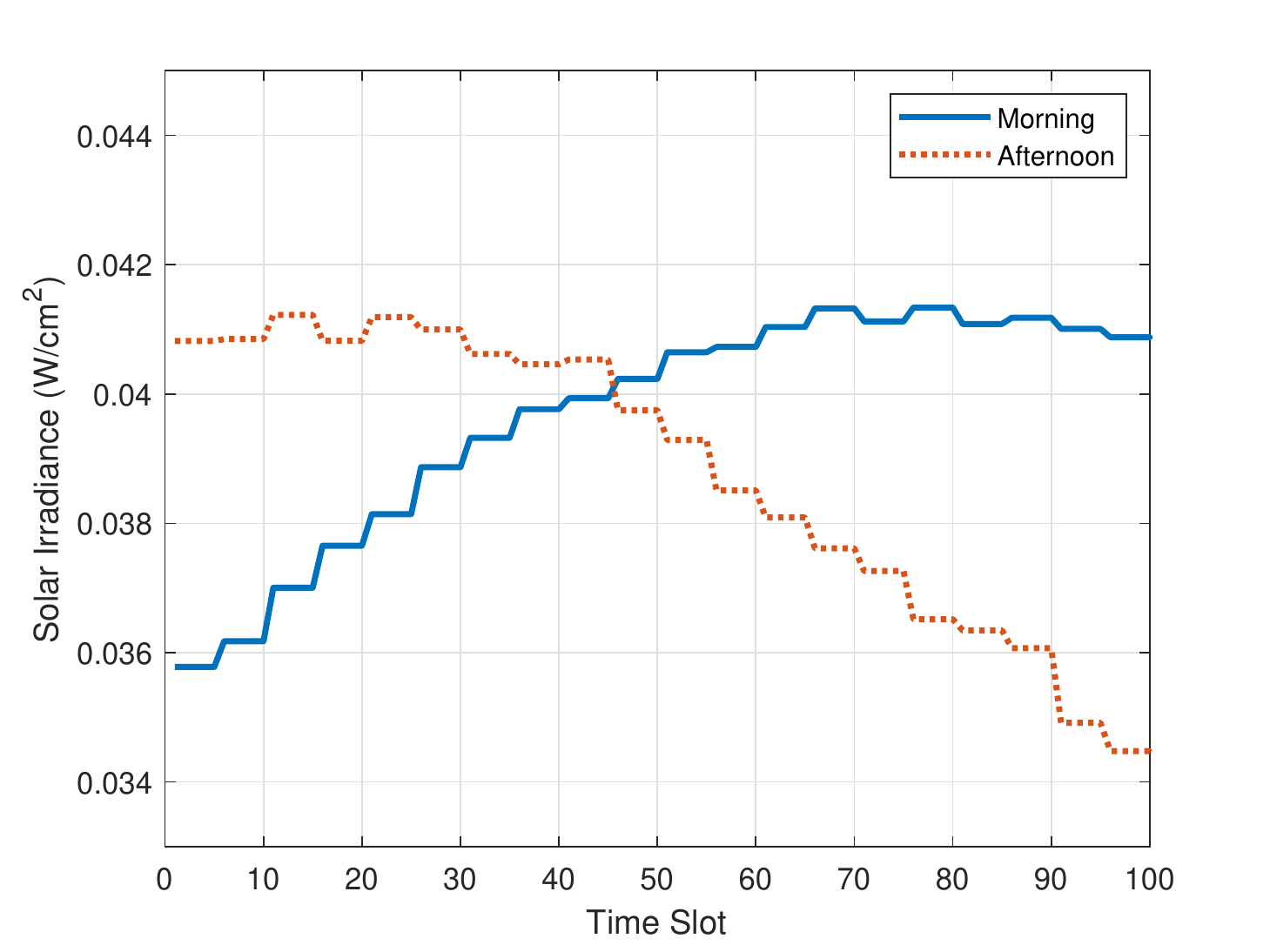}
\caption{Average solar irradiance profiles in the morning and afternoon at the solar site of Elizabeth State University from $1997$ to $2012$.}
\label{fig_EH_profile}
\end{figure}

\begin{figure*}[htbp]
\centering 
\subfigure[{Uncrossed circles (UC)}]{
\label{Fig.sub.1}
\includegraphics[width=0.32\linewidth]{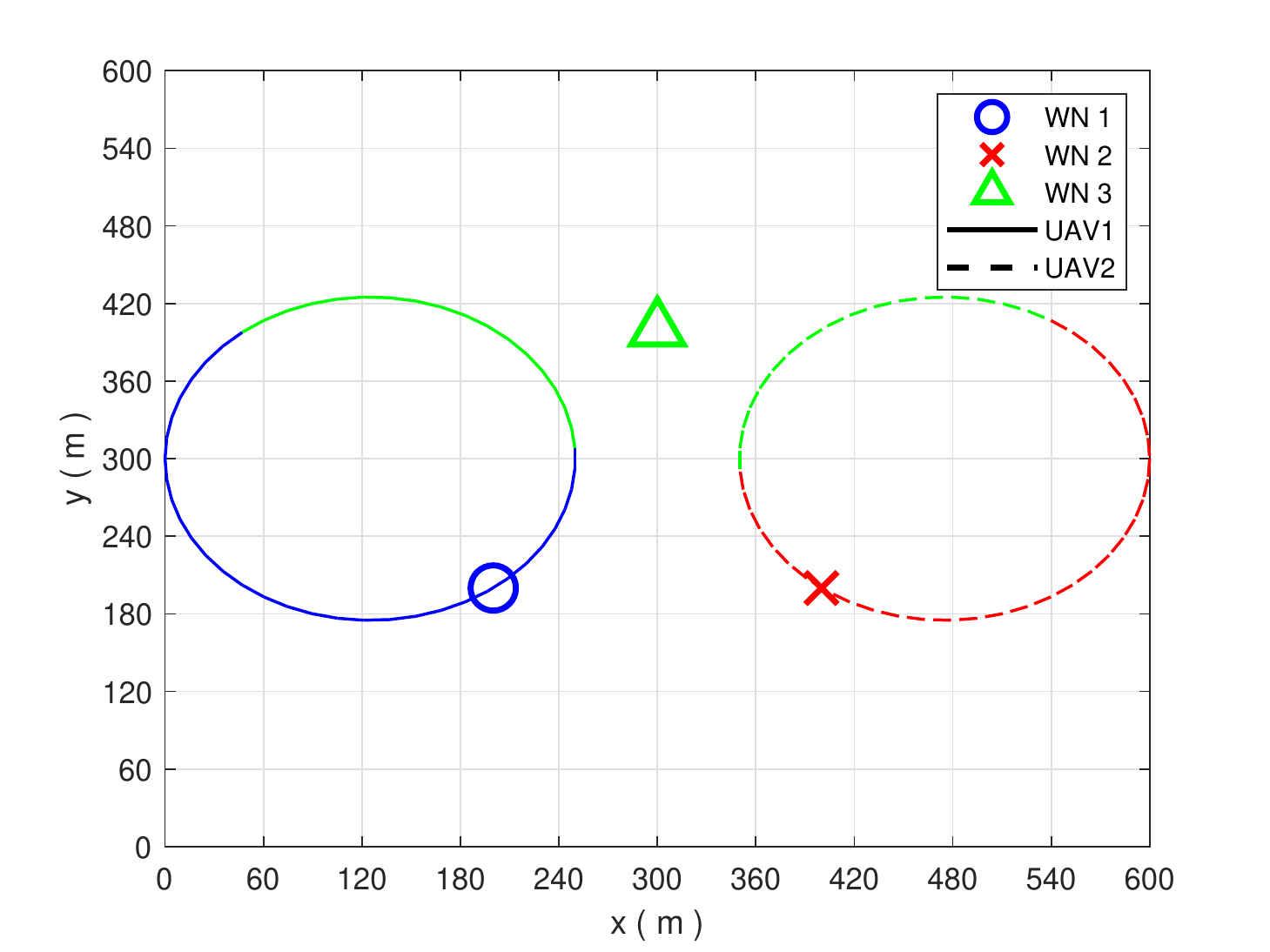}}
\subfigure[{Crossed circles (CC)}]{
\label{Fig.sub.2}
\includegraphics[width=0.32\linewidth]{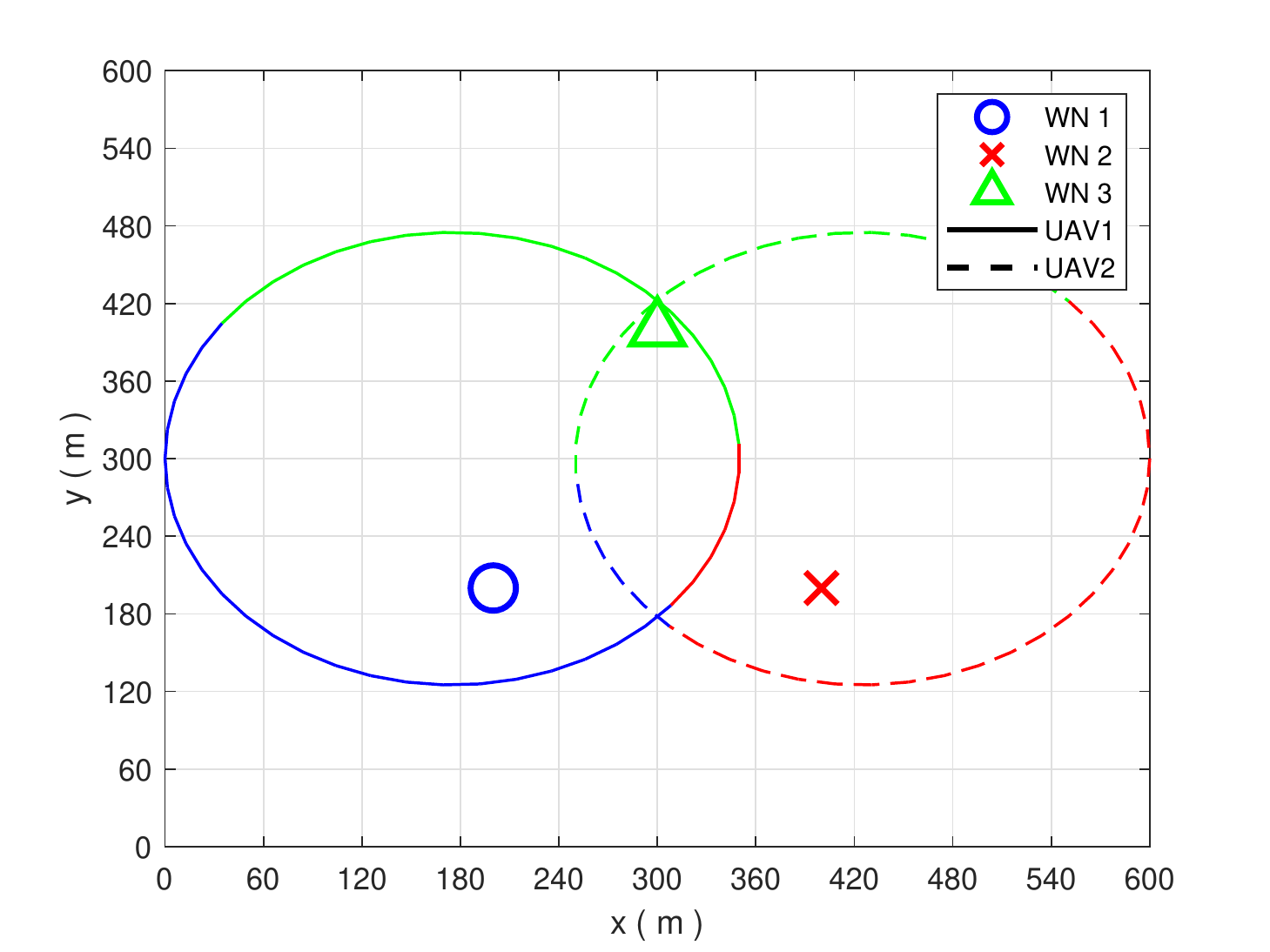}}
\subfigure[{Straight lines and circles (SLC)}]{
\label{Fig.sub.3}
\includegraphics[width=0.32\linewidth]{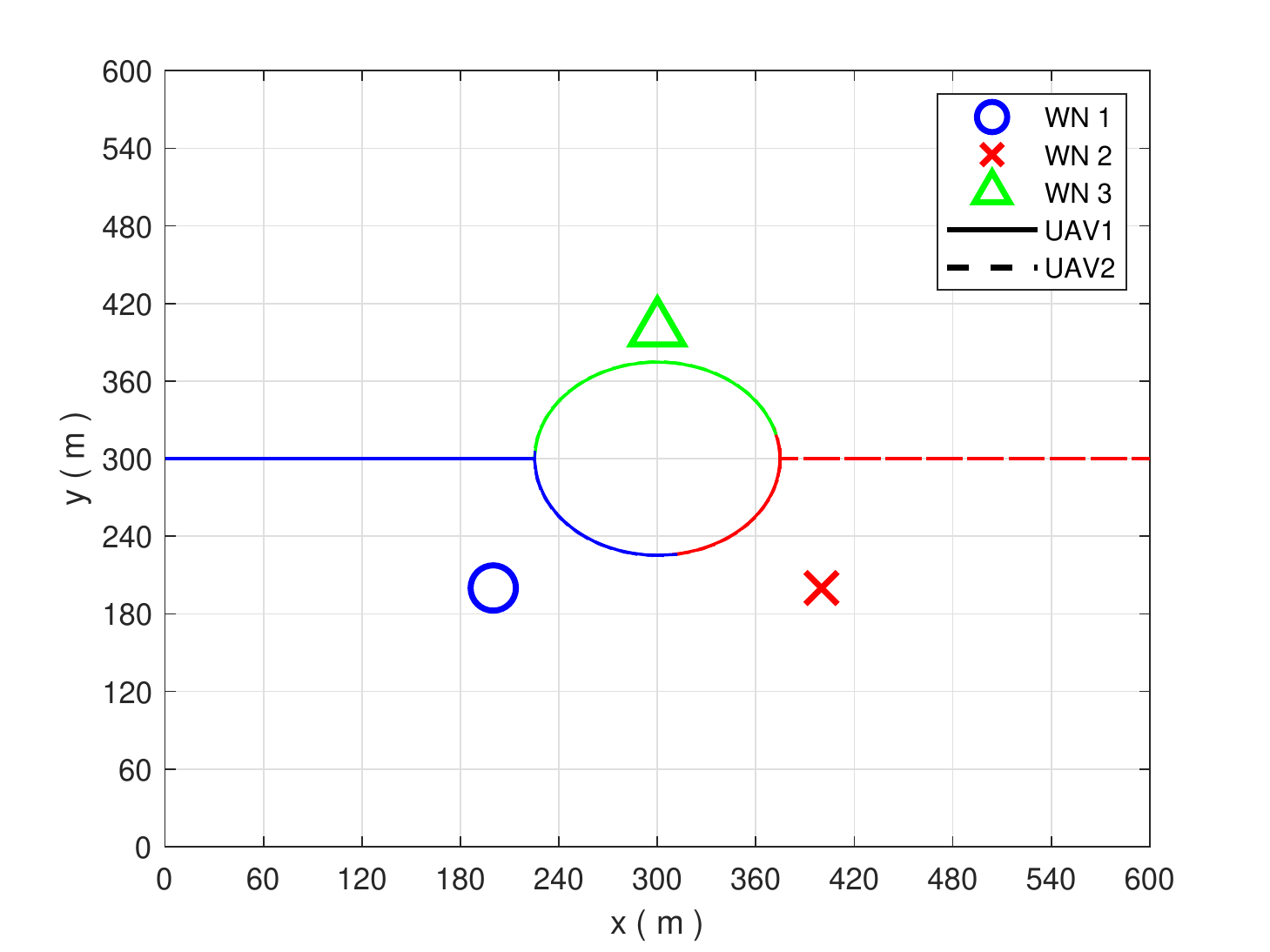}}
\caption{The UAV flight plans and the nearest node association for the three heuristic schemes.}
\label{fig_heuristic_fly}
\end{figure*}

\subsection{Performance of Offline Designs}
Fig. \ref{fig_coverge} shows the convergence of the proposed offline method for different numbers of WNs in the afternoon. The performance can be monotonically improved until convergence, which validates our analysis in Sec. \ref{Convergence_analysis}. Moreover, the required number of outer iterations for convergence increases with the number of WNs, and the performance is converged within four and eleven iterations for $K=2$ and $K=5$, respectively.
\begin{figure}[htbp]
\centering
\includegraphics[width=0.48\textwidth]{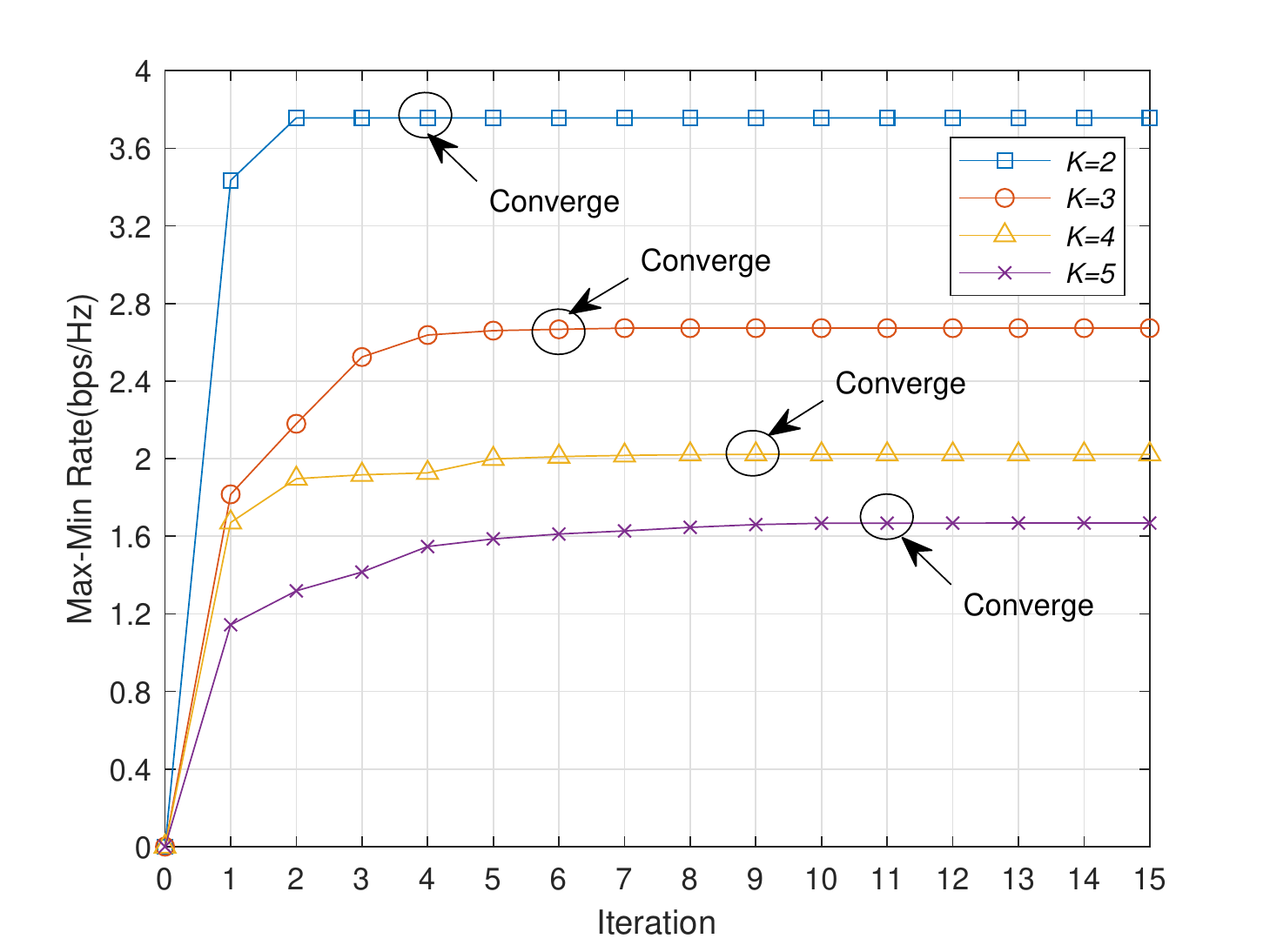}
\caption{Convergence of the proposed offline method for different numbers of WNs in the afternoon.}
\label{fig_coverge}
\end{figure}

\begin{figure}[htbp]
\centering
\includegraphics[width=0.48\textwidth]{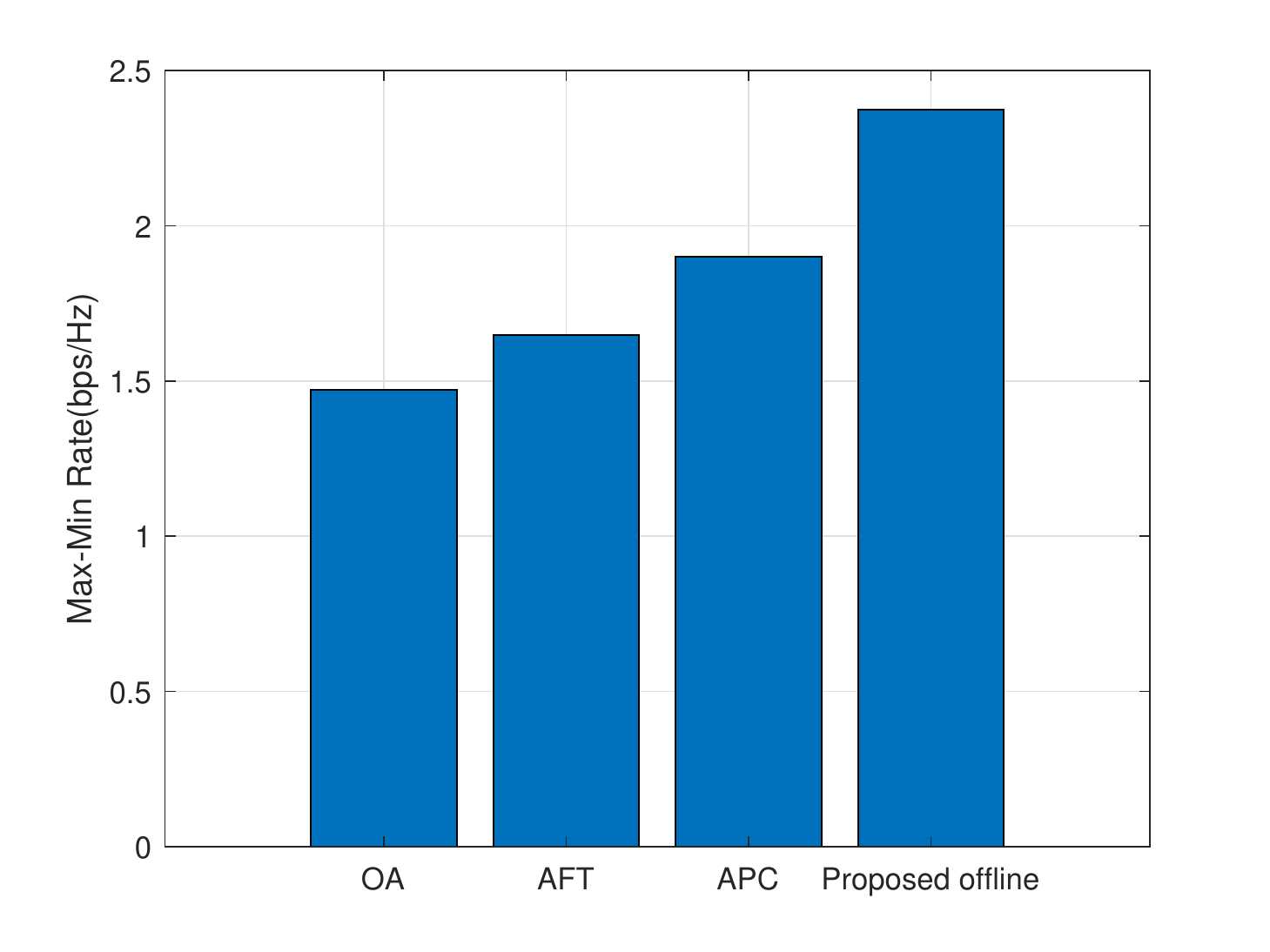}
\caption{Comparison of different combinations of design factors for offline optimization in the afternoon ($K=3$).}
\label{fig_apq}
\end{figure}

To evaluate the influence of different combinations of design factors on the performance, the following offline methods are compared in Fig. \ref{fig_apq} by only optimizing some of the design factors: (1) only communication association (OA), (2) only communication association and UAV flight trajectory (AFT), (3) only communication association and transmit power control (APC). In case that no optimization is applied, the exhaustive power control method and the UC flight trajectory in Fig. \ref{fig_heuristic_fly} are used. Here, the solar irradiance in the afternoon is adopted, and $K=3$. From this figure, the proposed offline method with fully joint optimization performs much better than the other methods that optimize only one or two design factors. Comparing AFT and APC shows that optimizing the transmit power can improve the performance more than optimizing the UAV trajectory. 

\begin{figure*}[htbp]
\centering 
\subfigure[{$K=3$ in the morning}]{
\label{fig.p1_fly}
\includegraphics[width=0.48\linewidth]{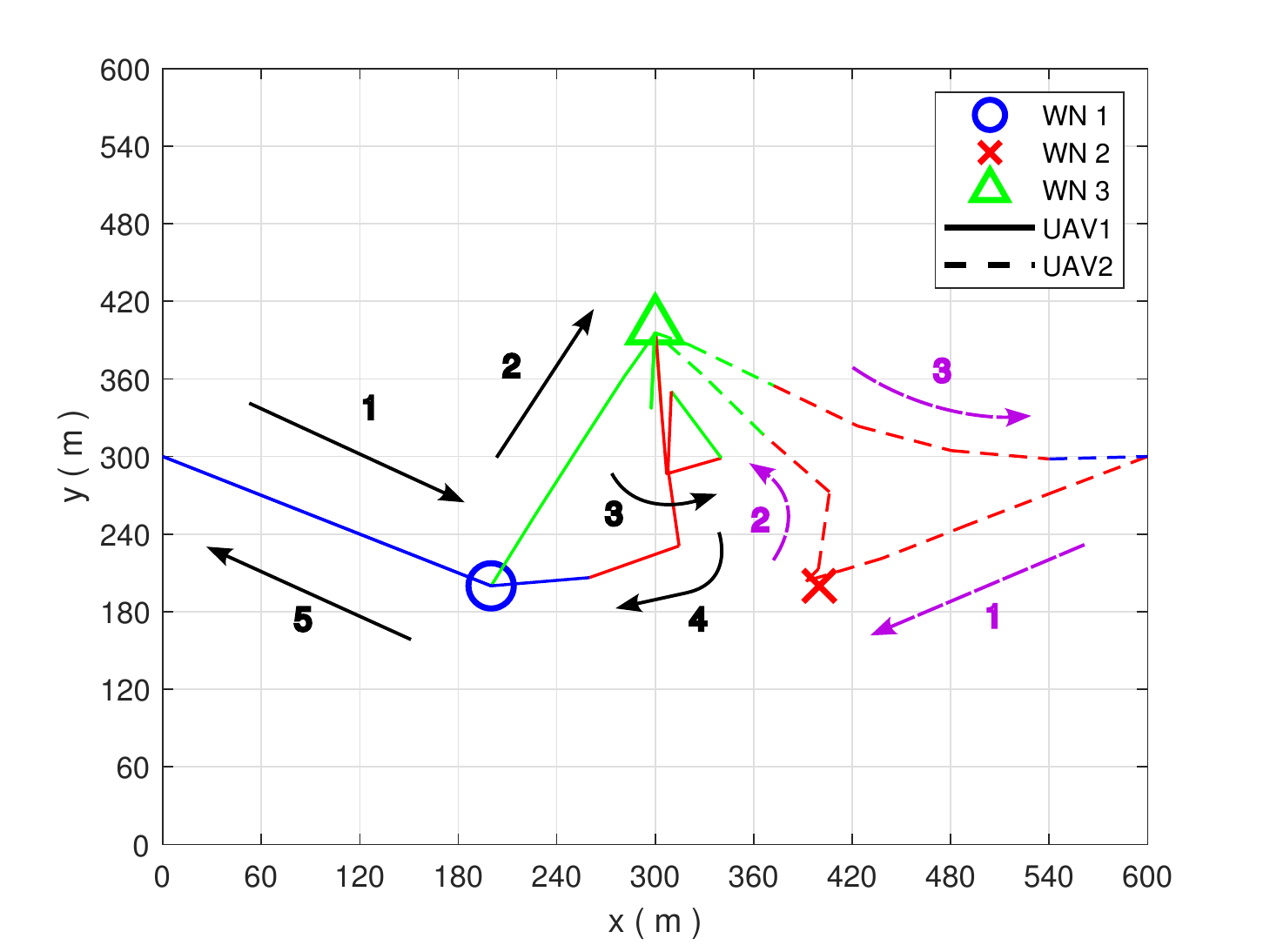}}
\subfigure[{$K=3$ in the afternoon}]{
\label{fig.p2_fly}
\includegraphics[width=0.48\linewidth]{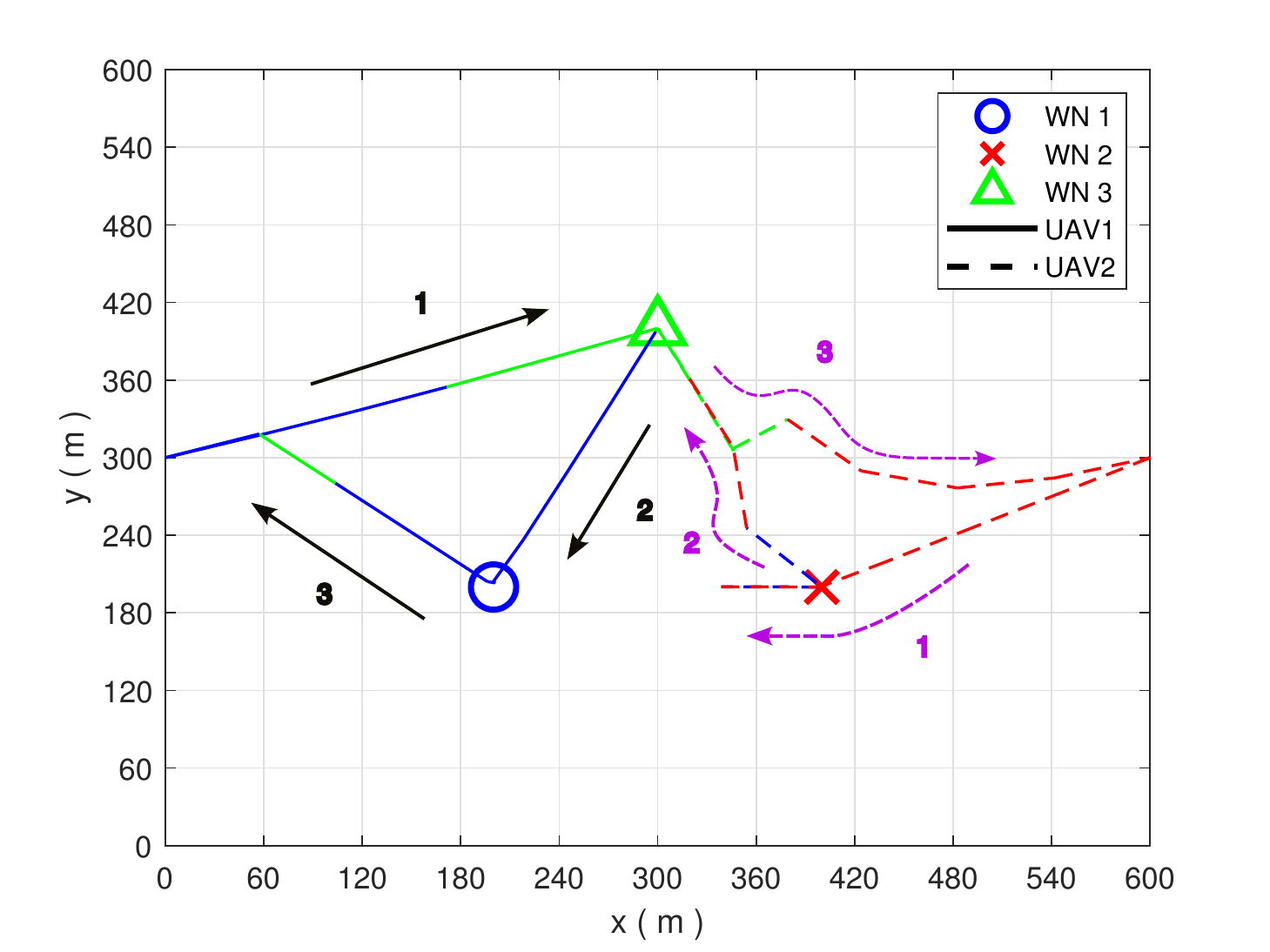}}
\caption{The optimal UAV flight trajectory and communication association of the proposed offline method in the morning and afternoon.}
\label{fig.p1p2_fly}
\end{figure*}

With the proposed offline method, Fig. \ref{fig.p1p2_fly} shows the optimal flight trajectory and communication association of the two UAVs during different EH periods when $K=3$. The colors of the UAV flight paths represent the user association results. We can find that in this deployment, WN 3 is the farthest node from the two UAVs, and both UAVs serve WN 3 in the morning and afternoon to improve the worst user rate. As can be seen, in the early stage of the mission, UAV 1 decides to serve WN 3 because of the abundant energy harvested in the afternoon. On the other hand, in the morning, UAV 1 serves its closest node WN 1 in the early stage while serving WN 3 in the middle stage, for which WN 3 collects enough battery energy and gets closer to UAV 1.

Fig. \ref{fig_245user_fly} shows the optimal UAV flight trajectory and communication association of the proposed offline method for different numbers of WNs. Obviously, the placement of WNs affects the optimal flight paths and communication association. For example, when $K=2$, the distances from the two WNs to the two UAVs are equal, and the two UAVs fly clockwise and counter-clockwise to avoid the interference problem. Similar observations can be found in the other cases. Taking another example of $K=4$, each UAV tends to serve two closer WNs.

\begin{figure*}[htbp]
\centering 
\subfigure[{$K=2$}]{
\label{Fig.sub.1}
\includegraphics[width=0.32\linewidth]{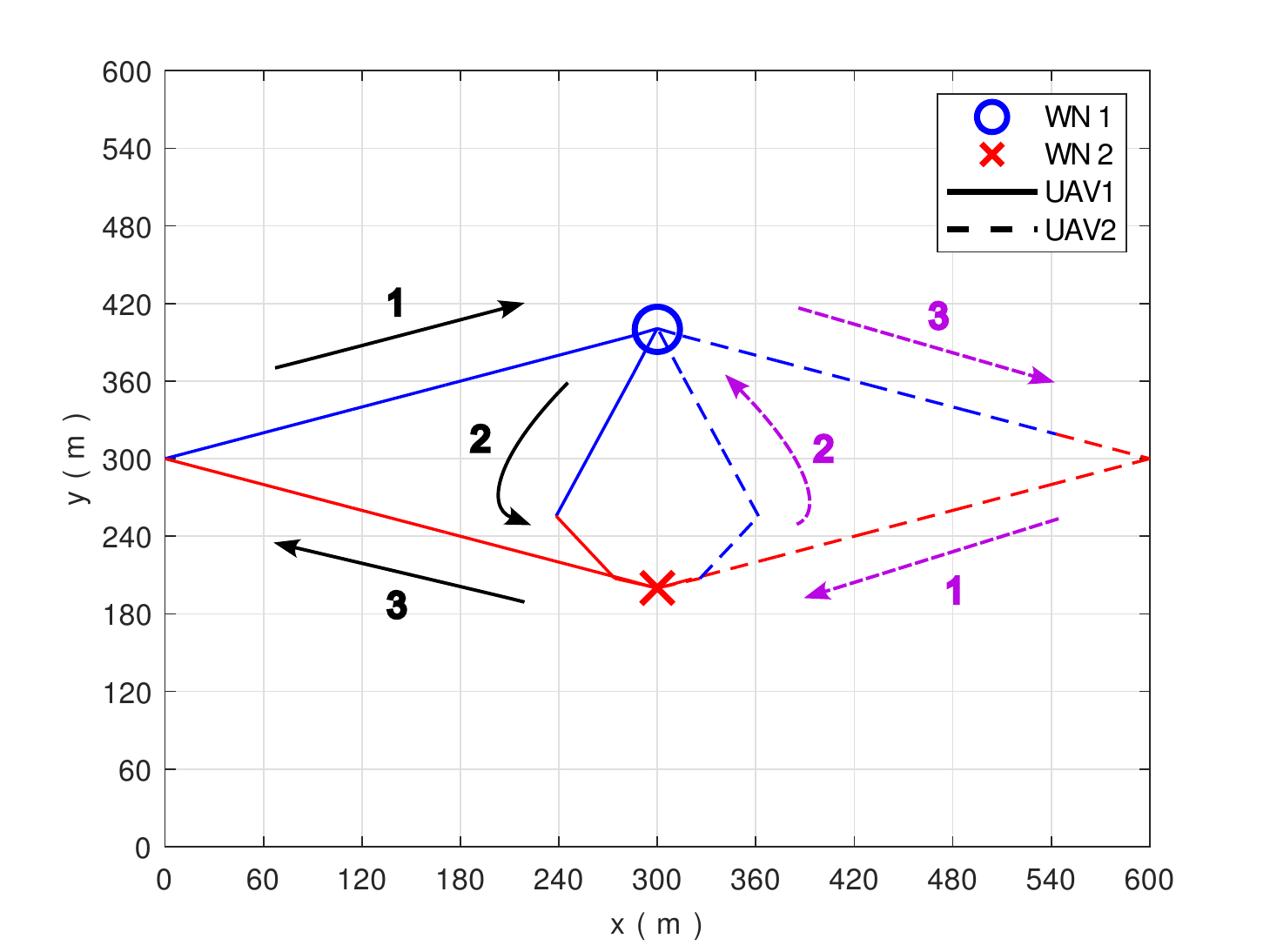}}
\subfigure[{$K=4$}]{
\label{Fig.sub.2}
\includegraphics[width=0.32\linewidth]{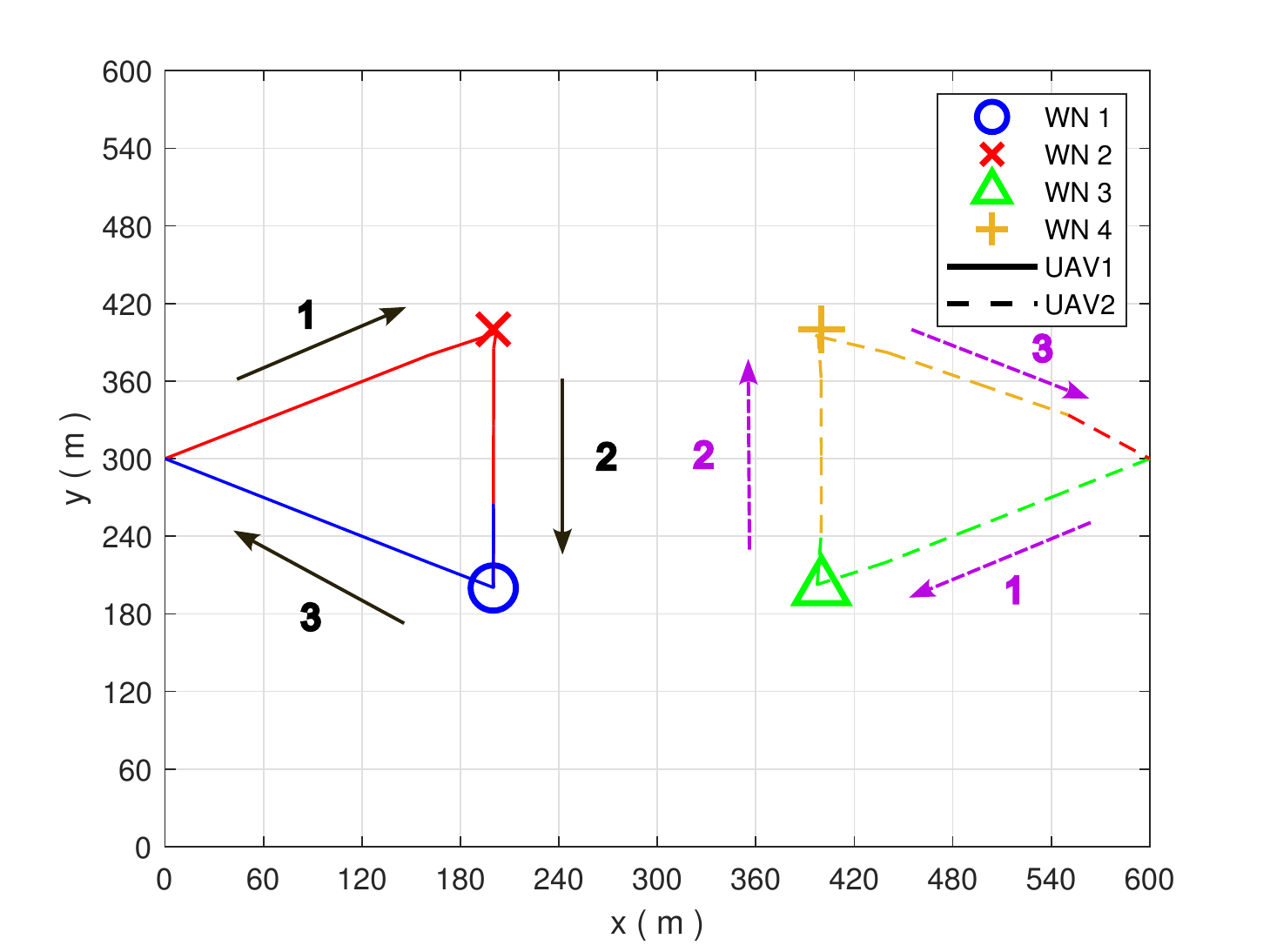}}
\subfigure[{$K=5$}]{
\label{Fig.sub.3}
\includegraphics[width=0.32\linewidth]{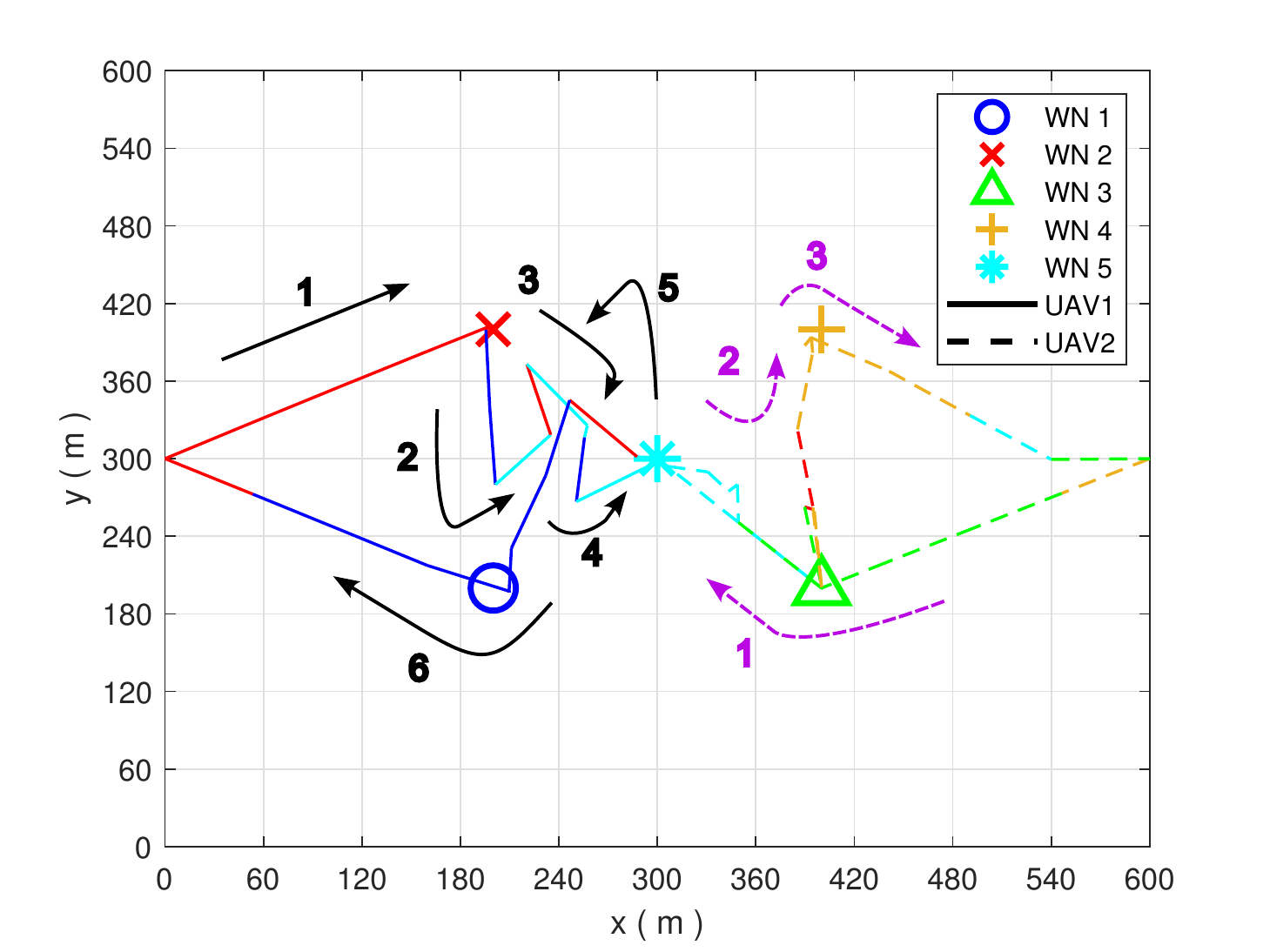}}
\caption{The optimal UAV flight trajectory and communication association of the proposed offline method in the afternoon for various numbers of WNs.}
\label{fig_245user_fly}
\end{figure*}

\begin{figure}[htbp]
\centering
\includegraphics[width=0.48\textwidth]{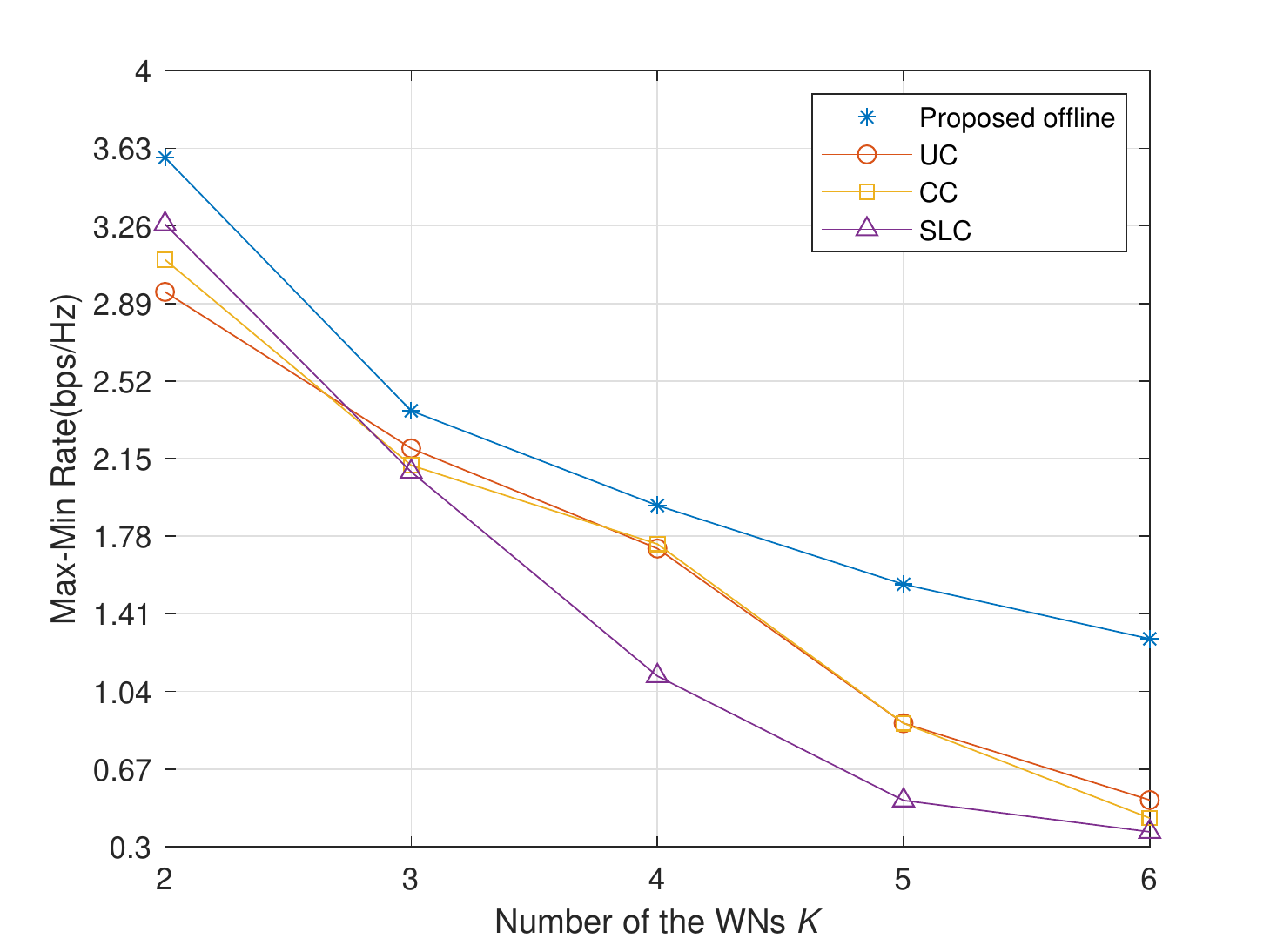}
\caption{Performance comparison of three heuristic schemes and the proposed offline method for various numbers of WNs in the afternoon.}
\label{fig_heuristic}
\end{figure}

Fig. \ref{fig_heuristic} compares three heuristic methods with the proposed offline method for different numbers of WNs\footnote{The positions of WNs for $K=6$ are $\textbf g_1=[200,200]^T$, $\textbf g_2=[200,400]^T$, $\textbf g_3=[400,200]^T$, $\textbf g_4=[400,400]^T$, $\textbf g_5=[200,300]^T$ and $\textbf g_6=[300,300]^T$.}. For the three heuristic methods, the nearest user association and exhaustive power control methods are applied with the three different flight plans in Fig. \ref{fig_heuristic_fly}. As can be seen, the proposed offline method is superior to the three compared methods, and the performance gap expands with the increase of $K$.

Fig. \ref{fig_battery_compare} shows the performance of the proposed offline method in the afternoon under different battery capacities. A larger battery capacity can improve the worst user rate performance. For a larger number of WNs, this improvement becomes more significant as each WN has less time to access the channel and is allowed to store more energy when waiting for data transmission. 

\begin{figure}[htbp]
\centering
\includegraphics[width=0.48\textwidth]{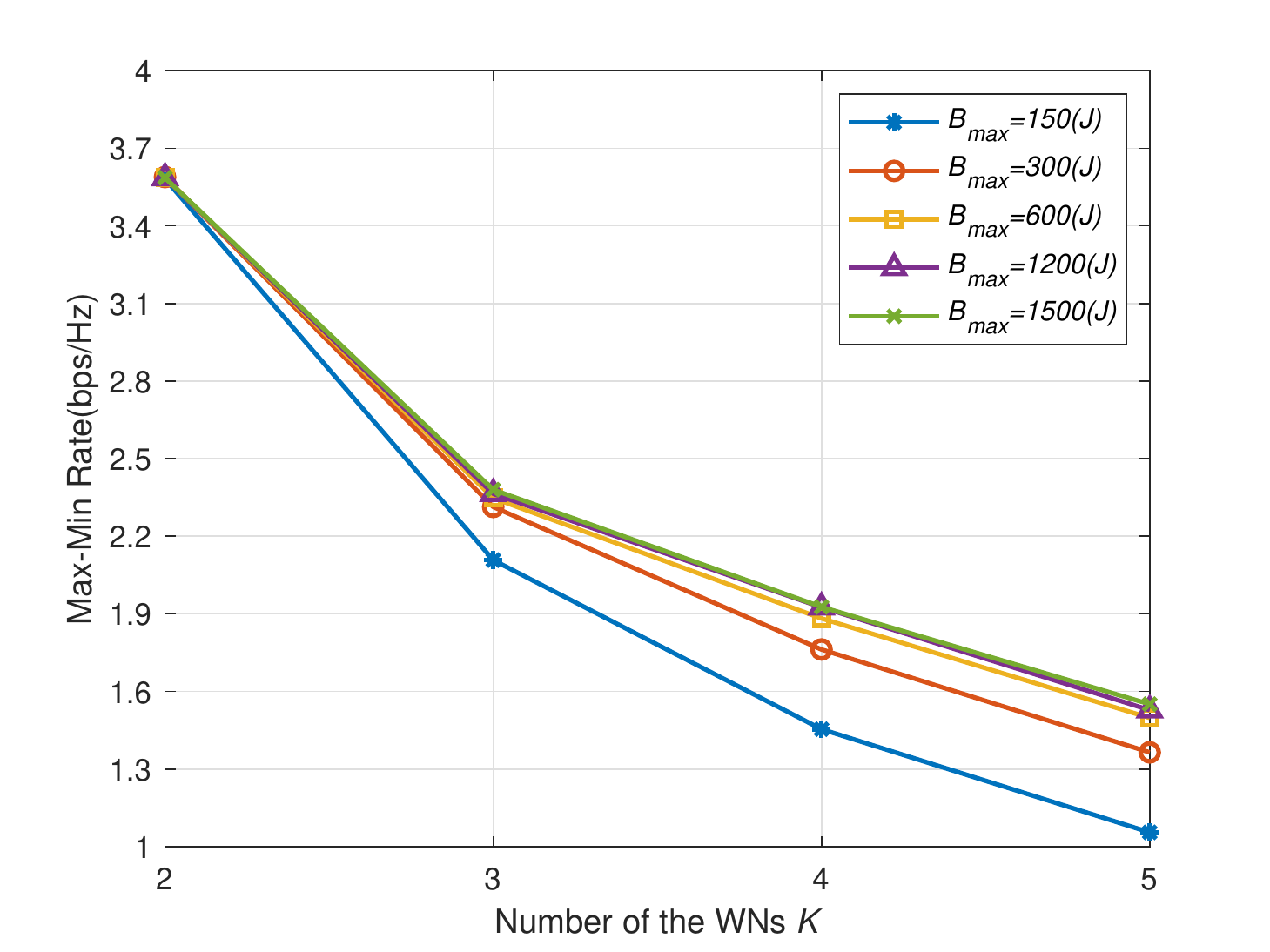}
\caption{Performance of the proposed offline method in the afternoon under different battery capacities $B_{\max}$.}
\label{fig_battery_compare}
\end{figure}
\subsection{Performance of Online Designs}
Fig. \ref{fig_reward} shows the performance of the CARL method with the three different reward designs under various noise power values. It is clearly seen that the ISR method can achieve the best performance, in terms of the worst user rate and the probability of successful mission, i.e., all the UAVs successfully fly back to the initial point, among the thee reward designs. This implicitly suggests that it is more appropriate to adopt the instantaneous average sum rate of all WNs, rather than the worst accumulated sum rate, as a reward in the online learning for balancing user rates. Therefore, the ISR reward is used for the CARL and the conventional RL (i.e., the reward in (\ref{Convential_Reward}) is set as $f\left( R_{k,n}\right)=\frac{1}{K}\sum_{k=1}^{K}\nolimits{R_{k,n}}$) in the following simulations.

\begin{figure}[htbp]
\centering
\includegraphics[width=0.48\textwidth]{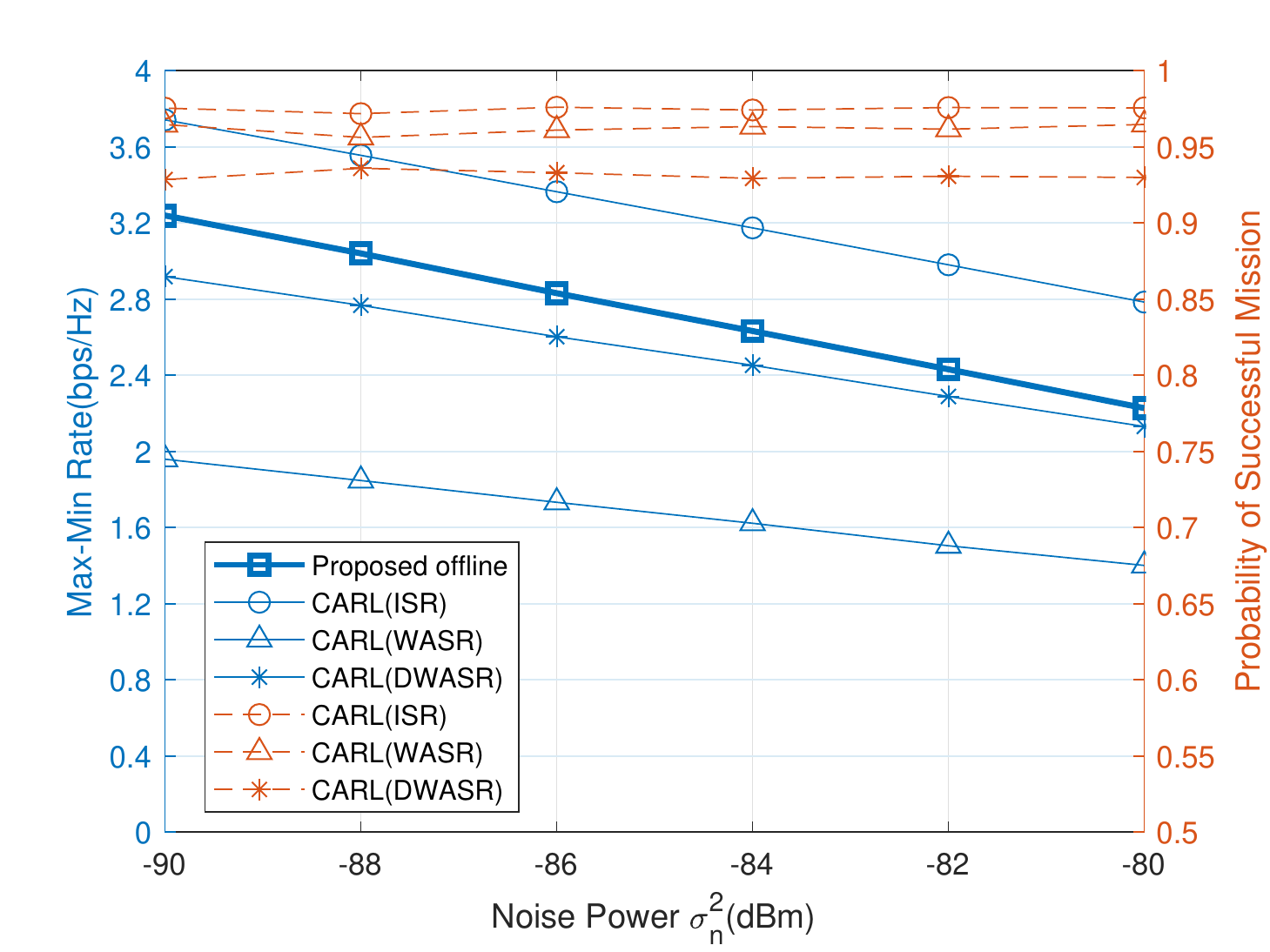}
\caption{Performance of the CARL method with the three different reward designs in the morning ($K=3$).}
\label{fig_reward}
\end{figure}

In Fig. \ref{fig_C_CARL_CVX}, we compare the max-min user rates of the proposed offline and CARL methods for $K=3$ under various noise power values in the afternoon. The performance of the CARL method that utilizes a non-optimal UC trajectory is also simulated for comparison. It shows that the CARL can further enhance the performance of the proposed offline method and outperform the CARL method with a non-optimal UC trajectory. Regarding the probability of successful mission, the CARL with the optimal offline trajectory can attain a probability of $0.935$, which is slightly better than the CARL with the UC trajectory. 
\begin{figure}[htbp]
\centering
\includegraphics[width=0.48\textwidth]{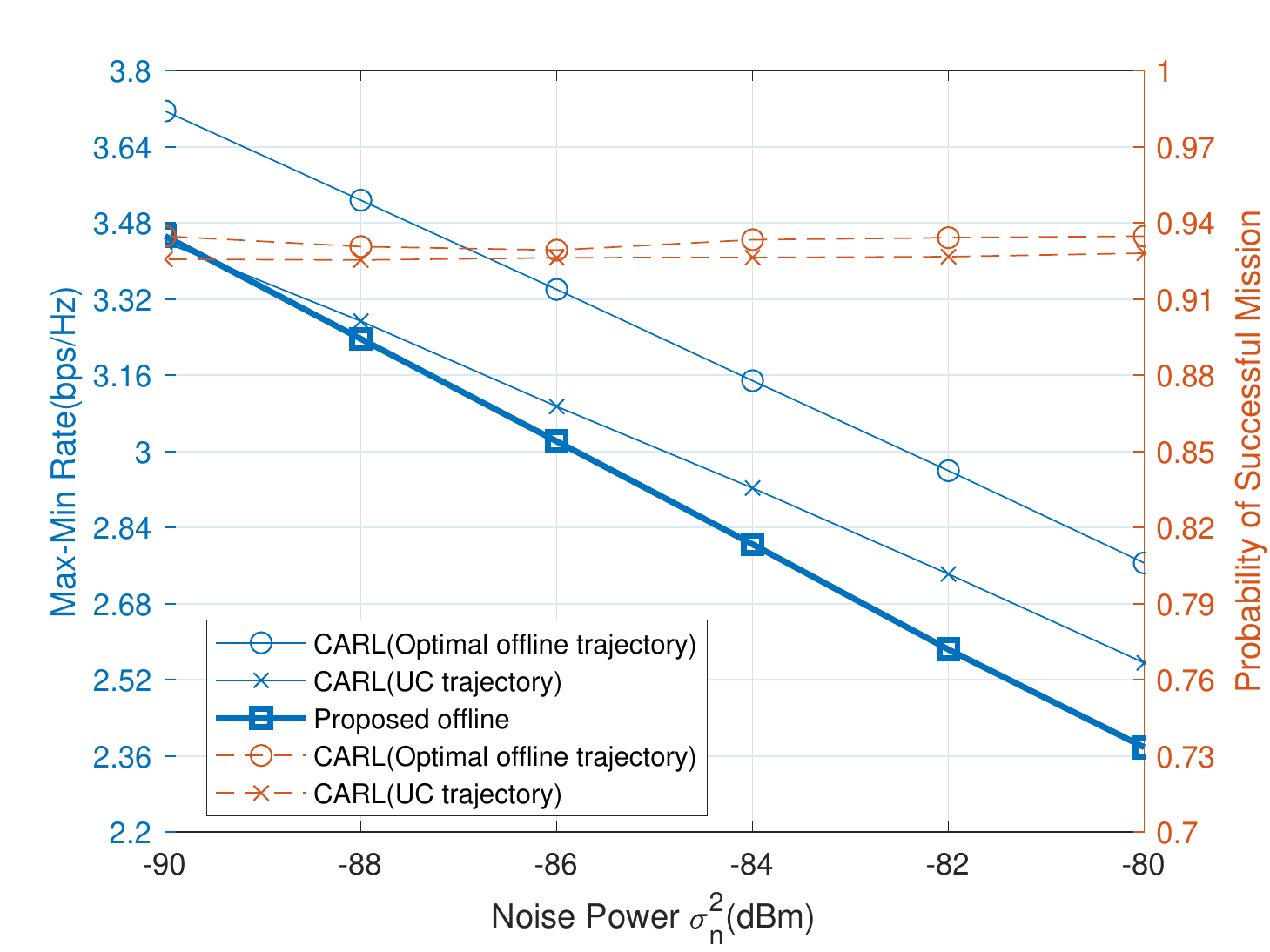}
\caption{Performance of the CARL, conventional RL, and offline methods under various noise power values in the afternoon ($K=3$).}
\label{fig_C_CARL_CVX}
\end{figure}

Fig. \ref{fig_DF} shows the performance of the CARL method for different widths of the flight corridor $D_F$ in the morning. We can see that the flight corridor width performs best at $D_F=60$ m when $N_e$ is small, while the flight corridor width that achieves the best performance becomes $D_F=90$ m when $N_e$ increases. This is because the number of exploration states is fewer when $D_F$ is small, and the Q-table converges quickly even if $N_e$ is small. Moreover, as $N_e$ increases, the performance improvement for $D_F=60$ m saturates, whereas the other larger flight corridor widths can still offer continuous improvement. Hence, a larger flight corridor width can potentially improve the performance at the expense of more training epochs $N_e$.
\begin{figure}[htbp]
\centering
\includegraphics[width=0.48\textwidth]{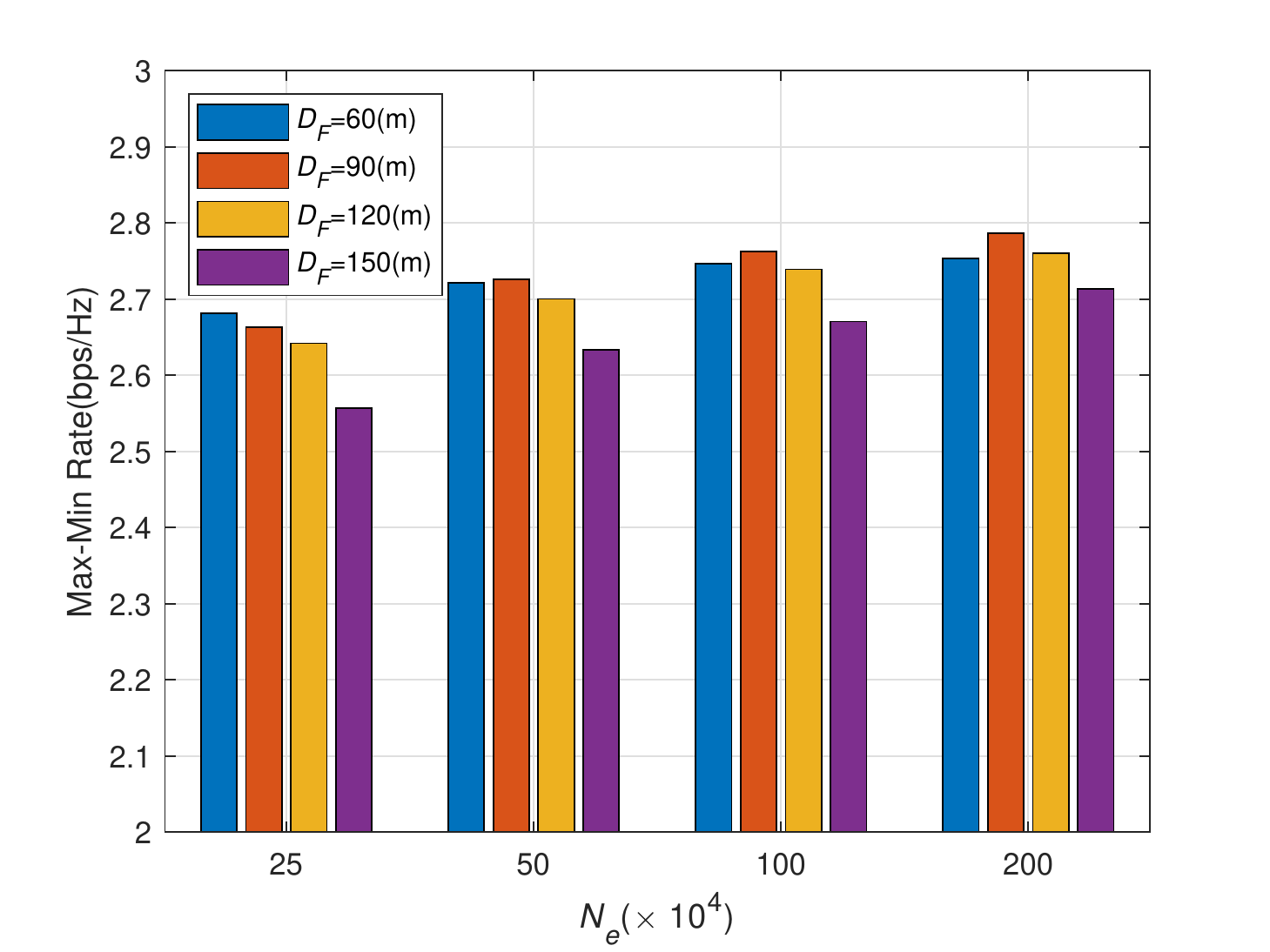}
\caption{Performance of the CARL method under different widths of the flight corridor $D_F$ in the morning ($K=3$).}
\label{fig_DF}
\end{figure}

In Fig. \ref{fig_CRLvsCVX}, we compare the performance of the CARL, conventional RL and proposed offline methods under different training epochs $N_e$. We can see that the max-min rate of the CARL method increases with $N_e$, whereas there is no significant performance improvement for the conventional RL method. This is because the performance of the conventional RL saturates quickly, while the performance of the CARL can be efficiently improved under the guidance of optimal offline trajectories. In addition, the probability of successful mission of the conventional RL is higher than that of the CARL when $N_e$ is small, but the probability of the CARL eventually surpasses that of the conventional RL as $N_e$ exceeds $50\times 10^4$. This is because the reward design in the conventional RL forces the UAVs to fly back to the initial point with less exploration, while the CRAL can learn to fly back to the initial point by following the flight corridor if the training number is sufficiently large. Also, both online methods outperform the offline method due to the dynamic responses to the changes of channel and battery conditions. 

\begin{figure}[htbp]
\centering
\includegraphics[width=0.48\textwidth]{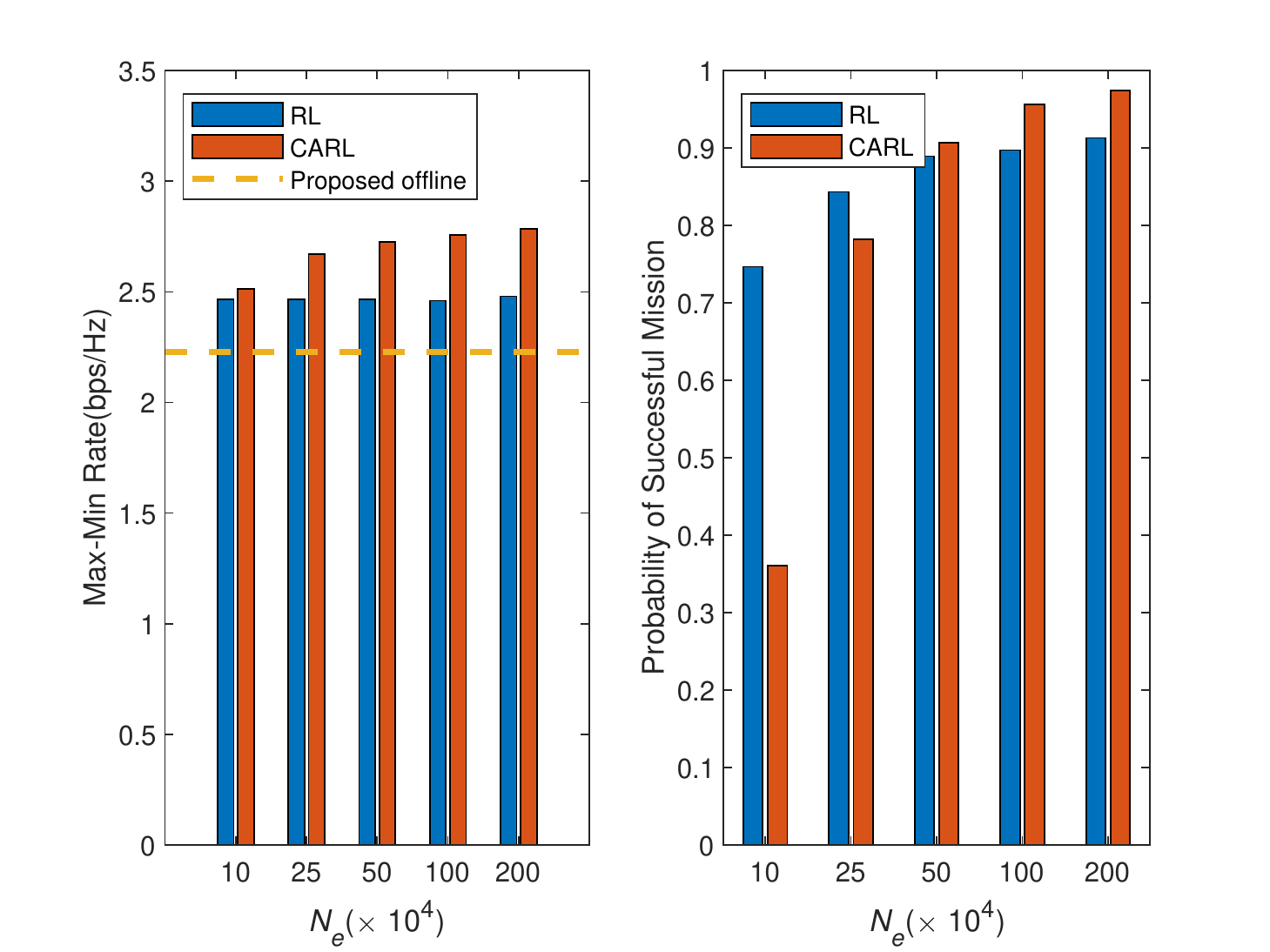}
\caption{Performance of the CARL, conventional RL, and offline methods under various numbers of training periods $N_e$ in the morning ($K=3$).}
\label{fig_CRLvsCVX}
\end{figure}

\section{Conclusion}\label{Conclusion}
In this paper, we investigate the joint design problem of multi-UAV flight trajectories, communication association between UAVs and ground nodes, and uplink power control for a multi-UAV network with multiple uplink EH nodes. Under a mixed channel model with LOS/NLOS and small-scale fading, a series of SCAs are performed to deal with this non-convex joint design problem, and an offline method that does not require causal knowledge of instantaneous CSI and ESI is proposed. An online CARL method is proposed to further improve the performance by exploiting preset UAV flight corridors according to the optimal offline UAV trajectories. Through efficient flight guidance and reducing the number of state spaces to be explored, the CARL can improve the learning convergence and performance compared to the conventional RL without the assistance from the offline UAV trajectories.

\appendices
\section{Proof of Lemma \ref{log_lower_bound__convex}}\label{Lemma_1}
We first consider a function $\phi(x,y)=\ln(\frac{C_1C_2}{xy}+\frac{C_3}{x})$ with two variables $x>0$ and $y>0$, where $C_1, C_2, C_3>0$. The Hessian matrix of $\phi(x,y)$ is
\setcounter{equation}{0}
\renewcommand\theequation{A.\arabic{equation}}
\begin{align}
  \nabla^2\phi(x,y)=
  \begin{bmatrix} \frac{\partial^2\phi(x,y)}{\partial x^2}&\frac{\partial^2\phi(x,y)}{\partial x\partial y} \\
                  \frac{\partial^2\phi(x,y)}{\partial x\partial y}&\frac{\partial^2\phi(x,y)}{\partial y^2} \\
  \end{bmatrix},
\end{align}
where each term can be derived as
\begin{align}
    &\frac{\partial^2\phi(x,y)}{\partial x^2}=\frac{1}{x^2}; \\
    &\frac{\partial^2\phi(x,y)}{\partial y^2}=\frac{C_1C_2(C_1C_2+2C_3y)}{(C_1C_2y+C_3y^2)^2};\\
    &\frac{\partial^2\phi(x,y)}{\partial x\partial y}=\frac{\partial^2\phi(x,y)}{\partial y\partial x}=0.
\end{align}
Since $\mathsf{det}(\nabla^2\phi(x,y))\geq0$ and $\mathsf{tr}(\nabla^2\phi(x,y))\geq0$, the Hessian matrix $\nabla^2\phi(x,y)$ is positive semi-definite. Thus, the function $\phi(x,y)$ is convex in $(x,y)$.

From (\ref{average_H}), (\ref{eq35add}) and (\ref{eq36add}), we rewrite $\check R_{1m}[n]$ as a log-sum-exp form:
\begin{align}
&\check R_{1m}[n]= {\log}_2\left(\sum_{i=1}^{K} e^{ {\tilde{\phi}}_{i}(X_{m,i}[n] ,Y_{m,i}[n] ) } +\sigma_n^2\right) ,
\end{align}
where $ {\tilde{\phi}}_{i}(X_{m,i}[n] ,Y_{m,i}[n] ) = \ln \big(\frac{P_i[n]C_1 C_2}{X_{m,i}[n] Y_{m,i}[n]}+\frac{P_i[n]C_3}{X_{m,i}[n]} \big)$. By using the fact that a function ${\log}_2\left(\sum_{i=1}^{K} e^{g_i(x)}  \right)$ is convex whenever $g_i(x)$ is convex for all $i$ {\cite{S. Boyd04}}, it can be shown that ${\log}_2\left(\sum_{i=1}^{K} \exp {\big\{  \tilde{\phi}_{i}(X_{m,i}[n] ,Y_{m,i}[n]) \big\} } \right)$ is convex in $X_{m,i}[n]$ and $Y_{m,i}[n]$ for all $i$, since $\tilde{\phi}_{i}(X_{m,i}[n] ,Y_{m,i}[n] ) $ is convex in $(X_{m,i}[n] ,Y_{m,i}[n])$. As a result, $\check R_{1m}[n]$ is convex in $X_{m,i}[n]$ and $Y_{m,i}[n]$ for all $i$.

\section{Proof of Theorem \ref{first_order_R_lowerbound}}\label{first_order_R_lowerbound_proof}
We first consider a convex function $\varphi({\textbf x}, {\textbf y})$ with two variables $\textbf x$ and $\textbf y$, and a lower bound can be obtained by using the first-order Taylor expansion at $\left({\textbf x},{\textbf y}\right)=\left({\textbf x}_0, {\textbf y}_0\right)$:
\setcounter{equation}{0}
\renewcommand\theequation{B.\arabic{equation}}
\begin{align}
    &\varphi({\textbf x},{\textbf y}) \geq\varphi({\textbf x}_0,{\textbf y}_0)+ \nabla_{\textbf x} \varphi({\textbf x},{\textbf y})   \Big|_{({\textbf x},{\textbf y})=({\textbf x}_0,{\textbf y}_0)} ({\textbf x}- {\textbf x}_0) \nonumber\\
    &\;\;\;\;\;\;\;\;\;\;\;\;\;\;\;\;\;+  \nabla_{\textbf y} \varphi({\textbf x},{\textbf y}) \Big|_{({\textbf x},{\textbf y})=({\textbf x}_0,{\textbf y}_0)} ({\textbf y}-{\textbf y}_0). \label{B1}
\end{align}
For a given $\textbf q_m [n]=\textbf q_m^r [n]$ and from (\ref{eq35add}) and (\ref{eq36add}), we can calculate $X^r_{m,i}[n]$ and $Y^r_{m,i}[n]$ as in (\ref{eq41_1}) and (\ref{eq41_2}). From Lemma \ref{log_lower_bound__convex}, the function $\check R_{1m}[n]={\log}_2\big(\sum_{i=1}^{K}P_i[n]\big(\frac{C_1C_2}{X_{m,i}[n]Y_{m,i}[n]}+\frac{C_3}{X_{m,i}[n]}\big)+\sigma_n^2\big)$ is a convex function with respect to the variables $X_{m,i}[n]$ and $Y_{m,i}[n]$ for all $i$. By using (\ref{B1}), we can derive the first-order Taylor expansion of $\check R_{1m}[n]$ at $\big(X_{m,i}[n], Y_{m,i}[n]\big)= \big( $ $X_{m,i}^{r}[n], Y_{m,i}^{r}[n]\big)$, given by $\check R_{1m}^{lb}[n]$ in (\ref{eq41}). Hence, we can get $\check R_{1m}[n] \geq \check R_{1m}^{lb}[n]$.

\section{Proof of Theorem \ref{theorem_Y_upperbound}} \label{upper_bound_Y_proof}
Define a variable $U_{m,i}[n]=\|{\textbf q}_m[n]-{\textbf g}_i\|^2_2$. By applying the change of variables into (\ref{eq36add}), we can get
\setcounter{equation}{0}
\renewcommand\theequation{C.\arabic{equation}}
\begin{align}
    Y_{m,i}[n]=1+A e^{-B\left(\frac{180}{\pi}\tan^{-1}\left(\frac{H}{\sqrt{U_{m,i}[n]}}\right)-A\right)}.\label{C1}
\end{align}
Since $\tan^{-1}\left(\frac{1}{\sqrt{x}}\right)$ is a convex function when $x > 0$, the first-order Taylor expansion of $\tan^{-1}\left(\frac{H}{\sqrt{U_{m,i}[n]}}\right)$ at the point $U_{m,i}[n]=\|\textbf q^r_m[n]-\textbf g_i\|^2_2\triangleq U^r_{m,i}[n]$ can be derived as
\begin{align}
    &\tan^{-1}\left(\frac{H}{\sqrt{U_{m,i}[n]}}\right) \geq\tan^{-1}\left(\frac{H}{\sqrt{U^r_{m,i}[n]}}\right)\nonumber\\
    &\;\;-\frac{H}{2\sqrt{U_{m,i}^r[n]}\left(H^2+U^r_{m,i}[n]\right)} \times\left(U_{m,i}[n]-U^r_{m,i}[n]\right).\label{C2}
\end{align}
By simply applying (\ref{C2}) in (\ref{C1}), an upper bound $Y_{m,i}^{up}[n]$ can be obtained for $Y_{m,i}[n]$, as shown in (\ref{Y_upperbound}). In addition, it reveals in (\ref{Y_upperbound}) that $Y_{m,i}^{up}[n]$ is convex in ${\textbf q}_m[n]$, since the term $\left\|\textbf q_m[n]-\textbf g_i\right\|_2^2$ is a square norm function of $\textbf q_m[n]$.

\end{document}